\newtheorem{theorem}{Theorem}
\newtheorem{lemma}{Lemma}
\newtheorem{claim}{Claim}
\newtheorem{corollary}{Corollary}
\newtheorem{definition}{Definition}
\newtheorem{hypothesis}[theorem]{Hypothesis}
 \gdef\xxxmark{%
   \expandafter\ifx\csname @mpargs\endcsname\relax 
     \expandafter\ifx\csname @captype\endcsname\relax 
       \marginpar{xxx}
     \else
       xxx 
     \fi
   \else
     xxx 
   \fi}
 \gdef\xxx{\@ifnextchar[\xxx@lab\xxx@nolab}
 \long\gdef\xxx@lab[#1]#2{{\bf [\xxxmark #2 ---{\sc #1}]}}
 \long\gdef\xxx@nolab#1{{\bf [\xxxmark #1]}}
\newcommand{\LCS}{\text{LCS}}
\newcommand{\poly}{\mbox{poly}}
\newcommand{\WLCS}{\mbox{WLCS}}
\def \eps {\varepsilon}
\def \start {\texttt{start}}
\def \accept {\texttt{acc}}
\newcommand{\NTIME}{\mbox{$\textup{\textsf{NTIME}}$}}
\newcommand{\NEXP}{\mbox{$\textup{\textsf{NEXP}}$}}
\newenvironment{reminder}[1]{\medskip
\noindent {\bf Reminder of #1  }\em}{\medskip}
\def\type{\text{type}}
\def\cost{\texttt{cost}}
\def\0{\boldsymbol{0}}
\def\1{\boldsymbol{1}}
\newcommand{\GA}{\mbox{$\textup{\textsf{GA}}$}}
\newcommand{\RG}{\mbox{$\textup{\textsf{RG}}$}}
\newcommand{\IG}{\mbox{$\textup{\textsf{IG}}$}}
\newcommand{\OR}{\mbox{$\textup{\textsf{OR}}$}}
\newcommand{\PG}{\mbox{$\textup{\textsf{PG}}$}}
\newcommand{\NVG}{\mbox{$\textup{\textsf{NVG}}$}}
\newcommand{\RGb}{\mbox{$\overline{\textup{\textsf{RG}}}$}}
\title{Simulating Branching Programs with Edit Distance and Friends\\
\Large Or: A Polylog Shaved is a Lower Bound Made\thanks{Part of the work was performed while visiting the Simons Institute for the Theory of Computing, Berkeley, CA. A.A. and V.V.W. were supported by NSF Grants CCF-1417238 and CCF-1514339, and BSF Grant BSF:2012338. 
T.D.H. was supported by the Carlsberg Foundation, grant no. CF14-0617. 
R.W. was supported by an Alfred P. Sloan Fellowship and NSF CCF-1212372. Any opinions, findings, and conclusions or recommendations
expressed in this material are those of the author(s) and do not necessarily reflect the views of the NSF. } 
}
\author{
Amir Abboud \\ Stanford University \\ \texttt{abboud@cs.stanford.edu}
 \and Thomas Dueholm Hansen \\ Aarhus University \\ \texttt{tdh@cs.au.dk}\smallskip
\and Virginia {Vassilevska Williams} \\ Stanford University \\ \texttt{virgi@cs.stanford.edu}
\and Ryan Williams \\ Stanford University \\ \texttt{rrwilliams@gmail.com}
}
\date{}
\begin{document}
\maketitle

\begin{abstract}

A recent and active line of work achieves tight lower bounds for fundamental problems under the Strong Exponential Time Hypothesis (SETH).
A celebrated result of Backurs and Indyk (STOC'15) proves that the Edit Distance of two sequences of length $n$ cannot be computed in strongly subquadratic $O(n^{2-\eps})$ time, for some $\eps>0$, under SETH.
Follow-up works proved that even simpler looking problems like the Longest Common Subsequence or Edit Distance of \emph{binary} sequences suffer from the same SETH lower bound.

SETH is a very strong assumption, asserting that even \emph{linear} size CNF formulas cannot be analyzed for satisfiability with an exponential speedup over exhaustive search. We consider much safer assumptions, e.g. that SAT on more expressive representations, like subexponential-size NC circuits, cannot be solved in $(2-\delta)^n$ time. Intuitively, this assumption is much more plausible: NC circuits can implement linear algebra and complex cryptographic primitives, while CNFs cannot even approximately compute an XOR of bits.

Our main result is a surprising reduction from SAT on Branching Programs to fundamental problems in P like Edit Distance, LCS, and many others.
A consequence of our reduction is that truly subquadratic algorithms will have consequences that we consider to be far more remarkable than merely faster CNF SAT algorithms. For example:
SAT on arbitrary nondeterministic branching programs of size $2^{o(\sqrt{n})}$ or on arbitrary $o(n)$-depth bounded fan-in circuits (and therefore also ${\sf NC}$-Circuit-SAT) can be solved in $(2-\delta)^n$ time.
To illustrate the significance of these consequences we point at several new circuit lower bounds, that are very far from the state of the art, which immediately follow.

A very interesting feature of our work is that we can prove major consequences even from \emph{mildly} subquadratic algorithms for Edit Distance or LCS.
For example, we show that if we can shave an arbitrarily large polylog factor from $n^2$ for Edit Distance then $\NEXP$ does not have non-uniform ${\sf NC}^1$ circuits.
A more fine-grained examination shows that even shaving a $\log^c{n}$ factor, for a specific constant $c \approx 10^3$, already implies new circuit lower bounds.

\end{abstract}

\thispagestyle{empty}
\clearpage
\setcounter{page}{1}
\section{Introduction}
A central goal of complexity theory is to understand and prove lower bounds for the time complexity of fundamental problems.
One of the most important computational problems is Edit-Distance, the problem of computing the minimum number of edit operations (insertions, deletions, and substitutions) required to transform one sequence into another.
A classical dynamic programming algorithm that is taught in basic algorithms courses solves Edit-Distance on sequences of length $n$ in $O(n^2)$ time~\cite{CLRS}.
This quadratic runtime is prohibitive in many applications, like computational biology and genomics where $n$ is typically a few billions. 
The significance of a faster, e.g. linear time, algorithm cannot be overstated. 
Despite decades of attempts, no upper bound below $O(n^2/\log^2n)$ is known for Edit-Distance~\cite{MP80}.
All the above applies to the simpler looking Longest Common Subsequence problem (LCS), for which the existence of a faster algorithm was already posed as an important open question in combinatorics by Knuth decades ago \cite{knuthques}.
This is a situation where lower bounds are highly desirable, but unfortunately, the current state of the art in complexity is far from providing a lower bound that is close to quadratic for any natural problem in NP, let alone Edit-Distance.
Therefore, researchers have turned their attention to conditional lower bounds, and a recent breakthrough by Backurs and Indyk \cite{BI15} showed that Edit-Distance cannot be solved in truly subquadratic $O(n^{2-\eps})$ time, for some $\eps>0$, unless the following well-known hypothesis on the complexity of $k$-SAT is false.

\begin{hypothesis}[Strong exponential time hypothesis (SETH)]
There does not exists an $\eps>0$ such that for all $k\geq3$, $k$-SAT on $n$ variables and $m$ clauses can be solved in $O(2^{(1-\eps)n} \cdot m )$ time.
\end{hypothesis}

This lower bound was received with a great deal of excitement\footnote{It is among the few STOC papers that made it to the Boston Globe news website \cite{BostonGlobe}. See also the Quanta article \cite{Quanta}, the MIT news article \cite{MITNews}, and the blog posts by Aaronson \cite{Scott} and Lipton \cite{Lipton}.}.
Other interesting recent results show that under SETH, the current algorithms for many central problems in diverse areas of computer science are optimal, up to $n^{o(1)}$ factors.
These areas include pattern matching \cite{AVW14,ABV15,BK15,ABHVZ16}, graph algorithms \cite{RV13,AGV15,AV14,AVY15,AVW16}, parameterized complexity \cite{PW10,ABV15}, computational geometry \cite{Bring14,BM15}, and the list is growing by the day.
Bringmann and K\"{u}nnemann \cite{BK15} generalize many of the previous SETH lower bounds \cite{AVW14,Bring14,BI15,ABV15} into one framework; they prove that the problem of computing any similarity measure $\delta$ over two sequences (of bits, symbols, points, etc) will require quadratic time, as long as the similarity measure has a certain property (namely, if $\delta$ admits \emph{alignment gadgets}).
Such similarity measures include Edit-Distance and LCS (even on \emph{binary} sequences), and the Dynamic Time Warping Distance, which is an extensively studied metric in time-series analysis.

These SETH lower bounds are a part of a more general line of work in which one bases the hardness of important problems in P on well-known conjectures about the exact complexity of other famous problems. 
Other conjectures are $3$-SUM and All-Pairs-Shortest-Paths, but in recent years, SETH has been the most ``successful" conjecture at explaining barriers.

\paragraph{How strong is SETH?}
SETH was introduced~\cite{IP01,CIP09} as a plausible explanation for the lack of $(2-\eps)^n$ algorithms for CNF-SAT, despite of it being one of the most extensively studied problems in computer science.
The fastest known algorithms for $k$-SAT run in time $2^{n-n/O(k)}$ (e.g.~\cite{PPSZ05}), and for CNF-SAT the bound is $2^{n-n/O(\log \Delta)}$ where $\Delta = m/n$ is the clause-to-variable ratio~\cite{CIP06,DH09,AWY15}.
That is, these algorithms are \emph{just} not enough to refute SETH.
Evidence in favor of SETH is circumstantial. For example, natural algorithmic approaches like resolution were shown to require exponentially many steps \cite{BI13}.

There is evidence that SETH will be hard to \emph{refute}, in the form of a ``\emph{circuit lower bounds barrier}": refuting SETH is as hard as proving longstanding open lower bound results.
Williams showed that faster-than-trivial Circuit-SAT algorithms for many circuit classes $\mathcal{C}$ would imply interesting new lower bounds against that class \cite{Wi13,Wi14}. 
Via this connection, and known reductions from certain circuit families to CNF formulas, it is possible to show that refuting SETH implies a new circuit lower bound \cite{JMV15}: ${\sf E}^{\sf NP}$ cannot be solved by \emph{linear-size series-parallel circuits}\footnote{The class ${\sf E}^{\sf NP}$ or ${\sf TIME}[2^{O(n)}]^{\sf NP}$ is the class of problems solvable in exponential time with access to an ${\sf NP}$ oracle. Series-parallel circuits are a special kind of log-depth circuits, also known as Valiant-Series-Parallel circuits \cite{Valiant77}.}. However, this is a \emph{very} weak lower bound consequence.



\paragraph{A hierarchy of SAT problems.}
A weakness of SETH as a hardness hypothesis is that it is an assumption about CNF SAT, as opposed to a more general SAT problem.
Consider a problem in which you are given some representation of a function $\mathcal{B}$ on $n$ input bits and are asked whether $\mathcal{B}$ is satisfiable.
If $\mathcal{B}$ is treated as a black-box that we only have input/output access to, then any algorithm will need to spend $\Omega(2^n)$ time in the worst case.
Of course, a clever algorithm should attempt to analyze $\mathcal{B}$ in order to decide satisfiability in $o(2^n)$ time.
Whether this is possible, depends on how complex and obfuscating the representation is.
There is a whole spectrum of increasing complexity of representations, starting from simple objects like DNFs, which are very bad at hiding their satisfiability, up until large circuits or nondeterministic turing machines that we have no idea how to analyze.

For each class of representations $\mathcal{C}$ we can define the corresponding $\mathcal{C}$-SETH, stating that this abstract SAT problem cannot be solved in $(2-\eps)^n$ time.
For example, ${\sf NC}$-SETH would be the assumption that Circuit-SAT on polynomial size polylog depth circuits (${\sf NC}$ circuits) cannot be solved in $(2-\eps)^n$ time.
It is well known that ${\sf NC}$ circuits are capable of complex computations, including most linear-algebraic operations.
Moreover, they are believed to be capable of implementing cryptographic primitives like One Way Functions and Pseudorandom Generators, for which the ability to hide satisfiability is essential.
In sharp contrast, the original SETH is equivalent (due to the sparsification lemma \cite{IPZ01,CIP06}) to the assumption that even representations that are very low on this spectrum, namely \emph{linear size CNF formulas}, are enough to obfuscate satisfiability.
While from the viewpoint of polynomial time solvability, CNF-SAT and ${\sf NC}$-SAT are equivalent, this is not the case from a more fine-grained perspective: an algorithm that can decide satisfiability of arbitrary \emph{polynomial} size circuits faster than exhaustive search is far more remarkable than a similar algorithm that can handle only linear size CNF formulas. 

As our class $\mathcal{C}$ gets more complex and rich, the $\mathcal{C}$-SETH becomes more credible and appealing as a basis for conditional lower bounds than SETH.
However, all previous SETH lower bound proofs relied heavily on the simple nature of CNFs.
In this work, we prove the first lower bounds under the much more reliable $\mathcal{C}$-SETH, for classes $\mathcal{C}$ that are far more expressive than CNFs.

 \paragraph{Our results.}
 
 Our main result is a new efficient reduction from SAT on super-polynomial size \emph{nondeterministic Branching Programs} (BPs) to Edit-Distance, LCS and many other important problems in P.
 As we discuss below, BPs are vastly more expressive than CNFs. 
For example, our reduction allows us to \emph{tightly} reduce SAT on arbitrary polynomial size ${\sf NC}$ circuits to problems in P. 
Thus, we are able to replace SETH with ${\sf NC}$-SETH, and derive far more remarkable consequences from truly subquadratic algorithms for Edit Distance and LCS.
Moreover, we show that \emph{any} problem for which the general framework of Bringmann and K\"{u}nnemann is capable of showing an $n^{2-o(1)}$ SETH lower bound, will suffer from the much stronger ${\sf NC}$-SETH hardness barrier.
In fact, we are able to show reductions even to problems that fall outside their framework, like LCS on $k$ sequences, a classical problem in parameterized complexity with an $O(n^k)$ algorithm \cite{Bodklcs,Piet03,ABV15}.

BPs are a popular non-uniform model of computation \cite{AB09book}. 
Roughly speaking, a nondeterministic Branching Program on $n$ input variables of width $W$ and length $T$ is a layered directed graph on $T$ layers, each layer having $W$ nodes, such that every edge is labelled with a constraint of the form $(x_i = b)$ where $x_i$ is an input variable, and $b \in \{0,1\}$. 
Note that typically $T \gg n$ and each node appears on many edges.
One of the nodes in the first layer is marked as the \emph{start node}, and one of the nodes in the last layer is marked as the \emph{accept node}.
For an input  $x \in \{0,1\}^n$  let $G_x$ be the subgraph of edges whose constraints are satisfied by $x$.
We say that the BP accepts the input $x$ iff the accept node is reachable from the start node in $G_x$.
The \emph{size} of a BP is the total number of edges, i.e. $O(W^2 T)$.
We refer to Section~\ref{sec:prelim} for additional details about branching programs.
Even when the width is constant, BPs are surprisingly expressive: Barrington's Theorem states that any fan-in 2, depth $d$ circuit can be converted into an equivalent BP of width $5$ and size $4^d$, over the same set of inputs \cite{Ba89}.
Therefore, any circuit with fan-in 2 of polylog depth of any size (in particular, ${\sf NC}$ circuits) can be expressed as a BP of length $2^{\text{polylog}\, n}$ and constant width.
Our reduction shows that truly subquadratic Edit Distance would imply a $(2-\delta)^n$ algorithm for SAT of constant-width $2^{o(n)}$-length BPs.

 \begin{theorem}
 \label{reduction}
 There is a reduction from SAT on nondeterministic branching programs on $n$ variables, length $T$, and width $W$, to an instance of Edit-Distance or LCS on two binary sequences of length $N = 2^{n/2} \cdot T^{O(\log W)}$, and the reduction runs in $O(N)$ time.
 \end{theorem}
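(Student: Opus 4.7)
The plan is to combine a meet-in-the-middle enumeration over the $n$ inputs with a recursive alignment gadget that simulates the BP.  Partition the variables as $X = X_1 \sqcup X_2$ with $|X_1|=|X_2|=n/2$; the BP is satisfiable iff there exist $\alpha \in \{0,1\}^{n/2}$ over $X_1$ and $\beta \in \{0,1\}^{n/2}$ over $X_2$ such that the BP accepts $(\alpha,\beta)$.  The reduction will emit, for each $\alpha$ on the left side and each $\beta$ on the right side, an ``inner gadget'' of length $T^{O(\log W)}$ whose pairwise alignment determines whether the BP accepts $(\alpha,\beta)$, and then glue these $2\cdot 2^{n/2}$ blocks together via the alignment-gadget framework of Bringmann--K\"unnemann so that the optimal alignment selects one $\alpha$ against one $\beta$.

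\emph{Compiling the BP into a shallow AND/OR tree.}  I first convert BP-acceptance into a tree of depth $O(\log T)$ by divide-and-conquer over layers.  Let $R_{[i,j]}(u,v)=1$ mean that some input-consistent path joins node $u$ at layer $i$ to node $v$ at layer $j$.  Splitting at a midlayer $k$ gives
\[
R_{[i,j]}(u,v)\;=\;\bigvee_{w}\bigl(R_{[i,k]}(u,w)\wedge R_{[k,j]}(w,v)\bigr),
\]
so $R_{[0,T]}(\start,\accept)$ unfolds into an AND/OR tree whose OR-gates have fan-in at most $W$, AND-gates have fan-in $2$, depth $O(\log T)$, and leaves are atomic consistency checks that each read a single input bit.

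\emph{Realising the tree as an inner gadget.}  I turn this tree into an alignment gadget by bottom-up application of the AND- and OR-composition lemmas for alignment gadgets (available for both Edit Distance and LCS).  A leaf that reads a variable of $X_1$ is hard-wired by $\alpha$; a leaf that reads a variable of $X_2$ is hard-wired by $\beta$; the two are arranged so that a hard-wired leaf pair ``matches'' precisely when the corresponding edge constraint is satisfied by $(\alpha,\beta)$.  Since each OR of fan-in $W$ multiplies gadget size by $O(W)$ and the tree has depth $O(\log T)$ with AND fan-in $2$, the resulting inner gadget has size
\[
(2W)^{O(\log T)} \;=\; W^{O(\log T)} \;=\; T^{O(\log W)}.
\]

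\emph{Outer enumeration and the main obstacle.}  On the left I concatenate the $\alpha$-specialised inner gadget over all $2^{n/2}$ choices of $\alpha$, and symmetrically on the right over all $\beta$, interleaving each block with the framework's standard selector pattern so the optimal alignment is forced to pair exactly one $\alpha$-block against one $\beta$-block.  The OR-composition property then makes the Edit Distance (resp.\ LCS) cross a prescribed threshold iff some $(\alpha,\beta)$ is accepting, giving $N = 2^{n/2}\cdot T^{O(\log W)}$ and an $O(N)$-time construction.  The delicate point is the inner gadget: across $O(\log T)$ levels the per-level blow-up for a $W$-way OR must stay linear (not quadratic), and the cost parameters must be tuned so that a constant accept/reject gap is preserved through all levels.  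Choosing an unbalanced OR-gadget and a selector/penalty scheme under which the small per-level losses do not accumulate destructively is what drives the quantitative bound and is the technical heart of the proof.
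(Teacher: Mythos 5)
Your proposal follows essentially the same route as the paper: meet-in-the-middle over the two halves of the inputs, a Savitch-style divide-and-conquer over BP layers that produces a depth-$O(\log T)$ recursion with OR fan-in $W$ and AND fan-in $2$, realization of that recursion via nested alignment/OR gadgets, and a final normalized selector to glue $2^{n/2}$ blocks per side. Two small corrections to your quantitative intuition. First, you insist the per-level blow-up for a $W$-way OR ``must stay linear (not quadratic),'' but this is unnecessary and is in fact \emph{not} what the paper achieves: the paper's OR gadget is built by nesting two alignment gadgets and so has size $\Theta(W^2)$ per invocation, giving a per-level blow-up of $O(W^2c^3)$ — and this is perfectly fine, since $(W^2)^{O(\log T)} = T^{O(\log W)}$ all the same. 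Second, you identify ``an unbalanced OR-gadget and a selector/penalty scheme'' as the technical heart, but the actual crux is slightly different and is worth naming explicitly: the two sides of the instance are constructed \emph{independently} from $\alpha$ and $\beta$, yet at each recursion level both must commit to the \emph{same} middle node $z\in[W]$. The paper forces this agreement with ``index gadgets'' $\IG^k(z)$ that have identical type to the reachability gadgets, so that mismatched middle nodes (or an index gadget aligned against a reachability gadget) provably raise the cost above the target threshold $\rho_k$. Tuning the gap through the levels is comparatively routine once that consistency mechanism is in place; without it the recursion simply does not encode reachability.
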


Besides the constant width case, another interesting setting is where $W$ and $T$ are both $2^{o(\sqrt{n})}$, which corresponds to BPs that can represent \emph{any nondeterministic Turing machine} that uses $o(\sqrt{n})$ space \cite{AB09book}.
Thus, truly subquadratic Edit Distance or LCS would allow us to get an exponential improvement over exhaustive search for checking SAT of complex objects that can  easily implement cryptographic primitives, and many of our favorite algorithms.
This would be much more surprising than a faster SAT algorithm on linear size CNFs (as in SETH). 
To support this, we point at a few strong circuit lower bounds that would follow from such an algorithm.


If we assume that Edit-Distance or LCS can be solved in truly subquadratic time, then (among other things) Theorem~\ref{reduction} implies $O(2^{n-\eps n/2})$ time algorithms for SAT on arbitrary formulas of size  $2^{o(n)}$ and for SAT on nondeterministic branching programs of size $2^{o(\sqrt{n})}$.
Combining this with connections between faster SAT algorithms and circuit lower bounds from prior work (see Section~\ref{sec:clb} for formal statements and a sketch of the proof), we obtain the following circuit lower bound consequences.

\begin{corollary}
\label{cor1}
 If Edit Distance or LCS on two binary sequences of length $N$ is in $O(N^{2-\eps})$ time for some $\eps > 0$, then the complexity class ${\sf E}^{\sf NP}$ does not have:
\begin{enumerate}
\item non-uniform $2^{o(n)}$-size Boolean formulas,
\item non-uniform $o(n)$-depth circuits of bounded fan-in, and
\item non-uniform $2^{o(n^{1/2})}$-size nondeterministic branching programs.
\end{enumerate}
Furthermore, ${\sf NTIME}[2^{O(n)}]$ is not in non-uniform ${\sf NC}$.
\end{corollary}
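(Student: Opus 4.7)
The plan is to combine Theorem~\ref{reduction} with the well-known ``SAT algorithms imply circuit lower bounds'' framework initiated by R.~Williams. Suppose Edit Distance (or LCS) on binary sequences of length $N$ runs in $O(N^{2-\eps})$ time. Then for any nondeterministic branching program on $n$ variables with length $T$ and width $W$ such that $T^{O(\log W)} = 2^{o(n)}$, Theorem~\ref{reduction} produces sequences of length $N = 2^{n/2+o(n)}$, so SAT can be decided in time $O(N^{2-\eps}) = O(2^{n-\eps n/2 + o(n)}) = (2-\delta)^n$ for some $\delta > 0$. The first step is therefore to verify that each circuit class appearing in the corollary can be simulated by a branching program of the above shape.

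First I would handle each case by applying a standard simulation. For item~(2), an $o(n)$-depth fan-in-2 circuit is converted via Barrington's theorem into a width-$5$ BP of length $4^{o(n)} = 2^{o(n)}$, giving $T^{O(\log W)} = 2^{o(n)}$. For item~(1), a Boolean formula of size $2^{o(n)}$ balances to depth $o(n)$ (Brent/Spira), reducing to case~(2). For item~(3), a nondeterministic BP of size $2^{o(\sqrt n)}$ has both $W, T \le 2^{o(\sqrt n)}$, so $T^{O(\log W)} = (2^{o(\sqrt n)})^{O(\sqrt n)} = 2^{o(n)}$. For the final conclusion about ${\sf NC}$, a polynomial-size polylog-depth circuit becomes a width-$5$ BP of length $2^{\mathrm{polylog}(n)}$, again yielding $T^{O(\log W)} = 2^{o(n)}$. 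In each case, Theorem~\ref{reduction} plus a hypothetical subquadratic algorithm produces a $(2-\delta)^n$-time SAT algorithm for the corresponding class.

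The second step is to plug these SAT algorithms into known ``algorithmic method'' theorems. For the last conclusion, a $(2-\delta)^n$-time SAT algorithm for non-uniform ${\sf NC}$ circuits directly yields ${\sf NTIME}[2^{O(n)}] \not\subseteq$ non-uniform ${\sf NC}$ via Williams's theorem connecting circuit-SAT speedups to ${\sf NEXP}$-type lower bounds. For items~(1)--(3), we appeal to the finer-grained variants (following Williams and refined by Jahanjou--Miles--Viola and Ben-Sasson--Viola) that convert non-trivial SAT algorithms for a size-$s(n)$ circuit class into ${\sf E}^{\sf NP}$ lower bounds against that same class at size $s(n)$; here ``non-trivial'' is comfortably satisfied by our $(2-\delta)^n$ bound. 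The formal statements and pointers are deferred to Section~\ref{sec:clb}.

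The main obstacle is ensuring that each simulation preserves the size parameters exactly as required by the SAT-to-lower-bound translation we invoke, since the lower bound one obtains is quantitatively tied to the size of the circuit being refuted; for example we need the Barrington conversion to produce BPs whose \emph{total size} $O(W^2 T)$ still lies in the regime where Theorem~\ref{reduction} yields $N = 2^{n/2+o(n)}$, and we need the algorithmic-method theorem we apply to handle \emph{non-uniform} circuits of super-polynomial size (which the formula and BP versions of the framework do). Apart from bookkeeping these parameters, no new ideas are required beyond Theorem~\ref{reduction} and the cited prior work.
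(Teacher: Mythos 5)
Your proposal is correct and follows essentially the same route as the paper: convert each circuit class to branching programs with $T^{O(\log W)} = 2^{o(n)}$ (via Spira rebalancing and Barrington for items (1), (2), and the $\mathsf{NC}$ claim, and directly for item (3)), feed these through Theorem~\ref{reduction} to get a $(2-\delta)^n$-time SAT algorithm for each class, and then invoke Williams's SAT-algorithms-to-circuit-lower-bounds connection (the paper's Theorems~\ref{SATLBSconnection} and~\ref{SATLBSconnection2}). The only minor gap is that you do not explicitly verify that the classes are closed under the AND/OR combinations required by the algorithmic-method theorem (the paper notes that formulas and nondeterministic BPs have this closure), but this is a small technicality of the same flavor you already flagged.
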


The above lower bound consequences are far stronger than any state of the art. The first consequence is interesting due to the rarity of $2^{\Omega(n)}$ circuit lower bounds: it is still open whether the humongous class $\Sigma_2 {\sf EXP}$ has $2^{o(n)}$ size \emph{depth-three} circuits. 
The third consequence is interesting because it yields an exponential lower bound for arbitrary nondeterministic BPs; this model is vastly bigger than ${\sf NL}/\text{poly}$. The fourth is interesting because the lower bound holds for the smaller class $\NTIME[2^{O(n)}]$.
These consequences are on a different scale compared to the ones obtained from refuting SETH, and therefore the ``circuit lower bounds barrier"   for faster Edit Distance is much stronger.

\medskip

Our first corollary was a strict improvement over the previously known SETH lower bounds, in terms of the significance of the consequences.
Next, we show that our reduction allows us to derive consequences even from \emph{mildly} subquadratic algorithms, a feature that did not exist in the previous conditional lower bounds in P.

Given the status of Edit-Distance and LCS as core computer science problems, any asymptotic improvement over the longstanding $O(n^2/\log^2n)$ upper bound is highly desirable.
Recent algorithmic techniques were able to beat similar longstanding bounds for other core problems like All Pairs Shortest Path (APSP) \cite{Wi14B,ChWi16}, $3$-SUM \cite{JP14}, and Boolean Matrix Multiplication\cite{BW09,Chan15,Yu15}.
For example, the \emph{polynomial method}\cite{Wi14B} has allowed for superpolylogarithmic shavings for APSP, and more recently to two other problems that are more closely related to ours, namely Longest Common \emph{Substring} \cite{AWY15}, and Hamming Nearest Neighbors \cite{AlmW15}.
A natural open question \cite{Wi14B,AWY15,AlmW15} is whether these techniques can lead to an $n^2/\log^{\omega(1)}n$ algorithm for Edit-Distance as well.
The lower bound of Backurs and Indyk is not sufficient to address this question, and only a much faster $n^2/2^{\omega(\log{n}/\log\log{n})}$ would have been required to improve the current CNF-SAT algorithms.
Our approach of considering $\mathcal{C}$-SETH for more expressive classes $\mathcal{C}$ allows us to prove strong ``circuit lower bounds barriers" \emph{even for shaving log factors}.

Any formula of size $O(n^f)$ can be transformed into an equivalent BP of width $5$ and size $O(n^{8f})$ (first rebalance into a formula of depth $4f \log{n}$ \cite{Spira71} and then use Barrington's Theorem \cite{Ba89}).
Applying Theorem 1 to the resulting BPs, we get LCS instances of size $N= O(2^{n/2} \cdot n^{8fd})$, for some constant $d \leq 25$ (the constant depends on the problem and the alphabet size). 
Shaving an $\Omega((\log N)^{8fd+f+10})$ factor would translate into an $O(2^n/(n^{10}\cdot n^f))$ algorithm for SAT of  formulas of size $O(n^f)$.
Thus, we get that if LCS can be solved in $O(n^2/\log^{1000}n)$ time, then SAT on formulas of size $O(n^5)$ can be solved in $O(2^n/n^{15})$ time, 
 which would imply that ${\sf E}^{\sf NP}$ does not have such formulas.
We obtain that solving Edit-Distance or LCS in $n^2/\log^{\omega(1)}n$ time still implies a major circuit lower bound, namely that $\NTIME[2^{O(n)}]$ is not in non-uniform ${\sf NC}^1$.

\begin{corollary}
\label{cor2}
 If Edit Distance or LCS on two binary sequences of length $N$ can be solved in $O(n^2/\log^c n)$ time for every $c > 0$, then $\NTIME[2^{O(n)}]$ does not have non-uniform polynomial-size log-depth circuits. 
\end{corollary}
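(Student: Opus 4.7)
The plan is to compose Theorem~\ref{reduction} with Barrington's theorem to manufacture a SAT algorithm for ${\sf NC}^1$ circuits whose savings over $2^n$ is superpolynomial in $n$, and then to invoke the Williams SAT-to-lower-bound framework (the same tool used for Corollary~\ref{cor1} and formalized in Section~\ref{sec:clb}).

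Concretely, take a non-uniform ${\sf NC}^1$ circuit $C$ on $n$ variables of polynomial size and $O(\log n)$ depth with fan-in two. Barrington's theorem converts $C$ into an equivalent width-$5$ (nondeterministic) branching program of length $T = \poly(n)$. Feeding this into Theorem~\ref{reduction} with $W = 5$ and $T = \poly(n)$, SAT on $C$ reduces in linear time to an Edit-Distance (or LCS) instance on binary strings of length
\[
N \;=\; 2^{n/2} \cdot T^{O(\log W)} \;=\; 2^{n/2} \cdot n^{d}
\]
for some fixed constant $d$ depending only on the degree of $C$ and on the absolute constants hidden in Theorem~\ref{reduction}. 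Thus $N^2 = 2^n \cdot n^{2d}$ and $\log N = \Theta(n)$, so under the assumed $O(N^2/\log^c N)$ algorithm the resulting SAT procedure runs in time
\[
O\!\left( \frac{N^2}{\log^c N} \right) \;=\; O\!\left( \frac{2^n \cdot n^{2d}}{n^{c}} \right) \;=\; O\!\left( \frac{2^n}{n^{c-2d}} \right).
\]
Since the hypothesis supplies such an algorithm for \emph{every} constant $c > 0$, we can choose $c$ arbitrarily larger than $2d$ and obtain an ${\sf NC}^1$-SAT algorithm running in time $2^n / n^{k}$ for every constant $k$, i.e.\ savings of $n^{\omega(1)}$ over exhaustive search, uniformly across all polynomial sizes of the starting circuit.

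Finally, invoke the Williams SAT-to-lower-bound connection: a SAT algorithm saving a superpolynomial $n^{\omega(1)}$ factor over $2^n$ for a non-uniform circuit class $\mathcal{C}$ implies $\NTIME[2^{O(n)}] \not\subseteq \mathcal{C}/\poly$. Specializing to $\mathcal{C} = {\sf NC}^1$ yields the stated conclusion. The only subtle point is the parameter tracking: one must verify that the polynomial $n^d$ arising from the composition of Barrington with Theorem~\ref{reduction} is a \emph{fixed} polynomial for each fixed degree of the starting circuit, so that the ``for every $c$'' quantifier in the hypothesis absorbs $d$ and delivers the superpolynomial savings that the Williams framework demands. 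Once this constant chase is complete, the corollary is a routine consequence of Theorem~\ref{reduction} and the black-box SAT-to-circuit-lower-bound translation, with no additional complexity-theoretic ideas required.
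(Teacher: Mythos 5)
Your proposal is correct and follows essentially the same route as the paper: hypothesis $\Rightarrow$ fast BP-SAT via Theorem~\ref{reduction}, then (unfold/rebalance and) apply Barrington to handle $\mathsf{NC}^1$/formula SAT with arbitrary polynomial savings, then invoke the Williams SAT-to-lower-bound connection, which in the paper is the formula-specific Theorem~\ref{SATLBSconnection2}. The only nit is that your blanket statement of the Williams connection (``superpolynomial savings for $\mathcal{C}$-SAT $\Rightarrow$ $\NTIME[2^{O(n)}]\not\subseteq\mathcal{C}/\poly$'') is phrased more generally than what is actually proved — the $\NTIME[2^{O(n)}]$ (as opposed to ${\sf E}^{\sf NP}$) conclusion is specific to formulas/${\sf NC}^1$ — but your instantiation at $\mathcal{C}={\sf NC}^1$ is exactly what Theorem~\ref{SATLBSconnection2} supplies, so the argument goes through.
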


It is likely that these connections could be sharpened even further and that similar  consequences can be derived even from shaving fewer log factors. 
Some inefficiencies in these connections are due to constructions of certain gadgets in our proofs, while others come from the framework for obtaining circuit lower bounds from faster SAT algorithms, and the reductions from circuits to BPs.

 One striking interpretation of these corollaries is that when an undergraduate student learns the simple dynamic programming algorithms for Edit-Distance or Longest Common Subsequence and wonders whether there is a faster algorithm, he or she is implicitly trying to resolve very difficult open questions in complexity theory.
 

\paragraph{Technical remarks.} 
All known SETH lower bound proofs for problems in $\sf P$ have relied as a first step on a reduction \cite{W04} to the following \emph{Orthogonal Vectors} (OV) problem: given a set of $n$ boolean vectors $S \subseteq \{0,1\}^d$ of dimension $d=\omega(\log{n})$, does there exist a pair $a,b \in S$ such that for all $j \in [d]$ we have $(a[j]=0)$ or $(b[j]=0)$, i.e. the vectors are orthogonal or ``disjoint". If OV can be solved in $O(n^{2-\eps})$ time, then CNF-SAT can be solved in $O(2^{(1-\eps/2)n})$ time.
It is important to notice that reductions in the direction are not known, i.e. refuting SETH is not known to imply subquadratic algorithms for some hard quadratic-time problems.
Therefore, lower bounds under the assumption that OV requires $n^{2-o(1)}$ time are more reliable than SETH lower bounds.
However, the above weaknesses of SETH apply to OV as well: a much harder problem is the $\mathcal{C}$-Satisfying-Pair problem, where instead of searching for an orthogonal pair of vectors, we ask for a pair of vectors that (together) satisfy a certain function that can be represented in more complex ways than an orthogonality check.
Again, there is a spectrum of increasing expressiveness, and OV is quite low on it.
Indeed, we have no idea how to solve the ${\sf NC}$-Satisfying-Pair problem in $O(n^2/\log^3{n})$ time (it would readily imply faster ${\sf NC}$-SAT algorithms), while for OV the current upper bound $n^{2-1/O(\log{(d/\log n)}) }$ is barely not truly subquadratic.
All the reductions in this paper (except for the $k$-LCS proof) are via a certain Branching-Program-Satisfying-Pair problem, which can be solved in quadratic time, while faster algorithms would be very surprising and imply all the aforementioned consequences.

Previous SETH lower bound proofs, when stripped of all the gadgetry, are rather simple, due to the simplicity of the OV problem (which, in turn, is due to the simplicity of CNFs).
Each vector is represented by some \emph{vector gadget}, so that two gadgets ``align well" if and only if the vectors are \emph{good} (in this case, orthogonal), and then all the gadgets are combined so that the ``total score" reflects the existence of a good pair.
Vector gadgets that are capable of checking orthogonality can be constructed in natural ways by concatenating \emph{coordinate gadgets} that have straightforward functionality (checking that not both coordinates are $1$), which in turn can be constructed via certain atomic sequences of constant size.
We observe that these reductions do not exhaust the expressive capabilities of problems like Edit Distance and LCS.

Our new reductions follow this same scheme, except that the functionality of the vector gadgets is no longer so simple.
Our main technical contribution is the construction of certain \emph{reachability gadgets}, from which our vector gadgets are constructed. These gadgets are capable of checking reachability between two nodes in a subgraph (e.g. $u_\start$ and $u_\accept$) of a graph (the branching program) that is obtained from two given vectors.
These gadgets exhibit the ability of sequence similarity measures to execute nontrivial algorithmic tasks.
Our reduction can be viewed as encoding of graphs into two sequences such that the optimal LCS must implicitly execute the classical small-space algorithm for directed reachability of Savitch's Theorem \cite{Sav70}.


\paragraph{Previous work on basing hardness on better hypotheses.} Finding more reliable hypotheses (than SETH, $3$-SUM, APSP, etc) that can serve as an alternative basis for the ``hardness in P" is an important goal. 
Previous progress towards this end was achieved by Abboud, Vassilevska Williams, and Yu \cite{AVY15} where the authors prove tight lower bounds for various graph problems under the hypothesis that \emph{at least one} of the SETH, APSP, and $3$-SUM conjectures is true.
The $\mathcal{C}$-SETH hypothesis (say, for $\mathcal{C}={\sf NC}$) that we consider in this work is incomparable in strength to theirs, yet it has certain advantages.
First, the known connections between faster SAT algorithms and circuit lower bounds allow us to point at remarkable consequences of refuting our hypothesis, which is not known for any of the previous conjectures.
Second, it allows us to show barriers even for \emph{mildly} subquadratic algorithms.
And third, it allows us to explain the barriers for many problems like Edit Distance and LCS for which a lower bound under $3$-SUM or APSP is not known (unless the alphabet size is near-linear \cite{AVW14}).

\paragraph{Organization of the paper.}
The rest of the paper is organized as follows. In Section~\ref{sec:prelim} we define the SAT problem on Branching Programs (BP-SAT), and briefly describe how it is used as the source of our reductions. In Section~\ref{sec:LCS} we give a direct and simplified reduction from BP-SAT to LCS and $k$-LCS. We present the framework of Bringmann and K\"{u}nnemann \cite{BK15} in Section~\ref{sec:GA}, along with a sketch of our full reduction. We then present the details of the full reduction in Section~\ref{sec:GAproof}. The full reduction also applies to LCS, and is more efficient than the simplified reduction from Section~\ref{sec:LCS}.
In Section~\ref{sec:clb} we discuss the consequences of a faster algorithm for BP-SAT which follow from combining classical connections between formulas, low-depth circuits, and BPs, with the more modern connections between faster SAT algorithms and circuit lower bounds.

\section{Satisfiability of Branching Programs}
\label{sec:prelim}

In this section we define the SAT problem on Branching Programs (BP-SAT), which we later reduce to various sequence-problems such as Edit Distance and LCS.

A nondeterministic \emph{Branching Program} (BP) of length $T$ and width $W$ on $n$ boolean inputs $x_1,\ldots,x_n$ is a layered directed graph $P$ with $T$ layers $L_1,\ldots,L_T$.
The nodes of $P$ have the form $(i,j)$ where $i \in [T]$ is the layer number and $j\in [W]$ is the index of the node inside the layer.
The node $u_{\start}= (1,1)$ is called the starting node of the program, and the node $u_{\accept}=(T,1)$ is the accepting node of the program.
For all layers $i<T$ except the last one: all nodes in $L_i$ are marked with the same variable $x(i)=x_{f(i)}$, and each node has an arbitrary number of outgoing edges, each edge marked with $0$ or $1$. Note that typically $T \gg n$ and each variable appears in many layers.

An \emph{evaluation} of a branching program $P$ on an input $x_1,\ldots,x_n \in \{0,1\}$ is a path that starts at $u_\start$ and then (nondeterministically) follows an edge out of the current node: if the node is in layer $L_i$ we check the value of the corresponding variable $x_{f(i)}$, denote it by $\eta \in \{0,1\}$, and then we follow one of the outgoing edges marked with $\eta$.
The BP \emph{accepts} the input iff the evaluation path ends in $u_{\accept}$.
That is, each input restricts the set of edges that can be taken, and the BP accepts an input iff there is a path from $u_\start$ to $u_\accept$ in the subgraph induced by the input.

\begin{definition}[BP-SAT]
Given a Branching Program $P$ on $n$ boolean inputs, decide if there is an assignment to the variables that makes $P$ accept.
\end{definition}

To prove a reduction from BP-SAT to a sequence-problem we go through the following problem: Let
$X_1=\{x_1,\ldots, x_{n/2}\}$ and $X_2=\{x_{n/2+1},\ldots,x_n\}$ be
the first and last half of the 
inputs to the branching program, respectively. 
Do there exist $a\in \{0,1\}^{n/2}$ and $b\in
\{0,1\}^{n/2}$, such
that when viewed as partial assignments to $X_1$ and $X_2$,
respectively, together they form an accepting input to the branching
program? This problem is clearly just a reformulation of BP-SAT. Our
reductions also work, however, when $a$ and $b$ are restricted to two given sets of vectors 
$S_1,S_2 \subseteq \{0,1\}^{n/2}$, i.e., we ask whether there exists an accepting pair $(a,b) \in S_1 \times S_2$. 
We refer to this problem as the \emph{satisfying pair problem} on branching programs.
Proving a reduction from this more general problem corresponds to proving a
reduction from the \emph{orthogonal vectors 
problem (OV)} to get a SETH-based lower bound (see \cite{W04}).
To simplify the presentation we assume, however, that $S_1 = S_2 = \{0,1\}^{n/2}$.

Our reductions construct for each set $S_i$ a sequence composed of subsequences that correspond to elements of $S_i$. The length of these subsequences depends on the branching program but will generally be bounded by $2^{o(n)}$. The combined sequence-length is therefore $N = 2^{(1/2+o(1))n}$. This establishes a connection between BP-SAT, which is solvable in exponential time, and sequence-problems that are solvable in quadratic time.

\section{A Simplified Reduction to Longest Common Subsequence}
\label{sec:LCS}

Given two strings of $N$ symbols over some alphabet $\Sigma$, the longest common subsequence (LCS) problem asks for the length of the longest sequence that appears as a subsequence in both input strings. In this section we prove the following reduction from LCS to BP-SAT. To simplify the presentation we give a less efficient reduction that uses $|\Sigma| = O(W \log T)$ symbols for branching programs of width $W$ and length $T$. We refer to sections \ref{sec:GA} and \ref{sec:GAproof} for a more efficient reduction with $|\Sigma| = 2$, that is based on the framework of Bringmann and K\"{u}nnemann \cite{BK15}.

\begin{theorem}
\label{thm:toLCS}
There is a constant $c$ such that
if LCS can be solved in time $S(N)$, then BP-SAT on $n$ variables and
programs of length $T$ and width $W$ can be solved in $S(2^{n/2}
\cdot T^{c\log W})$ time.
\end{theorem}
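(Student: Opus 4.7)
The plan is to reduce BP-SAT to LCS through the satisfying-pair formulation from Section~\ref{sec:prelim}. First I would split the $n$ input variables into halves $X_1, X_2$ of size $n/2$ and enumerate all $2^{n/2}$ partial assignments on each side. For a partial assignment $a \in \{0,1\}^{n/2}$ to $X_1$, the BP edges in any layer labelled by an $X_1$-variable get restricted to those consistent with $a$; analogously for $b$ and $X_2$. The pair $(a,b)$ is satisfying iff $u_\accept$ is reachable from $u_\start$ in the subgraph induced by this pair of restrictions. The goal is to encode this reachability question into two strings $A$ and $B$ of length $2^{n/2}\cdot T^{O(\log W)}$ whose LCS attains a precomputed target value exactly when some satisfying pair exists.

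The core construction is a recursive family of \emph{reachability gadgets}, built along the lines of Savitch's midpoint decomposition. For each pair of nodes $(u,v)$ and each power-of-two depth $\ell$, I would define subsequences $\rho_a(u,v,\ell)$ and $\sigma_b(u,v,\ell)$ with the inductive invariant that their LCS hits a depth-$\ell$ target $\tau_\ell$ iff $v$ is reachable from $u$ in the sub-BP spanning $\ell$ consecutive layers under the $(a,b)$ restriction, and strictly undershoots $\tau_\ell$ otherwise. The recursion implements the midpoint identity ``a path of length $\ell$ exists iff some node $m$ in the middle layer is both reachable from $u$ in $\ell/2$ steps and reaches $v$ in $\ell/2$ steps'': the depth-$\ell$ gadget enumerates all $W$ candidate midpoints $m$ and, for each, concatenates two depth-$(\ell/2)$ subgadgets, using fresh separator symbols at each recursion level to force the LCS to align within a single choice of $m$. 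This yields the recurrence $|\rho_\ell| = O(W)\cdot |\rho_{\ell/2}|+O(1)$, whose solution is $|\rho_T|=T^{O(\log W)}$, and an alphabet of size $O(W\log T)$ as claimed. A vector gadget $\alpha_a$ is then essentially $\rho_a(u_\start,u_\accept,T)$, and likewise $\beta_b$.

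To assemble the instance, I would glue the vector gadgets as $A = \alpha_{a_1}\,\alpha_{a_2}\cdots \alpha_{a_{2^{n/2}}}$ and $B = \sigma\, \beta_{b_1}\,\sigma\,\beta_{b_2}\cdots\sigma\, \beta_{b_{2^{n/2}}}\sigma$, where $\sigma$ is a block of ``heavy'' filler symbols used sparingly on the $A$ side, following the standard coordinate-gadget template used by Backurs--Indyk and Bringmann--K\"unnemann. Choosing the filler lengths so that the optimal alignment must match each $\beta_b$ against some single $\alpha_a$ (and so that the contribution of such a matched block is $\tau_T$ on satisfying pairs and at most $\tau_T-1$ otherwise), the total LCS exceeds a fixed global threshold iff at least one pair $(a,b)$ is satisfying. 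Since $|A|,|B|=2^{n/2}\cdot T^{c\log W}$ for an absolute constant $c$, calling the hypothetical $S(N)$-time LCS oracle then decides BP-SAT in time $S(2^{n/2}\cdot T^{c\log W})$.

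The main obstacle is enforcing soundness of the reachability gadget: the LCS is a maximization procedure, so one must rule out ``hybrid'' alignments that score as high as the honest one without witnessing a genuine path, for example by mixing one midpoint choice on one side with a different midpoint choice on the other, or by matching one branch's $\rho_a(u,m,\ell/2)$ against the wrong branch's $\sigma_b(m',v,\ell/2)$. The proof strategy is an inductive argument that at every recursion level the separator tokens and padding give the honest midpoint alignment a strict $+1$ advantage over every spurious one; propagating this invariant through the recursion, and preserving it when the gadgets are concatenated into $A$ and $B$, is where the careful accounting lives and is the technical heart of the construction.
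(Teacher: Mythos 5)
Your overall plan matches the paper's: split the variables, formulate BP-SAT as a satisfying-pair problem, build recursive reachability gadgets via Savitch-style midpoint decomposition, and then assemble them with the Backurs--Indyk/Bringmann--K\"unnemann normalized-vector-gadget outer layer (the paper invokes a lemma from Abboud et al.\ for exactly this). The length recurrence $|\rho_\ell| = O(W)\cdot|\rho_{\ell/2}|$ giving $T^{O(\log W)}$ per gadget and alphabet size $O(W\log T)$ also match.

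The gap is in the soundness argument for the reachability gadget, which you correctly flag as ``the technical heart'' but then wave past. Simply inserting ``fresh separator symbols at each recursion level'' does \emph{not} force the LCS to commit to a single midpoint $m$: with symmetric gadgets on both sides, a hybrid alignment can match the separator for $m=1$, then a fragment of the $m=1$ core gadget, then the separator for $m=2$, then a fragment of the $m=2$ core gadget, and so on, harvesting partial credit from many midpoints and potentially tying or beating the honest score. Your claimed invariant of a ``strict $+1$ advantage over every spurious alignment'' is what one wants, but no mechanism is given that achieves it, and the symmetric picture you describe does not.

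The paper's actual mechanism has two ingredients you would need. First, it works in \emph{weighted} LCS (then reduces to ordinary LCS by symbol repetition), so that separator and index letters can be given weight $\Theta(Z_{k-1})$, exceeding the \emph{total} weight $8Z_{k-1}$ of a core gadget; this makes dropping a separator pair provably unprofitable (the key inequality $(y+1)\cdot 8 Z_{k-1} - 2y\cdot 9Z_{k-1}<0$). Second, and crucially, the two sides are padded \emph{asymmetrically}: one side is $f_k^{2W}\bigl(\bigcup_z g_k\,[\text{core}_z]\,g_k\bigr)f_k^{2W}$ while the other is $\bigcup_z\bigl(f_k\,g_k^{2W}\,[\text{core}_z]\,g_k^{2W}\,f_k\bigr)$. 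Because exactly $2W$ letters $f_k$ appear on the second side (interleaved with cores) while the first side has two solid $f_k^{2W}$ blocks, once all $f_k$'s are forced to match, at most one core on the second side can sit inside the middle region of the first; the role of $g_k$ is then dual, pinning down a single core on the first side. This outer/inner role swap between the two sequences is what collapses the hybrid alignments to a single pair of core gadgets, and it is not something ``fresh separator symbols'' alone deliver. Without this (or an equivalent commitment device), the induction step does not go through.
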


Let $P$ be a given branching program on $n$ boolean inputs, and let $F$ be the corresponding function.
As mentioned in Section~\ref{sec:prelim}, we prove Theorem~\ref{thm:toLCS} by reducing the satisfying pair problem on branching programs to LCS.
Let therefore $X_1=\{x_1,\ldots, x_{n/2}\}$ and $X_2=\{x_{n/2+1},\ldots,x_n\}$ be the first and last half of the inputs to $F$, respectively.
For two partial assignments $a$ and $b$ in $\{0,1\}^{n/2}$, we use the notation $a\odot b$ to denote concatenation, forming a complete assignment.
We must decide whether there exist $a,b\in \{0,1\}^{n/2}$ such that $F(a \odot b) = 1$.

For two sequences $x,y$, let $LCS(x,y)$ denote the length of the longest common subsequence of $x$ and $y$. The reduction consists of two steps. First, for each $a \in \{0,1\}^{n/2}$ we construct a sequence $G(a)$, and for each $b \in \{0,1\}^{n/2}$ we construct another sequence $\overline{G}(b)$. These sequences are defined by recursive gadget constructions. Let $Y$ be some integer that depends on the width $W$ and length $T$ of the branching program. The sequences are constructed such that
$LCS(G(a),\overline{G}(b)) = Y$ if $F(a\odot b) = 1$, and such that
$LCS(G(a),\overline{G}(b)) \le Y-1$ otherwise. Solving LCS for $G(a)$ and $\overline{G}(b)$ can therefore be viewed as evaluating $F(a\odot b)$.
Constructing $G(a)$ and $\overline{G}(b)$ is the main challenge in
proving our reduction. In previous such reductions from OV this step
was nearly trivial.

The second step is to combine $G(a)$ for all $a \in \{0,1\}^{n/2}$ into a single sequence $A$, and $\overline{G}(b)$ for all $b \in \{0,1\}^{n/2}$ into a single sequence $B$. Let $E$ be some integer that depends on the width $W$ and length $T$ of the branching program. Then $A$ and $B$ are constructed such that $LCS(A,B) = E$ if there exist $a,b\in \{0,1\}^{n/2}$ with $F(a \odot b) = 1$, and such that $LCS(A,B) \le E - 1$ otherwise. The construction of $A$ and $B$ therefore completes the reduction.

For the second step we simply use the following lemma by Abboud \emph{et
al.}~\cite{ABV15}. The proof of the lemma uses \emph{normalized vector gadgets} similar to those used in the reduction by Backurs and Indyk \cite{BI15} from orthogonal vectors to Edit Distance. We sketch the proof in Section~\ref{sec:GA} in the context of Bringmann and
K\"{u}nnemann's framework~\cite{BK15}, and give a formal proof of a corresponding lemma in Section~\ref{sec:GAproof}.

\begin{lemma}[\cite{ABV15}]
Let $F$ be a function that takes $\{0,1\}^n$ to $\{0,1\}$. Suppose that given any $a,b \in \{0,1\}^{n/2}$, one can construct gadget sequences $G(a)$ and $\overline{G}(b)$ of length $L$ and $L'$, respectively, such that for an integer $Y$, for all $a,b\in \{0,1\}^{n/2}$, 
\begin{itemize}
\item if $F(a\odot b)=1$, then $LCS(G(a),\overline{G}(b))=Y$, and
\item if $F(a\odot b)=0$, then $LCS(G(a),\overline{G}(b))\leq Y-1$.
\end{itemize}
Then, one can construct two sequences $A,B$ of length $2^{n/2} \poly(L,L')$ such that for an integer $E$, 
\begin{itemize}
\item $LCS(A,B)=E$ if there exist $a, b\in \{0,1\}^{n/2}$ such that $F(a\odot b)=1$, and
\item $LCS(A,B)\leq E-1$ otherwise.
\end{itemize}\label{lemma:abv15}
\end{lemma}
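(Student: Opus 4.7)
The plan is to use the standard \emph{alignment gadget} template for reductions of this form. At a high level, I would first turn each raw gadget pair $(G(a), \overline{G}(b))$ into a normalized gadget pair whose pairwise LCS behaves in a uniform way---taking the same value across all pairs up to a single unit that tracks whether $F(a \odot b)=1$---and then concatenate these normalized gadgets with fresh filler symbols to form $A$ and $B$ so that the global LCS decomposes into a sum of per-slot LCSs and receives a $+1$ bonus exactly when at least one satisfying pair exists.

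\textbf{Normalization.} Introduce a padding symbol $\alpha$ shared by the two sides and two further separator symbols $\rho, \sigma$ private to the $A$-side and $B$-side respectively; all three are disjoint from the alphabet used by $G$ and $\overline{G}$. Define
\[ NG(a) \;:=\; \alpha^{L'}\, G(a)\, \alpha^{L'}, \qquad \overline{NG}(b) \;:=\; \alpha^{L}\, \overline{G}(b)\, \alpha^{L}. \]
These have equal length $L + 2L'$ and $L' + 2L$, which I would then top up with extra $\alpha$'s so every normalized gadget has a common length $\ell$. Since $\alpha$ is shared but avoids the gadget alphabet, every LCS of $NG(a)$ with $\overline{NG}(b)$ splits into a fixed number $M$ of $\alpha$-matches outside the gadget plus an internal $G$-$\overline{G}$ alignment contributing $Y$ or at most $Y-1$. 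Thus $LCS(NG(a), \overline{NG}(b)) = M + Y$ if the pair is satisfying, and $\le M + Y - 1$ otherwise.

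\textbf{Combining.} Let $m = 2^{n/2}$ and enumerate $\{0,1\}^{n/2} = \{v_1, \dots, v_m\}$. Define
\[ A \;:=\; \rho^{t}\, NG(v_1)\, \rho^{t}\, NG(v_2)\, \rho^{t}\, \cdots\, \rho^{t}\, NG(v_m)\, \rho^{t}, \]
\[ B \;:=\; \sigma^{u}\, \overline{NG}(v_1)\, \sigma^{u}\, \overline{NG}(v_2)\, \sigma^{u}\, \cdots\, \sigma^{u}\, \overline{NG}(v_m)\, \sigma^{u}, \]
with $t$ and $u$ chosen large (but $\poly(L,L')$) so that separators on each side can absorb any unmatched portion of the other sequence. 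Because $\rho$ and $\sigma$ are private to their sides, they contribute nothing to the LCS; the surviving contribution is a sum over monotonically ordered pairings of $\overline{NG}(v_j)$ blocks with $NG(v_{i_j})$ blocks. Set $E$ to be the total when some pairing hits a satisfying $(v_{i_j},v_j)$: then $LCS(A,B) = E$ in the ``yes'' case, and at most $E-1$ otherwise. The total length of $A$ and $B$ is $m \cdot \poly(L,L') = 2^{n/2}\poly(L,L')$, as required.

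\textbf{Main obstacle.} The delicate part is the soundness analysis: when no satisfying pair exists, no alignment should reach $E$. The concern is that a clever alignment might fragment a single $NG(v_i)$ across separator boundaries, bleed several $\overline{NG}(v_j)$'s into one $NG(v_i)$, or let $\alpha$-padding in one normalized gadget match with the gadget interior of another, and thereby recover the missing unit. Ruling these out requires choosing the four quantities $L, L', t, u$ in balance---large enough to insulate slots from each other but not so large that the filler itself drives the LCS---and keeping the three fresh symbols $\alpha, \rho, \sigma$ strictly disjoint. Once these choices are fixed, the LCS decomposes cleanly into a sum of per-slot contributions and the bonus unit tracks precisely the existence of a satisfying pair.
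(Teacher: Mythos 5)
Your high-level template (normalize, then concatenate with separators) is the right shape, but the implementation has three genuine gaps, two of which are fatal even if the third is fixed. First, your separators $\rho$ and $\sigma$ are \emph{private} to their sides, so they are completely invisible to LCS: a symbol occurring in only one of the two strings can never participate in a common subsequence, and hence cannot ``absorb'' anything or constrain which block matches which. No choice of $t,u$ rescues this --- the LCS of $A$ and $B$ is literally the LCS of the strings obtained by deleting all $\rho$'s and $\sigma$'s. To force the block-respecting decomposition you appeal to, the separators must be \emph{shared} heavy symbols (in the spirit of the $f_k,g_k$ in Section~\ref{sec:LCS}), so that failing to match them costs strictly more than any gain from cross-block bleeding. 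Second, even with block structure enforced, you have $m$ gadgets on each side, so a structured alignment that matches every block forces the identity pairing $(v_i,v_i)$; to realize an arbitrary satisfying pair $(v_i,v_j)$ with $i\neq j$ without abandoning blocks you need to \emph{double} one side, exactly as in the paper's final step ($2 \cdot 2^{n/2}$ normalized gadgets on one side, $2^{n/2}$ on the other).

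Third, and most fundamentally, your $\alpha$-wrapping does not ``normalize'' in the sense your opening paragraph promises. You correctly state that $LCS(NG(a),\overline{NG}(b)) = M+Y$ for a satisfying pair and $\le M+Y-1$ otherwise, but a non-satisfying pair can contribute anywhere in $[M, M+Y-1]$; nothing pins it down. Consequently the global optimum in the ``yes'' case is a sum of one $M+Y$ term and $m-1$ terms that each depend on $F$, so there is no integer $E$ computable from $n,Y,L,L'$ alone with $LCS(A,B)=E$ as the lemma requires. The missing idea is the \emph{dummy/fallback component} (the Backurs--Indyk normalization, realized as $S^t$ in Claim~\ref{ST} of the paper): attach to each $G(a_i)$ a dummy string whose LCS with any $\overline{G}(b_j)$ is exactly $Y-1$, so that every block pair contributes $\max\{LCS(G,\overline{G}),\,Y-1\}$, i.e.\ exactly $Y-1$ for non-satisfying and exactly $Y$ for satisfying. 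Only then does the sum collapse to a fixed $E$ with a $+1$ bonus tracking a single satisfying pair. Your $\alpha$-padding provides a uniform additive offset, which is useful, but it is not a substitute for the fallback.
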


Armed with this lemma, we see that in order to prove our theorem, it suffices to create sequence gadgets $G$ and $\overline{G}$ of length $T^{O(\log W)}$ such that for some $Y$, 
$LCS(G(a),\overline{G}(b))=Y$ if on input $a\odot b$, the starting state of the branching program reaches the accepting state, and $LCS(G(a),\overline{G}(b))\leq Y-1$ otherwise.

To construct $G(a)$ and $\overline{G}(b)$ we follow an inductive
approach, mimicking Savitch's theorem~\cite{Sav70}. Note that at
this point the input is fixed, but we must implement $G(a)$ and
$\overline{G}(b)$ independently. Let $P$ be the given branching
program of length $T$ and width $W$, and assume for simplicity that
$T=2^t+1$ for some $t \ge 0$. Since $a$ and $b$ are fixed, $G(a)$ and
$\overline{G}(b)$ represent the corresponding subsets of edges of $P$,
and the goal is to decide if there is a directed path from
$u_{\start}$ to $u_{\accept}$ in the resulting graph. Such a path must
go through some node in layer $2^{t-1}+1$, and if we can guess which
node then we can split $P$ into two branching programs of half the
length and evaluate each part recursively. We use a \emph{reachability gadget} to
implement this decomposition, and an LCS algorithm must then
make the correct guess to find the longest common
subsequence. The construction is thus recursive, and it roughly works as follows.

At the $k$-th level of the recursion we are given two nodes $u \in L_i$ and
$v \in L_j$ with $j-i = 2^k$, and we want to decide if there is a
directed path from $u$ to $v$. We denote the sequence constructed in
this case for $a$ by $\RG^{u \to v}_k(a)$ and for $b$ by
$\RGb^{u \to v}_k(b)$. In particular $G(a) = \RG^{u_{\start} \to
u_{\accept}}_{t}(a)$ and $\overline{G}(b) = \RGb^{u_{\start} \to
u_{\accept}}_{t}(b)$. We define the sequences such that for some $Y_k$,
\begin{itemize}
\item
$LCS(\RG^{u,v}_k(a),\RGb^{u,v}_k(b)) = Y_k$ if on input $a\odot b$,
one can reach $v$ from $u$ in $2^k$ steps, and
\item
$LCS(\RG^{u,v}_k(a),\RGb^{u,v}_k(b)) \leq Y_k-1$ otherwise.
\end{itemize}
For $k=0$, $u$ and $v$ are in neighboring
layers, and they are connected if and only if there is an edge from
$u$ to $v$. Whether this is the case depends on the variable
$x(i)=x_{f(i)}$, which either belongs to $a$ or $b$. The sequence with
no control over the edge is assigned a fixed symbol $e$, and the other sequence is
assigned $e$ if and only if the edge is present. It follows that $Y_0
= 1$. For $k > 0$, we inductively construct $\RG^{u,v}_k(a)$ and $\RGb^{u,v}_k(b)$ from the $W$ choices $\RG^{u,y}_{k-1}(a)$ and $\RGb^{y,v}_{k-1}(b)$ where $y$ is one of the $W$ nodes in the layer of the branching program right in the middle between the layers of $u$ and $v$.

Our construction ensures that the length of the gadgets for each $k$ is $W^{O(k)}$, so that when we apply Lemma~\ref{lemma:abv15} we obtain sequences of length $2^{n/2}\cdot T^{O(\log W)}$, completing the proof.


\paragraph{Weighted LCS.} 

To simplify the proof we will work with the following generalized version of LCS in which each letter in the alphabet can have a different weight.
For two sequences $P_1$ and $P_2$ of length $N$ over an 
alphabet $\Sigma$ and a weight function $w:\Sigma \to [K]$,  
let $X$ be the sequence that appears in both $P_1,P_2$ as a subsequence 
and maximizes the expression $w(X)=\sum_{i=1}^{|X|} w(X[i])$. 
We say that $X$ is the \emph{Weighted Longest Common Subsequence}
(WLCS) of $P_1,P_2$ and write $\WLCS(P_1,P_2)=w(X)$.
The WLCS problem asks us to output $\WLCS(P_1,P_2)$.

Note that a common subsequence $X$ of two sequences $P_1,P_2$ can be
thought of as an alignment or a matching $A = \{ (a_i,b_i)
\}_{i=1}^{|X|}$, where $a_i,b_i\in \mathbb{N}$ are indices, between the two sequences, so that for all $i \in [|X|]: P_1[a_i]=P_2[b_i]$, and $a_1<\cdots<a_{|X|}$ and $b_1 < \cdots < b_{|X|}$.
Clearly, the weight $\sum_{i=1}^{|X|}w(P_1[a_i])=\sum_{i=1}^{|X|}w(P_2[b_i])$ 
of the matching $A$ corresponds to the weighted length $w(X)$ of the common subsequence $X$.

Abboud \emph{et al.}~\cite{ABV15} showed a simple reduction from $\WLCS$
on length $N$ sequences over an alphabet $\Sigma$ with largest weight $K$ to
$\LCS$ on (unweighted) sequences of length $N\cdot K$ over the same
alphabet.
The reduction simply copies each symbol $\ell \in \Sigma$ in each of the sequences $w(\ell)$ times and then treats the sequences as unweighted.
Abboud \emph{et al.} showed that the optimal solution is preserved under this reduction.

For a sequence over a weighted alphabet $\Sigma$ we define the \emph{total length} of the sequence to be the sum of the weights of all symbols in this sequence. Note that this is the real length of the unweighted sequence that we obtain after applying the reduction from WLCS to LCS.

\paragraph{Reachability gadgets.}
The main component in our reduction are recursive constructions of two kinds of \emph{reachability gadgets} $\RG(a)$ and $\RGb(b)$, with the following very useful property.
For every pair of vectors $a \in A, b\in B$ and pair of nodes in the branching program $P$, $u$ from layer $i$ and $v$ from layer $j$, such that $j-i=2^k$ is a power of two, we have that:
\begin{itemize}
\item The LCS of the two sequences $\RG^{u \to v}_{k}(a)$ and $\RGb^{u
  \to v}_{k}(b)$ is equal to a certain fixed value $Y_k$ that depends
  only on $k$ if $u$ can reach $v$ with a path of the branching
  program that is induced by the assignment $a\odot b$, and the LCS is less than $Y_k$ otherwise.
\item The total length $Z_k$ of any of these gadgets can be upper bounded by $W^{ck}$ for some constant $c$.
\end{itemize}

We will now show how to construct such gadgets and upper bound their lengths.

Let $k \in \{0,\ldots,\log_2{s}\}$. We will inductively show how to construct, for any $u,v$, $\RG^{u \to v}_{k}(a)$ and $\RGb^{u \to v}_{k}(b)$ from $\RG^{u \to y}_{k-1}(a)$ and $\RGb^{y \to v}_{k-1}(b)$ for $W$ nodes $y$.


\paragraph{Base Case: $k=0$.}
Let $u$ and $v$ be two nodes lying in adjacent layers, i.e. $u$ is in some layer $i$ and $v$ is in layer $i+1$.
In the base case, we check whether the layer $i$ corresponds to an input variable $x(i)$ from $X_1$ or from $X_2$.
Assume that $x(i)$ is from $X_1$.
Let $\eta \in \{0,1\}$ be the boolean value that $a$ assigns to $x(i)$.
Then check whether $(u,v)$ is an edge in $P$ and whether it is marked with $\eta$.
If both conditions hold, we define $\RG_{0}^{u \to v}(a) = e$ (for a letter $e$ in the alphabet) and otherwise we define $\RG_{0}^{u \to v}(a) = \$_1 $ (where $\$_1$ is a letter that will never appear in the other sequence, so this is equivalent to defining this sequence to be the empty string).
On the other hand, we unconditionally define  $\RGb_{0}^{u \to v}(b) = e$, because $b$ is irrelevant for the current layer. In the symmetric case when $x(i)$ is from $X_2$, set $\RGb_{0}^{u \to v}(b) = e$ if $(u,v)$ is an edge marked with the value assigned by $b$ to $x(i)$. Otherwise, set $\RGb_{0}^{u \to v}(b) = \$_2$, a symbol that does not appear in the other sequence.
We set $w(e)=w(\$_1)=w(\$_2)=1$.

\paragraph{Inductive step for $k>0$.}
Now, let $u$ be in layer $i$ and $v$ be in layer $j=i+2^k$.
We define $h = \frac{i+j}{2}$ to be the layer in the middle between $i$ and $j$ and note that $h-i=j-h=2^{k-1}$.
For each node $y = (h,z)$ for $z \in [W]$ in layer $h$ we will add a gadget that enables the path from $u$ to $v$ to pass through this node $y$. We do this by recursively adding the two gadgets $\RG_{k-1}^{u \to y}$ and $\RGb_{k-1}^{y \to v}$.
Layer $k$ introduces $W+3$ letters $f_k,g_k,\#_k$ and $z_k$ for each $z\in [W]$.
$$
\RG_{k}^{u \to v}(a) = f_k^{2W} \ \left( \bigcup_{z \in [W]}  g_k \ \underbrace{( z_k \ [\RG_{k-1}^{u \to (h,z)}(a)]   \ z_k \ [\RG_{k-1}^{(h,z) \to v}(a)] \ z_k \ )}_{\text{The Core Gadget}} \ g_k \right)\  f_k^{2W} \ \#_k^{4W(W-1)}
$$
The letter $\#_k$ will not appear in the other sequence and is completely unnecessary - we include it to make sure that both sequences have equal length and simplify the exposition.
The padding of $f_k$ and $g_k$ is different in the other sequence, representing the partial assignments $b$ to $X_2$.

$$
\RGb_{k}^{u \to v}(b) = \bigcup_{x \in [W]}  \left( \ f_k \ g_k^{2W} \ \underbrace{( z_k \ [\RGb_{k-1}^{u \to (h,z)}(b)]   \ z_k \ [\RGb_{k-1}^{(h,z) \to v}(b)] \ z_k \ )}_{\text{The Core Gadget}} \ g_k^{2W} \ f_k \right).
$$

\paragraph{The weights of the letters for layers $k>0$.}
Our weights will guarantee that $w(f_k) = w(g_k) = w(\#_k)$.
Assume that the total weight of any gadget $\RG_{k-1}^{u\to v}(a)$ or $\RGb_{k-1}^{u\to v}(b)$ is exactly some value $Z_{k-1}$.
Then, the total weight of a gadget $\RG_k^{u\to v}(a)$ or $\RGb_k^{u\to v}(b)$ is fixed to 
$Z_k=W\cdot ((2W+2)\cdot w(f_k)+3 w(z_k)+2Z_{k-1})$, no matter what $u,v,a,b$ are.

Let $z$ represent a number in $[W]$. We define the remaining weights as:
$$
w(z_k)=2Z_{k-1}, \ \ w(f_k) = w(g_k) = w(\#_k) = 9 Z_{k-1}
$$
Note that $w(f_k)$ is larger than the total weight of the ``core gadget" which is $3 w(z_k)+2Z_{k-1} = 8 Z_{k-1}$.
This implies that 
$Z_k= Z_{k-1}\cdot W(9(4W+2)+8) = [W(36W+26)]^k \leq W^{8k}$ assuming $W\geq 2$.
For the base case we have $Z_0=w(e)=1$.

\paragraph{The LCS between two gadgets.}
We will compute the LCS by induction on $k=0 \ldots \log_2{T}$.
Let $Y_k$ denote the LCS in the case that $v$ is reachable from $u$, and let $X_k$ denote the LCS in the complementary case.
We will show that $Y_k = f(k,W)\leq W^{O(k)}$ and $X_k \leq Y_k-1$.

In the base case, note that the LCS of $\RG_0^{u \to v}(a)$ and $\RGb_0^{u \to v}(b)$ is $1$ if the edge $u\to v$ is consistent with the assignment $a\odot b$ while the LCS is $0$ otherwise. 
Therefore, $Y_1=1$ and $X_1=0$.

Assume correctness for $k-1$ and we will show it for $k$.
We will first prove two claims.

\begin{claim}
In an optimum WLCS, all letters $f_k$ in $\RGb_{k}^{u \to v}(b)$ are matched to $f_k$s in $\RG_{k}^{u \to v}(a)$.
\label{claim:1}
\end{claim}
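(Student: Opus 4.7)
The plan is a global weight comparison: I will exhibit an explicit matching with all $f_k$'s of $\RGb_k$ matched, giving a lower bound on the optimum WLCS weight, and then show that any matching leaving at least one $f_k$ of $\RGb_k$ unmatched has strictly smaller weight than this lower bound, forcing the optimum to match all $f_k$'s.

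The structural engine of the argument is that the middle of $\RG_k$ contains $W$ core gadgets indexed by distinct letters $z_k$ (one per middle node $z \in [W]$), so $\RG_k$'s $z$-th core can only align with $\RGb_k$'s block-$z$ core. Moreover, if a WLCS matches $c$ distinct core indices $z_1 < z_2 < \cdots < z_c$, then by the order-preservation of LCS matchings every $f_k$ of $\RGb_k$ sitting strictly between $\RGb_k$-block $z_j$'s core and $\RGb_k$-block $z_{j+1}$'s core (namely $p_{z_j}',\,p_{z_j+1},\,p_{z_j+1}',\,\ldots,\,p_{z_{j+1}}$) must be unmatched: the image of such an $f_k$ would have to lie strictly between $\RG_k$'s cores $z_j$ and $z_{j+1}$, but that strip contains no $f_k$ position. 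Summing over $j$, the number $\ell$ of unmatched $f_k$'s of $\RGb_k$ satisfies $\ell \ge 2(z_c - z_1) \ge 2(c-1)$.

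Armed with this structural bound, any WLCS with $c \ge 1$ matched cores has weight at most $(2W - 2(c-1)) \cdot 9 Z_{k-1} + 2W \cdot 9 Z_{k-1} + c(6Z_{k-1} + 2Y_{k-1}) = (36W+18-12c)Z_{k-1} + 2cY_{k-1}$, using that $f_k$- and $g_k$-contributions are capped by the counts of $f_k$'s in $\RGb_k$ and of $g_k$'s in $\RG_k$. Separately, I exhibit a concrete matching with $\ell = 0$ and $c = 1$ that achieves weight $(36W + 6)Z_{k-1} + 2Y_{k-1}$: for any chosen $i^* \in [W]$, match $\RGb_k$'s first $2i^*-1$ $f_k$'s to $\RG_k$'s starting wall, its last $2W-2i^*+1$ $f_k$'s to the ending wall, and align $\RGb_k$'s block $i^*$ (its $g_k$'s and core) with $\RG_k$'s middle (picking $i^*$ along a reaching path when reachable). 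Comparing: the upper bound with $c \ge 2$ falls short by $2(c-1)(6Z_{k-1} - Y_{k-1}) > 0$, using the trivial bound $Y_{k-1} \le Z_{k-1}$ (a common subsequence never exceeds the total weight of either sequence); the bound with $c = 1$ and $\ell \ge 1$ loses $9\ell Z_{k-1} \ge 9Z_{k-1}$ directly from the $f_k$-contribution; and $c = 0$ loses the core contribution entirely, a deficit of at least $6Z_{k-1}$. Hence no WLCS with $\ell \ge 1$ can be optimum, establishing the claim. The main obstacle in the proof is justifying the structural bound $\ell \ge 2(c-1)$, which is where the gadget's specific design---using $W$ distinct core letters and confining every $f_k$ to the two walls at the start and end of $\RG_k$---enters crucially.
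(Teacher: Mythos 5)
Your proof is correct in substance and takes essentially the same global weight-comparison route as the paper, but with a dual bookkeeping. Where the paper first shows that unmatched $f_k$'s come in adjacent pairs (by a swap argument) and then bounds ``$y$ pairs unmatched $\Rightarrow$ at most $y+1$ cores can be touched,'' you go in the other direction: you fix the number $c$ of $\RGb_k$-blocks whose core content is matched and use order-preservation to force $\ell \geq 2(z_c - z_1) \geq 2(c-1)$ unmatched $f_k$'s. These are contrapositive formulations of the same structural constraint, and your order-preservation argument is arguably cleaner than the paper's local swap argument. You also make explicit the reference matching (walls to walls, one block $i^*$ threaded through the middle) that the paper only implicitly invokes as ``the case when all $f_k$'s and $g_k$'s are matched.''

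One imprecision worth flagging: your per-core cap of $6Z_{k-1} + 2Y_{k-1}$ is not justified as an \emph{upper} bound. The $z_k$-letters of a matched block contribute at most $6Z_{k-1}$, but the level-$<k$ content of that block's two $\RGb_{k-1}$ gadgets is not confined to match a \emph{single} $\RG_{k-1}$ gadget on the other side; it could be split across several, and then the LCS of a single $\RGb_{k-1}$ against a span of $\RG_{k}$'s core region is only bounded by the total weight $Z_{k-1}$, not by $Y_{k-1}$. (The inductive bound $Y_{k-1}$ applies to a straight head-to-head pairing, not to an arbitrary matching of a $\RGb_{k-1}$ against a longer substring.) The safe per-core cap is the total core weight $8Z_{k-1}$, which is exactly what the paper uses. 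With that replacement your chain of inequalities still closes with room to spare: for $c\ge 2$ and $\ell\ge 2(c-1)$ the upper bound becomes $(36W+18-10c)Z_{k-1}\le (36W-2)Z_{k-1}$, for $c=1$ and $\ell\ge 1$ it is at most $(36W-1)Z_{k-1}$, and for $c=0$ it is at most $(36W-9)Z_{k-1}$, all strictly below the achievable $(36W+6)Z_{k-1}$. So the gap is easily patched and does not change the structure of the argument.
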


\begin{claim}
In an optimum WLCS, all letters $g_k$ in $\RG_{k}^{u \to v}(a)$ are matched to $g_k$s in $\RGb_{k}^{u \to v}(b)$.
\label{claim:2}
\end{claim}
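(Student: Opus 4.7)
The plan is to prove Claim 2 by a weight-counting contradiction: assuming some optimum WLCS $M$ leaves a $g_k$ of $\RG_k^{u\to v}(a)$ unmatched, I show that $w(M)$ is strictly less than the weight of a concrete matching that does match all the $g_k$'s. The key leverage is that $w(g_k)=9Z_{k-1}$ strictly exceeds the entire weight $3w(z_k)+2Z_{k-1}=8Z_{k-1}$ of the non-$f_k$, non-$g_k$ content of any single core.

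First I would use Claim 1 to pin down the matching structure. Since all $2W$ letters $f_k$ of $\RGb_k^{u\to v}(b)$ are matched to $f_k$'s in $\RG_k^{u\to v}(a)$, and the only $f_k$'s of $\RG_k^{u\to v}(a)$ lie in the initial and final runs $f_k^{2W}$, order-preservation forces some prefix of the $2W$ $f_k$'s of $\RGb_k^{u\to v}(b)$ to be matched to the initial run, with the remaining suffix matched to the final run. Call a block $z\in[W]$ of $\RGb_k^{u\to v}(b)$ \emph{straddling} if its start-$f_k$ is matched to the initial run and its end-$f_k$ to the final run; any non-straddling block has its entire middle trapped between two $f_k$'s that lie in the same $f_k^{2W}$ run of $\RG_k^{u\to v}(a)$, hence those contents match nothing. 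By monotonicity at most one block can straddle, and in any matching of nonzero middle contribution there is exactly one such block, which I denote $z^*$. In particular, the $g_k$'s of $\RGb_k^{u\to v}(b)$ outside block $z^*$ are all unmatched, and the only $z_k$-matches available involve the three occurrences of $z^*_k$ in block $z^*$.

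Now suppose $M$ leaves some $g_k$ of $\RG_k^{u\to v}(a)$ unmatched. Then the $g_k$-matches contribute at most $(2W-1)\cdot 9Z_{k-1}$, the $z_k$-matches at most $6Z_{k-1}$, and the sub-gadget matches at most $2Z_{k-1}$ in total, since the two level-$(k-1)$ sub-gadgets in block $z^*$ of $\RGb_k^{u\to v}(b)$ each have total weight only $Z_{k-1}$ and are the only sub-gadgets of $\RGb_k^{u\to v}(b)$ available to match, regardless of which $\RG$-sub-gadget (possibly from a different block) they pair with. Adding the $18WZ_{k-1}$ contribution of the matched $f_k$'s from Claim 1, I get $w(M)\le 36WZ_{k-1}-Z_{k-1}$. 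On the other hand, I would exhibit a valid competitor matching that fixes some $z^*\in[W]$ and aligns the $f_k$'s consistently with block $z^*$ straddling, matches the first $2z^*-1$ $g_k$'s of $\RG_k^{u\to v}(a)$'s middle to the first $2z^*-1$ $g_k$'s in the initial $g_k^{2W}$ run of $\RGb_k^{u\to v}(b)$'s block $z^*$, matches the remaining $2W-2z^*+1$ $g_k$'s of the middle to the first $2W-2z^*+1$ $g_k$'s of the second $g_k^{2W}$ run, and pairs the three $z^*_k$'s. This is order-preserving because $2z^*-1\le 2W-1<2W$ on each side, and its weight is already at least $36WZ_{k-1}+6Z_{k-1}$, strictly larger than the bound on $w(M)$, contradicting optimality.

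The main obstacle is the $2Z_{k-1}$ cap on the total sub-gadget contribution: level-$(k-1)$ letters are shared across the $W$ blocks at level $k$, so a sub-gadget of $\RG_k^{u\to v}(a)$ sitting in block $z\ne z^*$ could in principle match a sub-gadget of $\RGb_k^{u\to v}(b)$'s straddling block, and one must be careful not to undercount these cross-block contributions. The clean bound follows simply from the fact that each of the two $\RGb$-sub-gadgets has total weight only $Z_{k-1}$ and thus contributes at most $Z_{k-1}$ to any LCS, whomever it pairs with. This is precisely why the weight scheme $w(g_k)=9Z_{k-1}>8Z_{k-1}$ was chosen: forfeiting even a single $g_k$-match on the $\RG$-side loses more weight than the entire non-$f_k$, non-$g_k$ portion of block $z^*$ can ever recover.
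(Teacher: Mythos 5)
Your proof is correct, and it takes a genuinely different route from the paper's. The paper (after invoking Claim~1 to isolate the single straddling block of $\RGb_k^{u\to v}(b)$) runs a local pairing argument mirroring the proof of Claim~1: it argues that an unmatched $g_k$ in $\RG_k^{u\to v}(a)$ forces a whole \emph{pair} of $g_k$'s around some core to be unmatched, and that $y$ such pairs cost $2y\cdot 9Z_{k-1}$ while gaining at most $(y+1)\cdot 8Z_{k-1}$ from core matches, a net loss for $y\geq 1$. Your argument is instead a global weight accounting: you bound $w(M)$ from above by summing caps on each letter class ($18WZ_{k-1}$ from the $f_k$'s fixed by Claim~1, at most $(2W-1)\cdot 9Z_{k-1}$ from $g_k$'s once one is dropped, at most $6Z_{k-1}$ from $z_k$'s, at most $2Z_{k-1}$ from level-$(k-1)$ content), and then exhibit an explicit competitor matching of weight at least $(36W+6)Z_{k-1}$, beating the bound $(36W-1)Z_{k-1}$ and contradicting optimality. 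Both hinge on the same design fact $w(g_k)=9Z_{k-1}>8Z_{k-1}$. The advantage of your route is that it sidesteps the slightly informal ``unmatched $g_k$'s come in pairs'' reduction and the ``$y$ unmatched pairs $\Rightarrow$ at most $y+1$ matched cores'' count, which the paper does not fully justify; you also handle the potential cross-block matches of level-$(k-1)$ sub-gadgets explicitly, noting that the two $\RGb$-side sub-gadgets in the straddling block cap the total contribution at $2Z_{k-1}$ no matter which $\RG$-side sub-gadgets they pair with. The minor cost of your approach is that you have to verify a concrete competitor alignment is order-preserving, which you do correctly (matching $2z^*-1$ of $\RG$'s $g_k$'s to the left $g_k^{2W}$ run and the remaining $2W-2z^*+1$ to the right, then the three $z^*_k$'s). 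One small remark: your chain of reasoning about a straddling block $z^*$ is phrased as if such a block always exists, but your numeric bound on $w(M)$ also covers the degenerate case (even split of $\RGb$'s $f_k$'s, so no straddling block) since then the $g_k$, $z_k$, and sub-gadget contributions are all zero and the bound only gets looser.
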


We begin by proving Claim~\ref{claim:1}. Suppose that some number of $f_k$s in $\RGb_{k}^{u \to v}(b)$ are not matched. Without loss of generality the first and last $f_k$ are both matched. All other letters $f_k$ appear in pairs one after the other. If one of the $f_k$ is unmatched, then so is the other. This is because if the other was matched, it's matched either to the left or the right sequence of $2W$ $f_k$ in $\RG_{k}^{u \to v}(a)$, and since there are exactly $2W$ letters $f_k$ in $\RGb_{k}^{u \to v}(b)$, then one could also match the second $f_k$ from the pair and thus increasing the weight of the subsequence. Hence, we can assume that $y$ pairs of $f_k$ are unmatched for some $y$. 

Now, if $y$ pairs are unmatched, then at most $y+1$ core gadgets from $\RGb_{k}^{u \to v}(b)$ can be matched to $\RG_{k}^{u \to v}(a)$. The potential gain in weight over the case when all $f_k$s in $\RGb_{k}^{u \to v}(b)$ and all $g_k$s in $\RG_{k}^{u \to v}(a)$ are matched is due to matching core gadgets:
\[(y+1)8Z_{k-1}-2y \cdot 9Z_{k-1} = (8-10y)Z_{k-1}<0,\]
whenever $y\geq 1$. Thus, all $f_k$ letters in $\RGb_{k}^{u \to v}(b)$ are matched, and we've shown Claim~\ref{claim:1}.

Now we prove Claim~\ref{claim:2}. Since all $f_k$ in $\RGb_{k}^{u \to v}(b)$ are matched, there's a single inner gadget $$g_k^{2W} \ ( z_k \ [\RGb_{k-1}^{u \to (h,z)}(b)]   \ z_k \ [\RGb_{k-1}^{(h,z) \to v}(b)] \ z_k \ ) \ g_k^{2W}$$ for some $z$ that can be matched for some $z'$ to 
$$\cup_{z' \in [W]}  (g_k \ ( z'_k \ [\RG_{k-1}^{u \to (h,z')}(a)]   \ z'_k \ [\RG_{k-1}^{(h,z') \to v}(a)] \ z'_k \ ) \ g_k).$$

Similar to the above, if some $g_k$ from the second term is not matched, then there's a pair that is not matched. If $y$ pairs are not matched, then the only potential gain is if $y+1$ core gadgets get matched. The net gain is then again 
\[(y+1)8Z_{k-1}-2y \cdot 9Z_{k-1} = (8-10y)Z_{k-1}<0,\]
for any $y>0$. Thus, all $g_k$s are matched, and this proves Claim~\ref{claim:2}.

Now, due to the above two claims, two core gadgets $( z_k \ [\RGb_{k-1}^{u \to (h,z)}(b)]   \ z_k \ [\RGb_{k-1}^{(h,z) \to v}(b)] \ z_k \ )$ and $( z'_k \ [\RG_{k-1}^{u \to (h,z')}(a)]   \ z'_k \ [\RG_{k-1}^{(h,z') \to v}(a)] \ z'_k \ )$ are aligned in some way by the optimum subsequence. 
If $z\neq z'$, then we lose $6Z_{k-1}$, potentially gaining only $2Z_{k-1}$ from matching the reachability gadgets for $k-1$. This is a net loss, so we can assume that the optimum WLCS matches core gadgets for the same $z$. Now, the outer $z_k$ will be matched since they don't interfere with the rest. If the middle $z_k$ are not matched, then we lose $2Z_{k-1}$. We can gain at most $2Z_{k-1}-1$ since the only way to gain $2Z_{k-1}$ is if both $\RGb_{k-1}^{u \to (h,z)}(b)$ and $\RGb_{k-1}^{(h,z) \to v}(b)$ are completely matched to their counterparts. However, then $z_k$ would have the opportunity to be matched as well. Thus, we again have a net loss, so that $z=z'$ and all three $z_k$ are matched. This means that $\RGb_{k-1}^{u \to (h,z)}(b)$ is aligned with $\RG_{k-1}^{u \to (h,z)}(a)$ and $\RGb_{k-1}^{(h,z) \to v}(b)$ is aligned with $\RG_{k-1}^{(h,z) \to v}(a)$ in some way.

This means that the LCS will be equal to $4w \cdot w(f_k) + 3 w(z_k)$ plus the LCS of $\RG_{k-1}^{u \to (h,z)}(a)$ and $\RGb_{k-1}^{u \to (h,z)}(b)$ plus the LCS of $\RG_{k-1}^{ (h,z) \to v}(a)$ and $\RGb_{k-1}^{ (h,z) \to v}(b)$.

Here we use the key observation that $u$ can reach $v$ if and only if there is a node $(h,z)$ for some $z \in [W]$ in the layer $h$ that is between the layers of $u$ and $v$ such that both $u$ can reach $(h,z)$ \emph{and} $(h,z)$ can reach $v$. This implies that: 
The latter two contributions, by our inductive hypothesis, are $2 \cdot Y_{k-1}$ if there is a path (because we are free to pick the correct $z\in [W]$), and at most $Y_{k-1}+X_{k-1}$ if there isn't one (no matter what $z$ we pick, this will be the contribution).
We get:
$$
Y_k =  4W \cdot w(f_k) + 3 w(z_k) + 2Y_{k-1} = 36W \cdot Z_{k-1} + 6  Z_{k-1} + 2Y_{k-1} = 6(6W+1)Z_{k-1} + 2 Y_{k-1}.
$$
and:
$$
X_k \leq 4W \cdot w(f_k) + 3 w(z_k) + Y_{k-1} + X_{k-1} \leq Y_k - (Y_{k-1}-X_{k-1})\leq Y_k-1,
$$
which proves the inductive step.

\paragraph{The final vector gadgets.}
Assume that the branching program has a fixed start node $u_{\start}=(1,1)$ in the first layer and a single fixed ``accept" node $u_{\accept}=(T,1)$ in the last layer.
The branching program $P$ accepts on input $a\odot b$ if and only if the path induced by the assignment $a\odot b$ that starts at $u_{\start}$ reaches $u_{\accept}$. 
We will map each vector $a \in A$ in the first list to the vector gadget $VG'(a)=\RG_{k}^{u_{\start} \to u_{\accept}}(a)$ where $k=\log_2{T}$ which we assume without loss of generality is an integer, and we map each vector $b \in B$ in the second list to the vector gadget $VG'(b)=\RGb_{k}^{u_{\start} \to u_{\accept}}(b)$.
We define the constant $Y=Y_{k}$, for $k=\log_2{T}$.

By the discussion above we have that the LCS of $VG'(a)$ and $VG'(b)$ is equal to $Y$ if our branching program $P$ accepts on input $a\odot b$, and the LCS is at most $Y-1$ otherwise.
Moreover, we have that the total length of these gadgets can be upper bounded by $W^{8\log_2{T}}=T^{8\log W}$.

We finish the proof of the theorem by applying Lemma~\ref{lemma:abv15} from the beginning of the section.

\subsection{Reduction to $k$-LCS}
\label{sec:klcs}

A nice property of the reduction for LCS in Theorem~\ref{thm:toLCS} is that it can be extended to a reduction to $K$-LCS, we give a detailed sketch of the proof below.

\begin{theorem}
Given a branching program of length $T$, width $W$ and $n$ inputs, one can produce for any $K\geq 2$, in  time $2^{n/K}K^{O(K)} T^{O(\log W)}$ an instance of $K$-LCS of alphabet size $O((K+W)\log T)$, where the sequences have length
$2^{n/K}K^{O(K)} T^{O(\log W)}$.
\end{theorem}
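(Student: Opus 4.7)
My plan is to lift each of the three ingredients used in Theorem~\ref{thm:toLCS} from $2$ to $K$ sequences: the partition of variables, the recursive reachability gadgets, and the outer combining lemma. First, I would partition the $n$ inputs into $K$ blocks $X_1,\ldots,X_K$ of size $n/K$ each, and consider the $K$-way satisfying-tuple problem: is there $(a_1,\ldots,a_K)\in(\{0,1\}^{n/K})^K$ with $F(a_1\odot\cdots\odot a_K)=1$? For each block $j\in[K]$ and each $a_j\in\{0,1\}^{n/K}$ I will build a gadget sequence $\RG^{u\to v}_{k,j}(a_j)$ and a target value $Y_k$ so that the $K$-wise $\LCS$ of $\RG^{u\to v}_{k,1}(a_1),\ldots,\RG^{u\to v}_{k,K}(a_K)$ equals $Y_k$ iff $u$ reaches $v$ in $2^k$ steps under the joint assignment, and is at most $Y_k-1$ otherwise. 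The two-sequence gadgets of Section~\ref{sec:LCS} are the template: the $j$-th sequence plays the role of $\RG$ on layers whose variable lies in $X_j$ and the role of $\RGb$ on all other layers.

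The base case and inductive step then carry over. In the base case, the layer's variable $x(i)$ lies in a unique block $j^\star$; the $j^\star$-th sequence outputs $e$ iff the edge $(u,v)$ is present and consistent with $a_{j^\star}$, and the other $K-1$ sequences unconditionally output $e$. In the inductive step I introduce level-$k$ padding letters $f_k,g_k$ and intermediate-node labels $z_k$, and concatenate, over $z\in[W]$, a block of the form (padding)(core for $z$)(padding), where the core recursively handles $u\!\to\!(h,z)$ and $(h,z)\!\to\!v$. The $f_k$ and $g_k$ multiplicities across the $K$ sequences generalize the $2W$-versus-$1$ scheme of Section~\ref{sec:LCS}, so that any valid $K$-wise alignment of the $f_k$'s (resp.\ $g_k$'s) forces all $K$ sequences to agree on a single $z$ and then recurse.

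The main obstacle I expect is the $K$-wise analogue of Claims~\ref{claim:1} and \ref{claim:2}: the padding weights $w(f_k),w(g_k),w(z_k)$ must be chosen so that any misalignment between \emph{any pair} among the $K$ sequences strictly dominates the potential gain from matching additional core gadgets. The exchange argument from Section~\ref{sec:LCS} still goes through, but the bookkeeping needs a $K^{O(1)}$ multiplicative slack in the padding weights per recursion level. This keeps the per-gadget total length at $T^{O(\log W)}$ (the combinatorial structure is unchanged) and introduces $O(W+K)$ fresh letters per level, yielding alphabet $O((K+W)\log T)$ as claimed. The inductive computation of $Y_k$ and the gap $Y_k-X_k\ge 1$ then mirrors the two-sequence case, using the fact that $u$ reaches $v$ via some middle node $(h,z)$ iff all $K$ sequences can agree on such a $z$.

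Finally, I would generalize Lemma~\ref{lemma:abv15} to $K$-LCS. Its proof uses normalized vector gadgets separated by buffer symbols; for $K$ sequences one sets up $K$ parallel enumerations of the $2^{n/K}$ candidate assignments per block, with buffers designed so that any maximum $K$-wise common subsequence must select a consistent tuple of enumeration indices. Coordinating $K$ enumerations in this way (using, e.g., fresh buffer letters indexed by block-position pairs) is where the $K^{O(K)}$ blow-up in per-vector length comes from. The WLCS-to-LCS reduction (replicate each symbol by its weight) applies verbatim to $K$-LCS. Multiplying the $2^{n/K}$ enumeration factor by the $T^{O(\log W)}$ gadget length and the $K^{O(K)}$ coordination overhead gives total sequence length and construction time $2^{n/K}K^{O(K)}T^{O(\log W)}$, with alphabet $O((K+W)\log T)$, matching the statement.
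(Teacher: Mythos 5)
Your plan tracks the paper's high-level strategy: partition the $n$ inputs into $K$ blocks, build $K$ parallel reachability gadgets recursively, and finish with the $K$-ary version of the ABV combining lemma (correctly attributing the $K^{O(K)}$ factor to that lemma). Your base case is essentially identical to the paper's. The divergence — and the gap — is in the inductive-step padding. You propose to inflate $w(f_k),w(g_k),w(z_k)$ by a "$K^{O(1)}$ multiplicative slack per recursion level" so that any pairwise misalignment among the $K$ sequences is dominated. If every one of the $\log_2 T$ levels multiplies $Z_k$ by an extra $K^{\Theta(1)}$ factor, the final gadget has total weight (hence, after the WLCS-to-LCS translation, length) $K^{\Theta(\log T)}\cdot T^{O(\log W)}$. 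That is \emph{not} $T^{O(\log W)}$: for, say, $T=2^{\Theta(\sqrt{n})}$ and $K=O(1)$ the extra $K^{\Theta(\log T)}$ factor is $2^{\Theta(\sqrt n)}$ and does not fit inside $K^{O(K)}$. So the weight-inflation scheme, as you describe it, overshoots the theorem's bound, and your claim that it "keeps the per-gadget total length at $T^{O(\log W)}$" does not follow.

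The paper avoids the issue entirely by showing no extra slack is needed: the gadgets are made asymmetric, with sequences $1,\ldots,K-1$ all using the $\RG$-shape padding ($f_k^{2W}$ outside, single $g_k$ around each of the $W$ cores, plus filler $\#_{j,k}$) and only the $K$-th sequence using the $\RGb$-shape padding ($f_k$ outside each block, $g_k^{2W}$ inside). The weights $w(f_k)=w(g_k)=9Z_{k-1}$, $w(z_k)=2Z_{k-1}$ are exactly those of the two-sequence case, so $Z_k\le W^{8k}$ as before, independent of $K$. Correctness is an anchoring argument: matching all $2W$ letters $f_k$ of the $K$-th sequence selects a single core on that side; the inner $g_k^{2W}$ then forces every $j<K$ to have all its $g_k$'s matched, which pins a common intermediate node $z^j$ for $j<K$, and the weight of $z_k$ then forces $z^1=z^K$. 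This is the idea your sketch is missing. A smaller imprecision: you describe the $j$-th sequence as playing the $\RG$ role "on layers whose variable lies in $X_j$" and $\RGb$ "on all other layers" — that is correct for the base case only; the padding shape at levels $k\ge 1$ is not layer-dependent but globally fixed per sequence (it cannot vary with $j$ across layers, or the gadget types would not be uniform and the alignment bookkeeping would break).
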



As with LCS, we begin with a Lemma from prior work~\cite{ABV15} that allows us to select a $K$-tuple of partial assignments that together form an accepting input to the branching program.

\begin{lemma}[ABV'15]
Let $F$ be a function that takes $\{0,1\}^n$ to $\{0,1\}$. Suppose that given any $\alpha\in \{0,1\}^{n/k}$, one can construct $k$ gadget sequences $G_i(\alpha)$ for $i\in [k]$ of length at most $D$, such that for an integer $Y$, for all $\alpha_1,\alpha_2,\ldots,\alpha_k \in \{0,1\}^{n/k}$, 
\begin{itemize}
\item if $F(\odot_{i=1}^k\alpha_i)=1$, then $LCS(\{G_1(\alpha_1),\ldots, G_k(\alpha_k)\})=Y$, and
\item if $F(\odot_{i=1}^k\alpha_i)=0$, then $LCS(\{G_1(\alpha_1),\ldots, G_k(\alpha_k)\})\leq Y-1$.
\end{itemize}
Then, one can construct $k$ sequences $A_1,\ldots,A_k$ each of length $\leq 2^{n/k} k^{O(k)} D^{O(1)}$ such that for an integer $E$, 
\begin{itemize}
\item $LCS(\{A_1,\ldots,A_k\})=E$ if there exist $\alpha_1,\alpha_2,\ldots,\alpha_k\in \{0,1\}^{n/k}$ such that $F(\odot_{i=1}^k\alpha_i)=1$, and
\item $LCS(\{A_1,\ldots,A_k\})\leq E-1$ otherwise.
\end{itemize}\label{lemma:abv15k}
\end{lemma}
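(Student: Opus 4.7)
The plan is to follow the template of Lemma~\ref{lemma:abv15} (the $k=2$ version) and generalize the ``alignment gadget'' construction to $k$ sequences. I would proceed in three phases.

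Phase 1 (Normalization). First I would massage each input gadget $G_i(\alpha)$ into a normalized form $\hat{G}_i(\alpha)$ of length $D'=O(D)$ by padding on both sides with a common ``anchor'' character. The padding is chosen so that the $k$-wise LCS of $\hat{G}_1(\alpha_1),\ldots,\hat{G}_k(\alpha_k)$ is exactly $Y'=Y+2\ell$ (where $\ell$ is the padding length) whenever $F(\odot_i\alpha_i)=1$, and at most $Y'-1$ otherwise. This step follows by a direct calculation from the hypothesis on $G_i$, using the fact that the anchors can always be matched in full across all $k$ sequences and so contribute a fixed amount independent of the choice of $\alpha_i$'s.

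Phase 2 (Combining). For each $i\in[k]$ I would then define
$$A_i \;=\; \pi_i^{(0)}\,\hat{G}_i(\alpha^{(1)})\,\pi_i^{(1)}\,\hat{G}_i(\alpha^{(2)})\,\pi_i^{(2)}\cdots \hat{G}_i(\alpha^{(2^{n/k})})\,\pi_i^{(2^{n/k})},$$
where each $\pi_i^{(j)}$ is a padding string, built out of a small set of separator symbols shared across all $A_i$'s, with carefully chosen lengths. The purpose of the separators is to force any optimal $k$-wise common subsequence to ``commit'' to exactly one block index $j_i\in[2^{n/k}]$ per sequence, so that the contribution of the gadgets to the LCS equals the $k$-wise LCS of the selected $\hat{G}_i(\alpha^{(j_i)})$.

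Phase 3 (Analysis). I would then argue that in any optimal $k$-wise LCS of $A_1,\ldots,A_k$: (a) each $A_i$ commits to some single block $j_i$, i.e.\ at most one gadget $\hat{G}_i(\alpha^{(j_i)})$ contributes nontrivially; (b) the separator characters contribute a fixed constant $C$ that does not depend on $(j_1,\ldots,j_k)$; and (c) therefore $\LCS(A_1,\ldots,A_k)=C+\max_{(j_1,\ldots,j_k)}\LCS(\hat{G}_1(\alpha^{(j_1)}),\ldots,\hat{G}_k(\alpha^{(j_k)}))$. By Phase~1 this maximum is $Y'$ if some tuple satisfies $F$ and at most $Y'-1$ otherwise, so setting $E=C+Y'$ finishes the proof.

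The main obstacle is step~(a) of Phase~3: designing the separator blocks $\pi_i^{(j)}$ so that all $k$ sequences are rigidly forced into a ``one block per sequence'' pattern, with no cheating alignment that splices parts of several adjacent gadgets together to exceed the budget. For $k=2$ a single shared separator character with uniform multiplicity suffices, because the left-to-right scanning of two sequences naturally pairs up block indices. For general $k$ I would use a nested/layered padding scheme: one separator symbol per ``level'', with the multiplicities growing geometrically so that mis-committing even one sequence costs more than can ever be recouped from a partial gadget match. This nested construction is what produces the $k^{O(k)}$ factor in the final length bound, and also forces the alphabet of the combined instance to grow by $O(k)$ new symbols beyond those used by the $G_i$'s.
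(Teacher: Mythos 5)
The paper itself gives no proof of this lemma; it is imported verbatim from Abboud, Backurs, and Vassilevska Williams (FOCS'15), and the present paper only uses it as a black box. So the comparison has to be against what the lemma actually requires rather than a paper-internal argument.

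Your proposal has a genuine gap at the interface between Phase~1 and Phase~3, and the gap is the heart of the matter. In Phase~3(a) you assert that in an optimal $k$-wise common subsequence ``at most one gadget $\hat G_i(\alpha^{(j_i)})$ contributes nontrivially,'' and in 3(c) you conclude a \emph{max} formula $\LCS(A_1,\dots,A_k)=C+\max_{(j_1,\dots,j_k)}\LCS(\hat G_1(\alpha^{(j_1)}),\dots,\hat G_k(\alpha^{(j_k)}))$. This is not what separator-based concatenations give you. Already for $k=2$, if both sequences are a list of gadgets interleaved with identical heavy separator blocks, the optimal alignment does not isolate a single gadget pair: it picks an offset $\Delta$ and then aligns the gadgets \emph{all along the diagonal} $j\mapsto(j,j+\Delta)$, matching $N-O(1)$ separator blocks and accumulating a \emph{sum} $\sum_j \LCS(\hat G_1(\alpha^{(j)}),\hat G_2(\alpha^{(j+\Delta)}))$ on top. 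Nothing in a padding-only scheme prevents every gadget on the diagonal from contributing; heavy separators only force you \emph{onto} a diagonal, they do not force you to use just one cell of it. The same is true for general $k$. Getting a genuine max with only $2^{n/k}$ blocks per sequence would essentially require an OR-gadget-style selector, but that costs a quadratic blowup (as in Lemma~\ref{ORG}), which would overshoot the $2^{n/k}\cdot k^{O(k)}\cdot D^{O(1)}$ bound you must hit.

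Once you accept that the correct structure is max-over-offsets of a \emph{sum} along a diagonal, your Phase~1 is also too weak. Padding each $G_i(\alpha)$ with a fixed anchor block only translates every LCS value by a common additive constant; it does nothing to control the spread of the non-satisfying values, which the hypothesis bounds only by $\leq Y-1$. In a sum over $\Theta(2^{n/k})$ diagonal cells, this slack accumulates and can completely mask the single $+1$ advantage of a satisfying tuple. The normalization that is actually required (and that the ABV'15 construction uses, via the same device as Backurs--Indyk's normalized vector gadgets) is to pair each gadget with a \emph{dummy} gadget so that the non-satisfying $k$-wise LCS is pinned to a single fixed value $Y'-1$ exactly, while the satisfying value is $Y'$. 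Only then does the diagonal sum become $C+(M-1)(Y'-1)+Y'$ in the yes-case versus $\leq C+M(Y'-1)$ in the no-case, yielding the clean $E$ versus $\leq E-1$ gap. Your write-up conflates two incompatible designs: a max-based combiner that would make normalization unnecessary but cannot be built within budget, and an anchor-padding ``normalization'' that is too weak to support the sum-based combiner that can.
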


Armed with this Lemma, it suffices to extend our reachability gadget construction for checking consistency of $K$ partial assignments. 
We will be able to do this without increasing the total weight of the gadgets. Just as with the LCS proof, we work with weighted $K$-LCS, which from~\cite{ABV15} we know how to reduce back down to $K$-LCS where the length of the sequences is the total weight.
We provide a sketch.

We will reduce to $K$-LCS where $K\geq 2$ is any integer.
We split the input variables $x_1,\ldots,x_n$ to the branching program $P$ in $K$ sets, $X_i = \{x_{(i-1)n/K + 1},\ldots, x_{i n/K}\}$ for $i\in [K]$.

The reachability gadget for the $i$-th partial assignment $\alpha_i$ (to the variables in $X_i$) and the $i$th sequence will be denoted by $\RG_{i,k}^{u\to v}(\alpha_i)$.

\paragraph{Base Case: $k=0$.}
Let $u$ and $v$ be two nodes lying in adjacent layers, i.e. $u$ is in some layer $i$ and $v$ is in layer $i+1$.
We describe the reachability gadget $\RG_{j,0}^{u\to v}(\alpha_j)$. For each $j$ we will have a symbol $\$_j$ that does not appear in any sequence other than the $j$th. We also have a letter $e$ that can appear in any sequence.

Fix $j$.
As before, we check whether the layer $i$ corresponds to an input variable $x(i)$ from $X_j$.
Assume first that $x(i)$ is from $X_j$.
Let $\eta \in \{0,1\}$ be the boolean value that $\alpha_j$ assigns to $x(i)$.
Then check whether $(u,v)$ is an edge in $P$ and whether it is marked with $\eta$.
If both conditions hold, we define $\RG_{j,0}^{u \to v}(\alpha_j) = e$ and otherwise we define $\RG_{j,0}^{u \to v}(\alpha_j) = \$_j $.
If $x(i)$ is not from $X_j$, we set  $\RG_{j,0}^{u \to v}(\alpha_j) = e$.
We set $w(e)=w(\$_j)=1$ for all $j\in [K]$.

\paragraph{Inductive step for $k>0$.}
Now, let $u$ be in layer $i$ and $v$ be in layer $i+2^k$.
We define $h = \frac{i+(i+2^k)}{2} = i+2^{k-1}$ to be the layer in the middle between $i$ and $i+2^k$.
For each node $y = (h,z)$ for $z \in [W]$ in layer $h$ we will add a gadget that enables the path from $u$ to $v$ to pass through this node $y$. 

Layer $k$ introduces $W+K+1$ letters $f_k,g_k,\#_{j,k}$ and $z_k$ for each $z\in [W]$.

Let $j\leq K-1$. We define
$$
\RG_{j,k}^{u \to v}(\alpha_j) = f_k^{2W} \ \left( \bigcup_{z \in [W]}  g_k \ \underbrace{( z_k \ [\RG_{j,k-1}^{u \to (h,z)}(\alpha_j)]   \ z_k \ [\RG_{j,k-1}^{(h,z) \to v}(\alpha_j)] \ z_k \ )}_{\text{The Core Gadget}} \ g_k \right)\  f_k^{2W} \ \#_{j,k}^{4W(W-1)}
$$
The letter $\#_{j,k}$ will not appear in the other sequences and is completely unnecessary - we include it to make sure that both sequences have equal length and simplify the exposition.
The padding of $f_k$ and $g_k$ is different in the $K$th sequence, representing the partial assignments to $X_k$.

$$
\RG_{K,k}^{u \to v}(\alpha_K) = \bigcup_{x \in [W]}  \left( \ f_k \ g_k^{2W} \ \underbrace{( z_k \ [\RG_{K,k-1}^{u \to (h,z)}(\alpha_K)]   \ z_k \ [\RG_{K,k-1}^{(h,z) \to v}(\alpha_K)] \ z_k \ )}_{\text{The Core Gadget}} \ g_k^{2W} \ f_k \right).
$$

As before, assume that the total weight of any gadget $\RG_{j,k-1}^{u\to v}(\alpha_j)$ is exactly some value $Z_{k-1}$, and that we set $w(f_k) = w(g_k) = w(\#_{j,k})$ for all $j$.
Then, the total weight of a gadget $\RG_{j,k}^{u\to v}(\alpha_j)$  is fixed to 
$Z_k=W\cdot ((2W+2)\cdot w(f_k)+3 w(z_k)+2Z_{k-1})$, no matter what $u,v,\alpha_j$ are.

As before, pick $w(f_k) = w(g_k) = w(\#_{j,k}) = 9Z_{k-1}$ and $w(z_k)=2Z_{k-1}$ for all $z\in [W]$.
The total weight of the ``core gadget" is again $3 w(z_k)+2Z_{k-1} = 8 Z_{k-1}$, and .
$Z_k= Z_{k-1}\cdot W(9(4W+2)+8) = [W(36W+26)]^k \leq W^{8k}$ assuming $W\geq 2$.
For the base case we have $Z_0=w(e)=1$.

As before, if any $f_k$ from $\RG_{K,k}^{u \to v}(\alpha_K)$ is not matched, then again a pair of adjacent ones isn't matched. If $y$ pairs are not matched, then at most $y+1$ inner gadgets are, and the net gain is negative as before. Hence, in an optimum alignment, all $f_k$ from $\RG_{K,k}^{u \to v}(\alpha_K)$ are matched.
This selects some $z$ so that 
$$g_k^{2W} \ ( z_k \ [\RG_{K,k-1}^{u \to (h,z)}(b)]   \ z_k \ [\RG_{K,k-1}^{(h,z) \to v}(b)] \ z_k \ ) \ g_k^{2W}$$
is aligned with the inner gadgets
$$\cup_{z^{j} \in [W]}  (g_k \ ( z^{j}_k \ [\RG_{j,k-1}^{u \to (h,z^{j})}(a)]   \ z^{j}_k \ [\RG_{k-1}^{(h,z^j) \to v}(a)] \ z^j_k \ ) \ g_k),$$
for the rest of the sequences $j\leq K-1$. Now, as before, all $g_k$ letters from each of the inner gadgets above must be matched, as otherwise we would lose too much weight. 
Hence the alignment is between 
$$\{z^{j}_k \ [\RG_{j,k-1}^{u \to (h,z^{j})}(a)]   \ z^{j}_k \ [\RG_{k-1}^{(h,z^j) \to v}(a)] \ z^j_k\}_{j\in [K]},$$
for potentially different $z^j$s. 
Because all $g_k$s are matched, we must have $z^j=z^i$ for all $i,j<K$, but potentially it could be that $z^1\neq z^K$.
However, as $w(z_k)$ is large, again, the alignment needs to be for the same $z$, and all three $z_k$s in the core gadgets must be matched. This completes the proof sketch.


\section{Sequence Problems with Alignment Gadgets}\label{sec:GA}

Bringmann and K\"{u}nnemann \cite{BK15} introduced a framework for
showing SETH lower bounds for problems that compute a similarity
measure of two sequences of, e.g., bits, symbols, or points. They
showed that any sequence-problem that implements
so-called \emph{alignment gadgets} cannot be solved in truly
subquadratic time under SETH. We show that alignment gadgets are actually much
stronger. So strong, in fact, that they can simulate nondeterministic
branching programs. It follows that Edit-Distance, Longest Common
Subsequence (LCS), Dynamic Time Warping Distance, and every other
problem that implements alignment gadgets can solve SAT on
nondeterministic branching programs. This proves our main result,
Theorem~\ref{reduction}. We start here with a sketch of the proof, and
then provide the details in Section~\ref{sec:GAproof}.

A similarity measure $\delta$ is a function that measures the similarity, e.g.,
Edit-Distance or LCS, of two given sequences $A$ and
$B$. Suppose $|A| \ge |B|$. A \emph{structured alignment}
maps $B$ to a consecutive subsequence $A'$ of $A$, and the cost of the
alignment is $\delta(A',B)$. Note that the alignment does not correspond to, e.g., a common subsequence. Instead the alignment restricts the similarity measure to only work with $A'$ and $B$.
An alignment gadget is a gadget that takes two collections of
sequences and combines each collection into a single sequence such
that the minimum cost of a structured alignment closely approximates
(within an additive constant) the similarity measure of the two
constructed sequences. An alignment gadget can thus be interpreted as
a way of forcing the similarity measure to respect the structure of
the two collections of sequences. 
Alignment gadgets are then combined
recursively in order to obtain a reduction. We next sketch how this is
done for branching programs.

Let $P$ be a given branching program on $n$ boolean inputs, and let $F$ be the corresponding function.
To prove a reduction from BP-SAT to a sequence-problem with alignment
gadgets we again go through the satisfying pair problem on branching
programs (see Section~\ref{sec:prelim}).
Let therefore
$X_1=\{x_1,\ldots, x_{n/2}\}$ and $X_2=\{x_{n/2+1},\ldots,x_n\}$ be
the first and last half of the 
inputs to the branching program, respectively. 
We must decide whether there exist $a,b\in \{0,1\}^{n/2}$ such that
$F(a \odot b) = 1$, where $a \odot b$ is the concatenation of $a$ and $b$.

Our reduction uses alignment gadgets to construct for each $a \in
\{0,1\}^{n/2}$ a sequence $G(a)$, and for each
$b \in \{0,1\}^{n/2}$ another sequence
$\overline{G}(b)$. These sequences are constructed such that
their similarity measure is
$\delta(G(a),\overline{G}(b)) = Y$, for some integer $Y$, if
$F(a\odot b) = 1$; and such that
$\delta(G(a),\overline{G}(b)) \ge Y+\rho$ otherwise, where $\rho > 0$
is the same for all $a$ and $b$.
In previous reductions from OV the construction of $G(a)$ and
$\overline{G}(b)$ was nearly trivial, but as in Section~\ref{sec:LCS}
it is now the main challenge when proving our reduction.

Once we have constructed $G(a)$ and $\overline{G}(b)$ for all
$a,b \in \{0,1\}^{n/2}$, we combine them into two sequences that will
be the output of the reduction. This step is almost identical to a
corresponding step in the reduction by Backurs and Indyk \cite{BI15}
from orthogonal vectors to Edit-Distance, and later by Abboud \emph{et
al.} \cite{ABV15} to LCS, and by Bringmann and
K\"{u}nnemann \cite{BK15} to sequence-problems with alignment
gadgets. If there exist $a_i,b_j \in \{0,1\}^{n/2}$
with $F(a_i\odot b_j) = 1$, then the goal is to make the structured alignment
match $G(a_i)$ and $\overline{G}(b_j)$. It is therefore tempting to apply
the alignment gadget to $A=(G(a_1),\dots,G(a_N),G(a_1),\dots,G(a_N))$
and $B=(\overline{G}(b_1),\dots,\overline{G}(b_N))$, where $N =
2^{n/2}$, since we can then freely map $B$ to a consecutive
subsequence of $A$ such that $\overline{G}(b_j)$ maps to $G(a_i)$. The
contribution to the similarity measure from this pair would then be
$Y$, but unfortunately we have no control over the rest of the
sequence. To finish the proof we therefore need one more idea:
the \emph{normalized vector gadget}. We put a dummy sequence next to
every subsequence in $A$, such that sequences in $B$ always have the
option of mapping to dummy sequences, thereby contributing $Y+\rho$ to
the similarity measure. 
(The alignment gadgets framework allows gadgets of
different \emph{types} that are implicitly padded to some specified
length. This is used to handle the technicality that, e.g., the length of
subsequences in $B$ no longer correctly match those in $A$.)
We finally get that if $F$ evaluates to
1 for a pair of inputs then $\delta(A,B) \le N(Y+\rho) - \rho$, and
otherwise $\delta(A,B) = N(Y+\rho)$. This completes the reduction.

\paragraph{Simulating branching programs with alignment gadgets.}
To construct $G(a)$ and $\overline{G}(b)$ we essentially implement
the proof of Savitch's theorem with alignment gadgets. 
The construction is almost the same as the one given in Section~\ref{sec:LCS}.
We again note that the input is fixed at this point, and that we must implement $G(a)$ and
$\overline{G}(b)$ independently. Let $P$ be the given branching
program of length $T$ and width $W$, and assume for simplicity that
$T=2^t+1$ for some $t \ge 0$. Since $a$ and $b$ are fixed, $G(a)$ and
$\overline{G}(b)$ represent the corresponding subsets of edges of $P$,
and the goal is to decide if there is a directed path from
$u_{\start}$ to $u_{\accept}$ in the resulting graph. Such a path must
go through some node in layer $2^{t-1}+1$, and if we can guess which
node then we can split $P$ into two branching programs of half the
size and evaluate each part recursively. We use an alignment gadget to
implement this decomposition, and we use the similarity measure to
make the correct guess. The construction is thus recursive, and it
works as follows.

At the $k$-th level of the recursion we are given two nodes $u \in L_i$ and
$v \in L_j$ with $j-i = 2^k$, and we want to decide if there is a
directed path from $u$ to $v$. We denote the sequence constructed in
this case for $a$ by $\RG^{k, u \to v}_X(a)$ and for $b$ by
$\RG^{k,u \to v}_Y(b)$. In particular $G(a) = \RG^{t,u_{\start} \to
u_{\accept}}_X(a)$ and $\overline{G}(b) = \RG^{t,u_{\start} \to
u_{\accept}}_Y(b)$. (Note that the notation differs slightly from the
notation used in Section~\ref{sec:LCS}. The parameters $X$ and $Y$ are used to determine
the types of the sequences, and thereby their implicit length, but we
ignore that aspect here.) For $k=0$, $u$ and $v$ are in neighboring
layers, and they are connected if and only if there is an edge from
$u$ to $v$. Whether this is the case depends on the variable
$x(i)=x_{f(i)}$, which either belongs to $a$ or $b$. The sequence with
no control over the edge is assigned a 1, and the other sequence is
assigned 1 if the edge is present and 0 otherwise. The contribution to
the similarity measure is thus 0 if the edge is present and 1
otherwise. (Such unit sequences are called \emph{coordinate values} in
the framework, and it requires a proof to show that the problem in
question supports them. Bringmann and
K\"{u}nnemann \cite{BK15} provided the relevant proofs for our purposes.)

For $k > 0$, we use an alignment gadget to pick the node that the
path from $u$ to $v$ passes through. Let $\ell = (i+j)/2$. Ideally we
would like to apply the alignment gadget to sequences of the form
\begin{align*}
A &~=~ \bigcup_{z \in L_\ell} (\RG^{k-1,u \to z}_X(a),\RG^{k-1,z \to
v}_X(a)) \\
B &~=~ \bigcup_{z \in L_\ell} (\RG^{k-1,u \to z}_Y(b),\RG^{k-1,z \to v}_Y(b))
\end{align*}
and then let the alignment match $(\RG^{k-1,u \to
z}_X(a),\RG^{k-1,z \to v}_X(a))$ and $(\RG^{k-1,u \to
z}_Y(b),\RG^{k-1,z \to v}_Y(b))$ for some $z \in L_\ell$. There
are, however, two obstacles that we must overcome to make this
work. 
First, the alignment may overlap sequences in undesired ways. In
Section~\ref{sec:LCS} we inserted special symbols of large weight
between the recursively defined sequences, such that these symbols had
to be matched correctly. We now introduce \emph{index gadgets} that
serve a similar function.
Second, and more
importantly, we must rule out that a different $z$ is picked for $A$
and $B$. In Section~\ref{sec:LCS} this issue was handled by
encapsulating the sequences with other, longer, 
trivial sequences (heavy symbols) to get the right behavior. We again
use index gadgets to get this effect. (We also use an \emph{OR gadget}
that is composed of two alignment gadgets to put things together.) 

Recall that $W$ is the width. Although it may look like a sequence
constructed at level $k$ is $O(W)$ times longer than a sequence
constructed at level $k-1$, the extensive padding actually causes a
blowup of a factor $O(W^2)$. The final sequences 
$G(a)$ and $\overline{G}(b)$ therefore have length $W^{O(\log
T)}$. Correctness of the reduction follows from the fact that the
similarity measure is lower for sequences that correspond to connected
vertices. We have therefore arrived at the following technical theorem, which implies
Theorem~\ref{reduction}. The conditions for the theorem were proved by 
Bringmann and K\"{u}nnemann \cite{BK15} for, e.g., Edit-Distance, LCS, and
Dynamic Time Warping.

\begin{theorem}
Let $\delta$ be a similarity measure over sequences in $\mathcal{I}$ admitting an alignment gadget of
size $f(n)=cn$ and coordinate values, and consider the problem of computing
$\delta(x,y)$ for sequences $x,y \in \mathcal{I}$ of length $N$.
SAT of nondeterministic Branching Programs over $n$ variables of width
$W$ and length $T$ can be reduced to an instance of this problem on sequences where $N = O( n \cdot T^{\log_2(12W^2c^3)} \cdot c^2 \cdot \log_2 W) =T^{O(\log W)} \cdot n$.

\end{theorem}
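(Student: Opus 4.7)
The plan is to follow the sketch immediately preceding the theorem: reduce BP-SAT to the satisfying-pair problem on branching programs by splitting the $n$ inputs into halves $X_1,X_2$, then for every partial assignment $a,b\in\{0,1\}^{n/2}$ construct sequences $G(a)$ and $\overline{G}(b)$ so that $\delta(G(a),\overline{G}(b))$ equals a fixed small value $Y$ if $a\odot b$ is accepted by $P$ and is at least $Y+\rho$ otherwise, and finally glue the $2^{n/2}$ gadgets into two master sequences using the normalized-vector-gadget trick of Backurs--Indyk, ABV, and BK.

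The main work is building $G(a)$ and $\overline{G}(b)$, which I would do by induction on $k=0,1,\ldots,\log_2 T$, mirroring Savitch's theorem. At level $k$, for $u\in L_i$, $v\in L_{i+2^k}$, the recursive gadgets $\RG^{k,u\to v}_X(a)$ and $\RG^{k,u\to v}_Y(b)$ encode reachability of $v$ from $u$ in the input-restricted subgraph. The base case $k=0$ concerns a single edge; it is handled by a coordinate value equal to $0$ on the side controlling the relevant input variable (iff the edge is live) and to $1$ on the other side, which the framework supplies as the second hypothesis of the theorem. For the inductive step I express $u\to v$ as $\exists z\in L_{(i+j)/2}:(u\to z)\wedge(z\to v)$: the conjunction is obtained by feeding the two level-$(k{-}1)$ gadgets into an alignment gadget of size $2c$, while the $W$-ary disjunction over middle-layer vertices $z$ is obtained from an $\OR$ gadget built from two alignment gadgets, exactly as in the BK framework. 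The hard part will be forcing the alignment to pick the \emph{same} $z$ on both sides and to rule out cheating alignments that mix different branches: in Section~\ref{sec:LCS} this was achieved with custom heavy symbols $f_k,g_k,z_k$, but in the present abstract setting I may only use index gadgets whose types encode $z$, relying on the additive slack of the nested alignment/$\OR$ gadgets to dominate the misalignment penalty. Verifying quantitatively that the honest same-$z$ alignment beats every cheating alignment by at least some $\rho_k>0$, while keeping the per-level size blowup under control, is the crux of the argument.

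Once $G(a)$ and $\overline{G}(b)$ are available for all $2^{n/2}$ partial assignments, I would apply the standard normalized-vector-gadget construction. One builds
\[ A = (G(a_1),D,G(a_2),D,\ldots,G(a_{2^{n/2}}),D,G(a_1),D,\ldots,G(a_{2^{n/2}}),D), \qquad B = (\overline{G}(b_1),\ldots,\overline{G}(b_{2^{n/2}})), \]
where $D$ is a dummy sequence whose structured-alignment cost against every $\overline{G}(b_j)$ is exactly $Y+\rho$, and combines the pieces with a top-level alignment gadget under appropriate type padding. Any structured alignment must map the $2^{n/2}$ pieces of $B$ into $2^{n/2}$ consecutive slots of $A$, and $\delta(A,B)$ drops below $2^{n/2}(Y+\rho)$ iff some pair $(a_i,b_j)$ is accepting. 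For the size bound, the recursion multiplies the sequence length per level by roughly $2c$ (for the concatenation), by $Wc$ (for the $\OR$ over $W$ branches), and by another $O(c)$ from the index-gadget padding and type adjustment; together with the quadratic-in-$W$ padding blowup flagged in the sketch this gives a per-level factor of at most $12W^2c^3$, so $\log_2 T$ levels yield $|G(a)|,|\overline{G}(b)|=O(T^{\log_2(12W^2c^3)}\cdot c\cdot \log_2 W)$, and the outer combination over the $n$-bit input contributes an additional $n\cdot c$ factor, giving exactly the claimed bound $N=O(n\cdot T^{\log_2(12W^2c^3)}\cdot c^2\cdot \log_2 W)$.
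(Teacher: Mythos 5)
Your high-level plan matches the paper's: a Savitch-style recursion on $k=0,\dots,\log_2 T$, alignment gadgets for the conjunction across a middle vertex, an $\OR$ gadget (built from two nested alignment gadgets) for the disjunction over the $W$ middle-layer vertices, index gadgets to synchronize the choice of $z$ on both sides, and the Backurs--Indyk normalization trick to combine the $2^{n/2}$ vector gadgets. However, you explicitly defer the heart of the argument: ``verifying quantitatively that the honest same-$z$ alignment beats every cheating alignment \dots is the crux.'' That crux is exactly what the paper's Lemma~\ref{RG} and Claims~\ref{PathG} and~\ref{RGclaim} establish, and a plan that names it without resolving it is not yet a proof of the theorem.

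Moreover, the specific mechanism you propose for synchronizing $z$ --- ``index gadgets whose types encode $z$'' --- cannot work inside this framework. The alignment-gadget definition requires all arguments $x_1,\dots,x_n$ fed to a $\GA_X$ call to come from a \emph{single} type class $\mathcal{I}_{\tau_X}$, so $\type(\IG^k(z))$ must be independent of $z$. What the paper actually does is encode $z$ in the \emph{content}: the level-$1$ index gadget is an alignment gadget over $2w+2$ coordinate values encoding $\bar z$ and $\neg\bar z$ (plus two coordinates distinguishing $\IG$'s from $\RG$'s), and at higher levels $\IG^k(z)$ is defined recursively so as to have exactly the same type as the reachability and path gadgets of that level. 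Crucially, the index gadget appears as the \emph{third argument} of an arity-$3$ alignment gadget forming the path gadget $\PG^{k,u\to v}_X(a,z)$ --- not as a separate object sitting beside an arity-$2$ alignment of the two $\RG$ pieces, as your ``alignment gadget of size $2c$'' suggests. It is this arity-$3$ structure that forces $\IG_X(z)$ and $\IG_Y(z')$ to be matched against each other and hence $z=z'$. Finally, your size accounting is off at the intermediate steps: the $\OR$ over $W$ branches costs $c^2 W^2$ (Lemma~\ref{ORG}), not $Wc$, and the outer combination over the vector gadgets contributes a factor $2^{n/2}$, not $n$; you reach $12W^2c^3$ by reverse-engineering the stated answer rather than deriving it from the construction.
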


%
%
%
%

\subsection{Definitions}

We start by defining the building blocks from which we will implement our reduction.
We will prove a generic reduction from BP SAT to a generic problem of computing a similarity measure $\delta$ of two sequences, for any $\delta$ that has a certain property.
This property will be the ability to implement \emph{alignment gadgets} which we define below.
The advantage of this generic proof over a direct proof is that we can reduce the amount of case-analysis that is required in the proofs: many steps in the reduction will require similar functionalities, and this framework allows us to only prove this functionality once.
We will borrow the notation of \cite{BK15} and reintroduce their \emph{alignment gadgets}.

\medskip

Let $\delta: \mathcal{I} \times \mathcal{I} \to \mathbb{N}_0$ be a similarity measure for a pair of inputs from $\mathcal{I}$.
As a running example it is convenient to think of $\mathcal{I}$ as being all binary sequences of length $n$ and let $\delta$ be the Edit Distance between two sequences.
For a sequence $x \in \mathcal{I}$ we define its \emph{type} to be a tuple encoding its length and the sum of its entries, i.e. $\type(x) := (|x|, \sum_i {x[i]})$. 
For example, if $x$ is a binary sequence, its type encodes its length $|x|$ and the number of ones in $x$.
For a type $t \in \mathbb{N} \times \mathbb{N}_0$ we let $\mathcal{I}_t := \{ x \in \mathcal{I} \mid \type(x)=t \}$ be the set of all inputs of type $t$.
We remark that the exact definition of \emph{type} will not be crucial to the framework and for different measures it might be convenient to use a different definition. 

\paragraph{Alignments } Let $n \geq m$. A (partial) alignment is a set $A = \{ (i_1,j_1), \ldots, (i_k,j_k)\}$ with $0 \leq k \leq m$ such that $1 \leq i_1 < \ldots < i_k \leq n$ and $1 \leq j_1 < \ldots < j_k \leq m$. We say that a pair $(i,j) \in A$ are \emph{aligned}. 
Let $\mathcal{A}_{n,m}$ be the set of all partial alignments with respect to $n,m$.
A partial alignment of the form $\{ (\Delta+1,1),\ldots,(\Delta+m,m) \}$ for some $0\leq \Delta \leq n-m$ will be called a \emph{structured alignment}, and we will denote the set of all such alignments by $\mathcal{S}_{n,m} \subseteq \mathcal{A}_{n,m}$.

Consider any sequences  $x_1,\ldots,x_n\in \mathcal{I}$ and $y_1,\ldots, y_m \in \mathcal{I}$.
Let $Q = \max\limits_{i,j} \delta(x_i,y_j)$ be the maximum distance between a pair of sequences under our measure $\delta$.
We define the \emph{cost} of an alignment $A \in \mathcal{A}_{n,m}$ (with respect to our sequences) by

\[
\cost(A) := \sum\limits_{(i,j)\in A} \delta(x_i,y_j) + (m-|A|) \cdot Q
\]
so that any $j \in [m]$ that is not aligned in $A$ will incur the maximal cost $Q$.

 \paragraph{Alignment Gadget } Intuitively, an alignment gadget is a way to combine sequences $x_1,\ldots,x_n\in \mathcal{I}$ into one sequence $X = \GA(x_1,\ldots,x_n)$, and at the same time, to combine sequences $y_1,\ldots, y_m \in \mathcal{I}$ into one sequence $Y = \GA(y_1,\ldots,y_n)$, so that: $\delta(X,Y)$ is essentially equal to $\delta(A)$ for the optimal \emph{structured} alignment $A \in \mathcal{S}_{n,m}$ with respect to our sequences. 
We will be interested in showing and deriving consequences of the existence of efficient implementations of such gadgets for a similarity measure $\delta$.
The formal definition introduces additional technicalities which simplify the proofs of existence of these gadgets considerably.

\begin{definition}[Alignment Gadgets]
The similarity measure $\delta$ admits an \emph{alignment gadget} of size $f(n)$, if the following conditions hold:
Given instances $x_1,\ldots,x_n \in \mathcal{I}_{\tau_X}, y_1,\ldots,y_m \in \mathcal{I}_{\tau_Y}$ with $m \leq n$ and types $\tau_X=(\ell_X, s_X), \tau_Y=(\ell_Y, s_Y)$, we can construct new instances $x= \GA_X^{m,\tau_Y}(x_1,\ldots,x_n)$ and $y=\GA_Y^{n,\tau_X}(y_1,\ldots,y_m)$ and $C \in \mathbb{Z}$ such that:
\begin{itemize}
\item $\min\limits_{A \in \mathcal{A}_{n,m}} \cost(A) \leq \delta(x,y) -C \leq \min_{A \in \mathcal{S}_{n,m}} \cost(A)$.  
\item $\type(x)$ and $\type(y)$ only depend on $n,m,\tau_X$, and $\tau_Y$.
\item $|x|,|y| \leq f(n) \cdot (\ell_X+\ell_Y)$.
\item This construction runs in $O(n(\ell_X+\ell_Y))$ time.
\end{itemize}
\end{definition}

The second ingredient that is required in order to implement a reduction from SAT problems to computing similarity measures are \emph{coordinate values} which are atomic sequences of constant length that we will combine in various ways with the alignment gadgets. 

\begin{definition}[Coordinate Values]
The similarity measure $\delta$ admits \emph{coordinate values} if there are instances $\1_X,\0_X,\1_Y,\0_Y \in \mathcal{I}$ such that:
$$
\delta(\1_X,\1_Y) > \delta(\0_X,\1_Y) = \delta(\1_X,\0_Y) = \delta(\0_X,\0_Y)
$$
and $\type(\1_X) = \type(\0_X)$ and $\type(\1_Y) = \type(\0_Y)$.
\end{definition}

\subsection{OR Gadgets}

In our reduction from BP SAT, it will be convenient to work with an OR gadget, besides an alignment gadget.
Next, we define this gadget and then prove that any similarity measure $\delta$ that can implement alignment gadgets will also be able to implement OR gadgets (with some loss of efficiency).

\begin{definition}[OR Gadgets]
The similarity measure $\delta$ admits \emph{OR gadgets} of size $f(n)$, if the following conditions hold:
Given instances $x_1,\ldots,x_n \in \mathcal{I}_{\tau_X}, y_1,\ldots,y_n \in \mathcal{I}_{\tau_Y}$ and types $\tau_X=(\ell_X, s_X), \tau_Y=(\ell_Y, s_Y)$, we can construct new instances 
$$x= \OR_X^{\tau_Y}(x_1,\ldots,x_n)$$
$$y=\OR_Y^{\tau_X}(y_1,\ldots,y_n)$$ 
and $C \in \mathbb{Z}$ such that:
\begin{itemize}
\item $\delta(x,y) = C + \min_{i,j \in [n]} \delta(x_i,y_j)$
\item $\type(x)$ and $\type(y)$ only depend on $n,\tau_X$, and $\tau_Y$.
\item $|x|,|y| \leq f(n) \cdot (\ell_X+\ell_Y) $.
\item This construction runs in $O(n(\ell_X+\ell_Y))$ time.
\end{itemize}
\end{definition}

By combining two alignment gadgets in a careful way, we obtain an OR gadget. Note that there is a quadratic blow up in the size of the gadgets in the following lemma.
For this reason, we will only use OR gadgets when working with a small number of sequences.

\begin{lemma}
\label{ORG}
Any similarity measure $\delta$ that admits alignment gadgets of size $f(n)=c \cdot n$, also admits OR gadgets of size $f'(n)=c^2 n^2$.
\end{lemma}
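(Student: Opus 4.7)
My plan is to apply the alignment gadget twice in a nested way, exploiting the fact that the alignment-gadget guarantee collapses to an exact equality whenever $m=1$. Indeed, $\mathcal{S}_{n,1}=\{\{(\Delta+1,1)\}:0\le\Delta\le n-1\}$, and the only extra element of $\mathcal{A}_{n,1}$ is the empty alignment, which has cost $Q=\max_{i,j}\delta(x_i,y_j)$ and therefore cannot beat any non-empty alignment. Thus an $m=1$ application of the gadget satisfies $\delta(x,y)=C+\min_i\delta(x_i,y_1)$ exactly, i.e.\ it is already an exact OR over its X-side.

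For the inner level, I would apply the gadget with $(x_1,\ldots,x_n)$ on the X-side and the single sequence $y_j$ on the Y-side. The X-side output $A:=\GA_X^{1,\tau_Y}(x_1,\ldots,x_n)$ is the same for every $j$, while the Y-side outputs $B_j:=\GA_Y^{n,\tau_X}(y_j)$ all share a common type (by the gadget's type property), giving $\delta(A,B_j)=C_1+\min_i\delta(x_i,y_j)$ for each $j$. For the outer level, I would apply the gadget once more, now with $(B_1,\ldots,B_n)$ on the X-side and the single sequence $A$ on the Y-side, obtaining $P:=\GA_X^{1,\type(A)}(B_1,\ldots,B_n)$ and $Q:=\GA_Y^{n,\type(B_1)}(A)$ with $\delta(P,Q)=C_2+\min_j\delta(B_j,A)$. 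The similarity measures to which the framework applies (Edit Distance, LCS, Dynamic Time Warping, and the others treated by Bringmann and K\"unnemann) are symmetric, so $\delta(B_j,A)=\delta(A,B_j)$ and the two levels compose to $\delta(P,Q)=(C_1+C_2)+\min_{i,j}\delta(x_i,y_j)$. I then set $\OR_X(x_1,\ldots,x_n):=Q$ (built from the $x_i$'s through $A$) and $\OR_Y(y_1,\ldots,y_n):=P$ (built from the $y_j$'s through the $B_j$'s), and take $C:=C_1+C_2$.

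The quantitative conditions then drop out: $\type(P)$ and $\type(Q)$ depend only on $n,\tau_X,\tau_Y$ by two applications of the alignment-gadget type property, and iterating the size bound $|x|,|y|\le cn\cdot(\ell_X+\ell_Y)$ gives $|A|,|B_j|\le cn(\ell_X+\ell_Y)$ and then $|P|,|Q|\le cn\cdot 2cn(\ell_X+\ell_Y)=O(c^2n^2)(\ell_X+\ell_Y)$, matching $f'(n)=c^2n^2$ up to an absorbed constant; the construction time is polynomial in these lengths by the same iteration. The main conceptual obstacle, and the reason for the quadratic rather than linear blow-up, is that the X and Y roles become swapped between the two applications -- the inner $\GA_Y$ outputs feed the outer X-side and the inner $\GA_X$ output feeds the outer Y-side -- which is precisely where symmetry of $\delta$ is invoked.
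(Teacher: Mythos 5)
Your proof takes essentially the same approach as the paper's: two nested applications of the alignment gadget, each with $m=1$, exploiting the fact that the two-sided bound in the definition collapses to an exact equality $\delta(x,y) = C + \min_i\delta(x_i,y_1)$ when the $Y$ side is a single sequence. You are, in fact, slightly more careful than the paper's stated construction in the outer application: you place the $n$ inner Y-side outputs $B_j$ on the X-side and the single inner X-side output $A$ on the Y-side (so the framework's $n\ge m$ requirement is respected) and then invoke symmetry of $\delta$ to pass from $\delta(P,Q)$ to the required $\delta(\OR_X,\OR_Y)=\delta(Q,P)$ and from $\delta(B_j,A)$ to $\delta(A,B_j)$, whereas the paper writes the outer gadgets as $x := \GA_X^{t'_Y,n}(x')$ and $y := \GA_Y^{t'_X,1}(y'_1,\ldots,y'_n)$ — a $\GA_X$ call with one X-sequence and $m=n$ — which does not literally satisfy $n\ge m$ and leaves the role swap and symmetry assumption implicit.
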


\begin{proof}
Given instances $x_1,\ldots,x_n \in \mathcal{I}_{\tau_X}, y_1,\ldots,y_n \in \mathcal{I}_{\tau_Y}$ and types $\tau_X=(\ell_X, s_X), \tau_Y=(\ell_Y, s_Y)$, we will construct OR gadgets as follows.
First, construct $x' := \GA_X^{\tau_Y,1}(x_1,\ldots,x_n)$ and for all $j \in [n]$ construct $y'_j := \GA_Y^{\tau_X,n}(y_j)$.
Let $t'_X = \type(x')$ and $t'_Y = \type(y'_j)$ (and note that it is independent of $j$).
Then, our final gadgets are $x := \GA_X^{t'_Y,n}(x')$ and $y := \GA_Y^{t'_X,1}(y'_1,\ldots,y'_n)$.

First, note that $\type(x)$ and $\type(y)$ only depend on $n,\tau_X$ and $\tau_Y$.
Then, let us bound the lengths of $x,y$. We know that $|x'|,|y'_j| \leq c \cdot n \cdot (\ell_X+\ell_Y)$, and therefore $|x|,|y| \leq c \cdot n \cdot (|x'|+|y'_j|)  \leq c^2 \cdot n^2 \cdot (\ell_X+\ell_Y) $.

Finally, we prove the correctness. By definition of alignment gadgets, for all $j \in [n]$ we have $\delta(x',y'_j) = C + \min_{i \in [n]} \delta(x_i,y_j)$, for some fixed integer $C$.
Moreover, for some integer $C'$, we have that $\delta(x,y) = C' +  \min_{j \in [n]} \delta(x',y'_j)$ which is equal to $C' + C +  \min_{j \in [n]}  \min_{i \in [n]} \delta(x_i,y_j)$.
\end{proof}


\subsection{Similarity Measures with Alignment Gadgets}

In their paper, Bringmann and K\"{u}nnemann construct alignment gadgets for a few fundamental similarity measures. Combining these gadgets with our main theorem implies significantly stronger lower bounds for computing these measures. We list these measures and the corresponding sizes of alignments gadgets below.

\paragraph{Edit Distance. } The Edit Distance between two sequences $x,y$ is the minimum number of insertions, deletions, and substitutions that is required to transform one sequence to the other. The most basic case is when the sequences are binary, that is the set of instances $\mathcal{I}$ is the set of binary sequences $\{0,1\}^n$.

\begin{lemma}[\cite{BK15}]
There is a constant $c \leq 10^3$ such that Edit Distance similarity measure over binary sequences admits an alignment gadget of size $f(n) \leq c \cdot n$.
\end{lemma}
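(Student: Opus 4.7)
The plan is to construct the alignment gadget by concatenating $x_1,\ldots,x_n$ and $y_1,\ldots,y_m$ separated by long binary guard blocks, calibrated so that any optimal edit script between the resulting $x$ and $y$ is forced to respect block boundaries and correspond to a partial alignment in $\mathcal{A}_{n,m}$, while any particular structured alignment in $\mathcal{S}_{n,m}$ can be realized by some explicit edit script. Concretely, fix a long guard $G = 0^L$ with $L = \Theta(\ell_X + \ell_Y)$ chosen so that $L$ strictly exceeds $Q + \max_{i,j}(|x_i|+|y_j|)$, where $Q = \max_{i,j}\delta(x_i,y_j)$, and set
\[
x \;=\; 1^P \cdot G \cdot x_1 \cdot G \cdot x_2 \cdots G \cdot x_n \cdot G \cdot 1^P,
\qquad
y \;=\; G \cdot y_1 \cdot G \cdot y_2 \cdots G \cdot y_m \cdot G,
\]
with $P$ large enough that any offset $\Delta \in \{0,1,\ldots,n-m\}$ used by a structured alignment leaves room for padding on both sides. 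The types $\type(x),\type(y)$ then depend only on $n,m,\tau_X,\tau_Y$, and $|x|,|y| = O(n(\ell_X+\ell_Y))$ with small constants absorbed into $c \le 10^3$.

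The upper bound $\delta(x,y) \le C + \min_{A\in\mathcal{S}_{n,m}} \cost(A)$ comes from exhibiting, for each offset $\Delta$, an explicit edit script: delete the prefix of $x$ up to and including the $(\Delta{+}1)$-th guard plus $x_1,\ldots,x_\Delta$ (a fixed-length block whose cost contributes to the constant $C$), match each surviving guard of $x$ to the corresponding guard of $y$ for free, edit the aligned blocks $x_{\Delta+j}$ into $y_j$ at cost $\delta(x_{\Delta+j},y_j)$, and delete the tail of $x$. Summing and comparing against the definition of $\cost$ on a structured alignment (where every $j \in [m]$ is aligned, so the $Q$-penalty vanishes) yields the inequality with a $\Delta$-independent constant $C$.

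The lower bound $\delta(x,y) \ge C + \min_{A \in \mathcal{A}_{n,m}} \cost(A)$ is where the main work lies. Given an optimal edit script, I would read off a partial alignment $A$ by tracking which of the $n{+}1$ guards of $x$ each of the $m{+}1$ guards of $y$ is matched to; since $L$ is larger than the total length of any single $x_i,y_j$ plus $Q$, the optimal script cannot profitably overlap a $y$-guard with an $x$-block or skip an $x$-guard without paying more than $Q$, so the guard correspondence is monotone and induces a candidate alignment of blocks. Each aligned pair contributes at least $\delta(x_{i_k},y_{j_k})$ to the script cost by the triangle inequality between the forced cut-points, and each unaligned $y_j$ (i.e., $j \notin$ second coordinates of $A$) forces the script to destroy $y_j$ entirely against guard material at cost $\ge Q$, matching the $(m-|A|)\cdot Q$ term in $\cost(A)$. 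The remaining constant edit cost of reconciling the padding and the outermost guards is exactly $C$.

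The main obstacle is the calibration used in the lower-bound step: I must rule out edit scripts that try to save cost by matching a $y$-guard partially against two $x$-guards or by cross-aligning a $y_j$ against $x_i \cdot G \cdot x_{i+1}$ boundaries. The standard way to close this is an amortization argument: any such deviation from the monotone guard-to-guard matching induces at least one "guard crossing" that costs a constant fraction of $L$, which by the choice of $L$ exceeds any conceivable saving from realigning the inner blocks. Carrying this out carefully — and bundling the coordinate-value construction, which is a short case analysis on two- or three-bit templates $\1_X,\0_X,\1_Y,\0_Y$ ensuring the distinguishing inequality on $\delta$ — gives the constant $c \le 10^3$ in the stated bound.
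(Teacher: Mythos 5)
This lemma is quoted from \cite{BK15}; the present paper gives no internal proof, so I evaluate your sketch on its own. Your high-level plan---concatenate with separator blocks, calibrate the separator length against the block lengths so the optimal edit script is forced to respect block boundaries, then read off a partial alignment from the guard correspondence---is the right idea and is in the same spirit as Bringmann and K\"unnemann's construction. The upper-bound direction (exhibit, for each offset $\Delta$, an explicit edit script and observe that the residual cost $C$ depends only on $n,m,\tau_X,\tau_Y$) is fine, and the type and size bookkeeping checks out.

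The genuine gap is in the lower-bound direction, and it is a real one, not a formality. Your guard $G=0^L$ is not distinguishable from the content: the inner sequences $x_i,y_j$ are arbitrary binary strings of fixed type $\tau_X,\tau_Y$ and may contain long runs of zeros. Your pivotal calibration claim---\emph{``any deviation from the monotone guard-to-guard matching induces at least one guard crossing that costs a constant fraction of $L$''}---fails in that case. Concretely, sliding a $y$-guard $0^L$ so that it straddles the region $G\cdot x_i\cdot G = 0^L\,x_i\,0^L$ in $x$ matches $0$s to $0$s essentially for free, and the incurred cost is only on the order of $s_X$ (the number of $1$s in $x_i$) plus a length discrepancy, which can be far below $L$. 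Because the size constraint $|x|,|y|\le cn(\ell_X+\ell_Y)$ forces $L=O(\ell_X+\ell_Y)$, you cannot escape this by making $L$ overwhelmingly large. The $1^P$ shock absorbers have the analogous leak: if $y_1$ or $y_m$ begins or ends with $1$s, these can match the $1^P$ material and perturb the supposedly fixed constant $C$. Closing this requires separator blocks that are provably distinguishable from the type-constrained content (BK15 use structured separators built from $0$- and $1$-blocks of carefully chosen lengths, plus a case analysis of how the optimal traversal allocates matches between separator and content, crucially exploiting the fixed type $(\ell_X,s_X),(\ell_Y,s_Y)$). That is exactly the step you flag as ``the main obstacle''; the amortization argument as you state it does not go through, and this is where the substance of the lemma lies.
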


It is likely that smaller alignment gadgets can be obtain if the alphabet size is larger.

\paragraph{Longest Common Subsequence. } In Section~\ref{sec:LCS} we showed a direct reduction from BP-SAT to LCS. We will use the alignment gadgets framework in order to obtain the same reduction but to sequences over \emph{binary} inputs. That is, the surprising expressibility of LCS that is exhibited by our proofs is already present when we only have two distinct letters to match.

\begin{lemma}[\cite{BK15}]
There is a constant $c \leq 10^3$ such that Longest Common Subsequence similarity measure over binary sequences admits an alignment gadget of size $f(n) \leq c \cdot n$.
\end{lemma}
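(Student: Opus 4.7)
My plan is to construct an alignment gadget for binary LCS in the Bringmann-K\"unnemann style, working with the \emph{cost} version $\delta(x,y) := |x|+|y|-2\cdot\LCS(x,y)$ so that smaller values correspond to greater similarity (this makes LCS fit the minimization-style framework the definition is phrased in). First I would fix constant-size coordinate values, for instance $\1_X=\1_Y=1101$ and $\0_X=\0_Y=1001$, and verify directly that $\LCS(\1_X,\1_Y)=4$ while the three mismatched pairings give $\LCS\le 3$, which yields the strict inequality required by the coordinate-values definition together with $\type(\1_X)=\type(\0_X)$ and $\type(\1_Y)=\type(\0_Y)$.

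Next I would construct the alignment gadget itself. Given $x_1,\dots,x_n \in \mathcal{I}_{\tau_X}$ and $y_1,\dots,y_m \in \mathcal{I}_{\tau_Y}$, I would set
\[
x \;:=\; g_L \cdot x_1 \cdot g_S \cdot x_2 \cdot g_S \cdots g_S \cdot x_n \cdot g_R,
\qquad
y \;:=\; h_L \cdot y_1 \cdot h_S \cdot y_2 \cdot h_S \cdots h_S \cdot y_m \cdot h_R,
\]
where the guards are long binary strings of the form $0^p 1^q$ (with $p,q = \Theta(\ell_X+\ell_Y)$), and where $g_L,g_R,h_L,h_R$ have slightly different block lengths than $g_S,h_S$ so that the boundary guards of $y$ are forced to align with the boundary guards of $x$. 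The guard lengths are picked so that (i) each matched guard pair contributes a fixed known amount to $\LCS(x,y)$, and (ii) skipping or misaligning even a single guard loses more than could possibly be recovered by freely aligning the coordinate-value-level inputs.

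For correctness I would then verify the two inequalities of the definition. For the upper bound on $\delta(x,y)-C$: any structured alignment $\{(\Delta+1,1),\dots,(\Delta+m,m)\} \in \mathcal{S}_{n,m}$ can be realized inside $x,y$ by matching the left boundary guards at the correct offset $\Delta$, matching the middle guards in lockstep, and then taking the optimal LCS within each pair $(x_{\Delta+j},y_j)$; this gives $\delta(x,y)-C \le \cost(A)$ for a constant $C$ equal to the combined guard contribution. For the lower bound: because a single skipped guard costs strictly more than the total possible LCS contribution of all coordinate values combined, any optimal common subsequence must match the guards in a monotone one-to-one way, which reads off a partial alignment $A\in\mathcal{A}_{n,m}$ of cost at most $\delta(x,y)-C$. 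Type invariance is immediate since every $x_i$ has type $\tau_X$ and the guards have fixed length, and the size bound $|x|,|y| \le c\cdot n\cdot(\ell_X+\ell_Y)$ follows from a direct count.

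The main obstacle, and what makes the absolute constant $c$ sit around $10^3$ rather than being small, is the absence of a fresh separator symbol: over any larger alphabet one could pick a guard symbol that never appears inside the $x_i$'s or $y_j$'s, trivializing the ``guards must be matched'' argument. Over $\{0,1\}$, the guards can only ``repel'' the inputs through sheer length, so $p$ and $q$ must dominate the total $0$-mass and $1$-mass that the $x_i$'s and $y_j$'s together contribute; simultaneously $p+q$ must remain $O(\ell_X+\ell_Y)$ to keep the gadget linear in the input length. Threading these two requirements through the LCS calculation — in particular, checking that no creative unstructured matching (partial guard overlaps, splitting a $y_j$ across two $x_i$'s, absorbing a guard into a coordinate value) can beat a structured one — is the delicate part of the proof.
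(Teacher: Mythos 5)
This lemma is cited from Bringmann and K\"unnemann~\cite{BK15}; the paper at hand does not reprove it, so there is no in-paper proof to compare against. Evaluating your blind attempt on its own terms: the guard-based construction you sketch for the alignment gadget itself is the right general approach (and is in the spirit of what BK15 actually do for LCS over a binary alphabet, where ``heavy'' guard blocks of $0$'s and $1$'s substitute for a fresh separator symbol). However, your proposed coordinate values are concretely wrong, and the error is in a direction that cannot be patched by tweaking lengths.

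Working in the cost version $\delta(x,y) = |x|+|y|-2\,\LCS(x,y)$ as you propose, with $\1_X=\1_Y=1101$ and $\0_X=\0_Y=1001$ one computes $\LCS(\1_X,\1_Y)=4$, $\LCS(\0_X,\0_Y)=4$, and $\LCS(\0_X,\1_Y)=\LCS(\1_X,\0_Y)=3$, hence $\delta(\1_X,\1_Y)=0$, $\delta(\0_X,\0_Y)=0$, and $\delta(\0_X,\1_Y)=\delta(\1_X,\0_Y)=2$. The definition of coordinate values requires $\delta(\1_X,\1_Y) > \delta(\0_X,\1_Y) = \delta(\1_X,\0_Y) = \delta(\0_X,\0_Y)$, i.e.\ the $(\1_X,\1_Y)$ pair must be the \emph{unique most dissimilar} pair and the other three must be tied; your choice makes it the \emph{most similar} pair, and moreover the remaining three costs are $2,2,0$, not all equal. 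Setting $\1_X=\1_Y$ is self-defeating: identical strings maximize LCS and minimize the cost, so $(\1_X,\1_Y)$ can never be the unique expensive pair under such a choice. One needs $\1_X$ and $\1_Y$ to be maximally dissimilar within the fixed types, while $\0_X$ (resp.\ $\0_Y$) is crafted to be equally similar to both $\1_Y$ and $\0_Y$ (resp.\ to both $\1_X$ and $\0_X$). Finding such a quadruple over $\{0,1\}$ with the required type equalities $\type(\1_X)=\type(\0_X)$ and $\type(\1_Y)=\type(\0_Y)$ is exactly the delicate combinatorial step that BK15 carry out, and your sketch skips it. The rest of the proposal (guard lengths, the upper bound over $\mathcal{S}_{n,m}$, the lower bound reading off an $A\in\mathcal{A}_{n,m}$, and the linear size bound) is plausible in outline but is asserted rather than verified: in particular, the claim that ``no creative unstructured matching can beat a structured one'' is the crux of the correctness argument for the alignment-gadget inequality $\min_{A\in\mathcal{A}_{n,m}}\cost(A)\le\delta(x,y)-C$, and you flag it yourself as the delicate part without supplying it.
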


\paragraph{Dynamic Time Warping Distance. } The DTWD over curves in various metric spaces is of great practical interest. We are able to show lower bounds even in the special case of one-dimensional curves.
Let $x,y \in \mathbb{Z}^n$ be two sequences of $n$ integers. The DTWD $\delta_{DTWD}(x,y)$ of the two curves is the minimum cost of a joint traversal of both curves. 
A \emph{traversal} of two curves is a process that places a marker at the beginning of each curve and during each step one or both markers are moved forward one point, until the end of both curves is reached. Each step aligns two points, one from each curve. 
The cost of a traversal is the sum of distances between all aligned points.
In our case, the distance is simply the absolute value of the difference between the aligned integers.

\begin{lemma}[\cite{BK15}]
There is a constant $c \leq 10$ such that Dynamic Time Warping Distance similarity measure over one dimensional curves admits an alignment gadget of size $f(n) \leq c \cdot n$.
\end{lemma}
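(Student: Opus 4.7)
The plan is to construct a DTW alignment gadget by concatenating the input sub-curves with zero-valued buffer blocks of carefully tuned length, so that any optimal warping of the two resulting one-dimensional curves is forced to commit to a shift $\Delta$ and align the sub-gadgets of $y$ in lock-step against a consecutive run of sub-gadgets of $x$, exactly mirroring a structured alignment in $\mathcal{S}_{n,m}$.

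More precisely, since all $x_i$ share type $\tau_X=(\ell_X,s_X)$ and all $y_j$ share type $\tau_Y=(\ell_Y,s_Y)$, I would pad each sub-curve symmetrically as $\tilde{x}_i := 0^{\alpha}\,x_i\,0^{\alpha}$ and $\tilde{y}_j := 0^{\beta}\,y_j\,0^{\beta}$ for modest multiples $\alpha,\beta$ of $\ell_X+\ell_Y$, and then define
\[
x ~:=~ 0^{P}\,\tilde{x}_1 \tilde{x}_2 \cdots \tilde{x}_n\,0^{P},\qquad y ~:=~ \tilde{y}_1 \tilde{y}_2 \cdots \tilde{y}_m,
\]
where $P$ is an outer zero-pad of length proportional to $n(\ell_X+\ell_Y)$ that absorbs the shift $\Delta$ at controlled cost. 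Since only zeros are added, $\type(x)$ and $\type(y)$ depend only on $n,m,\tau_X,\tau_Y$, as required. The upper bound $\delta(x,y)-C \leq \min_{A\in\mathcal{S}_{n,m}}\cost(A)$ is delivered by an explicit warping for each structured alignment of shift $\Delta$: park the $y$-marker on its leading zero-pad while the $x$-marker sweeps $\tilde{x}_1,\ldots,\tilde{x}_\Delta$ (paying $s_X$ per sub-gadget, as the non-zero entries of $x_i$ are matched against zeros); then run $\tilde{y}_j$ against $\tilde{x}_{\Delta+j}$ in parallel for $j=1,\dots,m$, contributing $\delta(x_{\Delta+j},y_j)$ plus a fixed padding offset per pair; and finally absorb the tail $\tilde{x}_{\Delta+m+1},\dots,\tilde{x}_n$ into the trailing zero-pad of $y$. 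Collecting the deterministic portions into $C$ yields the required bound.

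The main obstacle — and the technical heart of the proof — is the matching lower bound $\min_{A\in\mathcal{A}_{n,m}}\cost(A) \leq \delta(x,y)-C$: I must parse an arbitrary optimal DTW warping into a partial alignment of no larger cost. The plan is a cleaning/exchange argument showing that any optimal warping can be modified at no cost into one that respects sub-gadget boundaries, i.e., each $\tilde{y}_j$ is either warped entirely inside a single $\tilde{x}_{i(j)}$ (contributing exactly $\delta(x_{i(j)},y_j)$) or warped entirely against zero-buffer (contributing at least $Q=\max_{i,j}\delta(x_i,y_j)$, certifying that $j$ may be declared unaligned). The buffers $\alpha,\beta$ must be chosen large enough that splitting a single $\tilde{y}_j$ across two sub-gadget boundaries pairs a long zero-buffer against the non-zero mass of $y_j$ at strictly greater cost than either clean alternative. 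Once cleanness holds, the monotonicity of DTW warpings forces $j\mapsto i(j)$ to be order-preserving and hence a valid element of $\mathcal{A}_{n,m}$, and its $\cost$ matches $\delta(x,y)-C$ exactly.

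Finally, the size accounting: with $\alpha,\beta = O(\ell_X+\ell_Y)$ and $P = O(n(\ell_X+\ell_Y))$, each $|\tilde{x}_i|,|\tilde{y}_j| = O(\ell_X+\ell_Y)$ and the top-level lengths satisfy $|x|,|y| \leq c\cdot n\cdot (\ell_X+\ell_Y)$; a parsimonious choice using only two short symmetric zero-buffers per sub-gadget, together with an outer pad sized just beyond $n\,s_X$, should keep $c\leq 10$. The construction is clearly computable in time linear in the output length. The hard part will be calibrating $\alpha,\beta,P$ precisely enough to simultaneously (i) kill every non-structured warping via the cleaning argument, (ii) make $Q$ coincide with the per-unaligned cost that the traversal naturally pays, and (iii) do all of this while keeping the leading constant under $10$.
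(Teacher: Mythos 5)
First, a point of reference: the paper does not prove this lemma at all --- it is imported verbatim from Bringmann and K\"{u}nnemann \cite{BK15}, so there is no in-paper proof to compare against; your proposal is in effect an attempt to reconstruct their DTW alignment gadget from scratch, and it has to be judged on its own terms.

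On those terms there is a genuine gap, and it sits exactly where you locate ``the technical heart'': the all-zero buffers do not exert the forcing you need. In DTW, a long zero block of $x$ can be traversed at zero cost by parking the $y$-marker on any zero of $\tilde{y}_j$'s own padding (each step aligns $(0,0)$), so your cleaning claim --- that splitting $\tilde{y}_j$ across a sub-gadget boundary pairs a long zero buffer against the nonzero mass of $y_j$ and is therefore expensive --- is false as a mechanism: the warping can resynchronize on $\tilde{y}_j$'s zero padding and cross boundaries for free, and nothing prevents it from splicing the content of $y_j$ against pieces of two different $x_i$'s. Worse, the left inequality $\min_{A \in \mathcal{A}_{n,m}} \cost(A) \leq \delta(x,y) - C$ requires that whenever the optimal warping effectively discards a $y_j$ (so that the extracted alignment must leave $j$ unaligned and be charged $Q = \max_{i,j}\delta(x_i,y_j)$), the warping itself has paid at least $Q$ beyond its share of $C$. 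With zero buffers, discarding $y_j$ costs the warping only about $s_Y$ (each nonzero entry of $y_j$ matched once to a buffer zero), plus at most one extra skipped $x$-block of cost about $s_X$; your construction never relates $Q$ to $s_X + s_Y$, and there is no reason the chosen buffer \emph{lengths} could fix this, since the issue is about the \emph{values} of the separators relative to the coordinate values. This is precisely the calibration that the actual \cite{BK15} DTW gadget performs with non-zero guard blocks, and it is also where the concrete constant $c \leq 10$ comes from; asserting that a ``parsimonious choice'' keeps $c \leq 10$ without that calibration is not a proof. So the upper-bound direction of your sketch is fine, but the lower-bound direction --- the part the lemma is really about --- does not go through as described.
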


\section{The Full Reduction}
\label{sec:GAproof}

We are now ready to prove our main theorem, from which Theorem~\ref{reduction} follows, by the Lemmas in the previous section.

%

\begin{theorem}[Main]
\label{fullreduction}
Let $\delta$ be a similarity measure over sequences in $\mathcal{I}$ admitting an alignment gadget of
size $f(n)=cn$ and coordinate values, and consider the problem of computing
$\delta(x,y)$ for sequences $x,y \in \mathcal{I}$ of length $N$.
SAT of nondeterministic Branching Programs over $n$ variables of width
$W$ and length $T$ can be reduced to an instance of this problem on sequences where $N = O( n \cdot T^{\log_2(12W^2c^3)} \cdot c^2 \cdot \log_2 W) =T^{O(\log W)} \cdot n$.
\end{theorem}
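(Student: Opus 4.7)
The plan is to first reduce BP-SAT to the satisfying pair formulation already introduced in Section~\ref{sec:prelim}: split the $n$ variables into $X_1$ and $X_2$ of size $n/2$ and ask whether there exist $a,b \in \{0,1\}^{n/2}$ with $F(a \odot b) = 1$. For each such $a$ and $b$ I will build sequences $G(a)$ and $\overline{G}(b)$ in $\mathcal{I}$ such that $\delta(G(a),\overline{G}(b))$ equals a fixed value $Y$ when $F(a \odot b)=1$ and is at least $Y + \rho$ otherwise, for a fixed $\rho > 0$. Once these are built, I will combine them into the final sequences exactly as in the previous SETH reductions of \cite{BI15,ABV15,BK15}: use coordinate values to create a dummy ``normalized'' gadget that always contributes $Y+\rho$, pad the list of $G(a)$'s with copies of it, and apply the alignment gadget on both sides so that any structured alignment matches each $\overline{G}(b_j)$ to exactly one $G(a_i)$ (or to a dummy). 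The similarity measure of the combined pair will then be $N(Y+\rho) - \rho$ when some satisfying pair exists and $N(Y+\rho)$ otherwise, where $N = 2^{n/2}$.

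The heart of the construction is a recursive implementation of Savitch's reachability procedure using alignment gadgets. Assume $T = 2^t + 1$. For each level $k \in \{0, \ldots, t\}$ and each pair $u \in L_i$, $v \in L_j$ with $j-i = 2^k$, I construct $\RG^{k, u \to v}_X(a)$ and $\RG^{k, u \to v}_Y(b)$ so that $\delta$ on this pair takes a fixed minimum value $Y_k$ iff $v$ is reachable from $u$ under $a \odot b$. At $k=0$ the nodes lie in adjacent layers and reachability is determined by a single edge; depending on whether the controlling variable $x(i)$ lies in $X_1$ or $X_2$, exactly one side encodes the edge via $\1$ or $\0$ coordinate values (the other side encodes $\1$ unconditionally), so $\delta$ of the pair is minimized iff the edge is present and consistent with the assignment. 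For $k > 0$, let $\ell = (i+j)/2$; for each of the $W$ intermediate nodes $z \in L_\ell$ I form the two-leg pair $(\RG^{k-1, u \to z}_\star, \RG^{k-1, z \to v}_\star)$ recursively, compose the two legs on each side with an alignment gadget, and then combine the $W$ resulting $z$-candidates via the OR gadget of Lemma~\ref{ORG} so that the similarity measure selects the best intermediate node.

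The main obstacle is that a naive OR over $z$ lets the $X$-side and $Y$-side pick different midpoints $z \neq z'$, which would collapse the Savitch decomposition. I will resolve this by inserting \emph{index gadgets} around each $z$-branch, in direct analogy with the heavy separator symbols of the weighted-LCS construction in Section~\ref{sec:LCS}: each $z$-branch is wrapped in coordinate-value blocks whose bulk cost, after type-padding inside the outer alignment gadget, exceeds twice the contribution of any level-$(k-1)$ reachability gadget. Any alignment that pairs index $z$ with index $z' \neq z$ then pays strictly more on the index gadgets than it can possibly save on the inner reachability gadgets, forcing $z = z'$ in every optimal alignment. A second alignment gadget composes the two legs $u \to z$ and $z \to v$ and, via its structured-alignment guarantee, rules out cross-matching between the two legs.

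For the size analysis, each application of Lemma~\ref{ORG} turns a size-$cn$ alignment gadget into a size-$c^2 n^2$ OR gadget, and each recursion level applies the OR gadget to $W$ sequences whose lengths are dominated by two level-$(k-1)$ reachability gadgets plus index-gadget padding of the same order. A routine induction then gives $|\RG^{k, u \to v}_\star| = O\!\big((12 W^2 c^3)^k\big)$, so at $k = \log_2 T$ the per-assignment sequences $G(a)$ and $\overline{G}(b)$ have length $T^{\log_2(12 W^2 c^3)} = T^{O(\log W)}$. A final application of the outer alignment gadget to the $2^{n/2}$ vector gadgets, together with the doubling trick for normalized vector gadgets, yields total length $N = O(n \cdot T^{\log_2(12 W^2 c^3)} \cdot c^2 \cdot \log_2 W)$, matching the statement of the theorem. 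The delicate step throughout is choosing the index-gadget weights and type-paddings tightly enough that the $W^2$ blow-up per level is not inflated further, while still enforcing the ``same-$z$'' and ``correct leg-composition'' constraints.
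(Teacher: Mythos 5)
Your proposal follows essentially the same route as the paper's proof: reduce to the satisfying-pair formulation, build reachability gadgets $\RG^{k,u\to v}$ by recursively implementing Savitch's decomposition, combine the two legs and an index gadget inside an inner alignment gadget, select the midpoint $z$ via the OR gadget of Lemma~\ref{ORG}, normalize the final vector gadgets with a dummy whose distance is always $Y+\rho$, and finish with the doubling trick in the outer alignment gadget. The recursion, the size bookkeeping with the $(12W^2c^3)^k$ blow-up (one factor of $c$ from the inner alignment gadget, $c^2 W^2$ from the OR gadget), and the resulting $N=T^{O(\log W)}\cdot 2^{n/2}$ all match.

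One detail worth flagging: your claim that the index-gadget cost must ``exceed twice the contribution of any level-$(k-1)$ reachability gadget'' imports the heavy-separator intuition from the weighted-LCS construction of Section~\ref{sec:LCS}, but it does not apply verbatim inside the alignment-gadget framework. The framework requires every sequence fed to a single $\GA$ call to have the same \emph{type} as its siblings, so the index gadget must have exactly the same type (hence the same size) as $\RG^{k-1}$ and cannot be made ``heavier.'' What actually forces $z=z'$ in the framework is not a magnitude gap but the floor property: the cost of any alignment is at least $3\rho_{k-1}$ (each slot has $\delta\ge\rho_{k-1}$), and $\delta(\IG^{k-1}(z),\IG^{k-1}(z'))>\rho_{k-1}$ whenever $z\ne z'$, while $\delta(\IG^{k-1},\RG^{k-1})>\rho_{k-1}$ always, so the unique way to hit the lower bound $3\rho_{k-1}$ is to align identity slots with $z=z'$. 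Your overall construction still works, but the index gadget just needs to be type-matched and cost-separated by $1$, not by a constant factor; implementing the ``twice'' criterion literally would actually violate the type constraint.
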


\begin{proof}
Let $\delta$ be a similarity measure that admits alignment gadgets and coordinate values.
We will construct several other gadgets with certain properties from these primitives.
Given an instance of SAT on BPs of on $n$ variables width $W$ and length $T$ we will construct and combine our gadgets in a certain way into two sequences $x,y$ such that $\delta(x,y)$ will determine whether our instance is ``yes" or ``no".

Let $A$ and $B$ both be the set of all binary vectors of length $n/2$. 
The goal of the reduction is to find a pair $a \in A, b \in B$ such that our given branching program is satisfied by the assignment in which the first half of the variables are assigned according to $a$ while the second half are assigned according to $b$.

We will use the parameters $w := \log_2 W$, and $t= \log_2{T}$, and assume the these are integers.

\paragraph{Reachability gadgets.}
The main component in our reduction are recursive constructions of two kinds of \emph{reachability gadgets} $\RG_X(a)$ and $\RG_Y(b)$, with the following very useful property.
For every pair of vectors $a \in A, b\in B$ and pair of nodes in the branching program $u \in L_i$ from layer $i$ and $v \in L_j$ from layer $j$, such that $j-i = 2^{k-1}$ is a power of two, we have that:
\begin{itemize}
\item the $\delta$-distance between the two sequences $\RG^{k,u \to v}_X(a)$ and $\RG^{k,u \to v}_Y(b)$ is equal to a certain fixed value $\rho_k$ that depends only on $k$ if the path of the branching starting at $u$ and induced by the assignment $(a,b)$ reaches $v$, and the $\delta$-distance is greater than $\rho_k$ otherwise.
\item the total length of these gadgets can be upper bounded by $\ell_k \leq O(W^2c^3)^{k}$.
\end{itemize}

We will now show how to construct such gadgets and prove that they satisfy the above properties.
Then, we will use these gadgets in order to check whether a pair $(a,b)$ makes the accept node $u_\accept$ reachable from the start node $u_\start$.

\paragraph{Base Case: $k=1$.}
We start by defining the gadgets $\RG^{1,u \to v}$ for the case that $k=1$ and our two vertices $u,v$ are in consecutive layers $i$ and $i+1$ for some $i \in [s]$.
To do this, we consider the variable $x_j$ that the layer $i$ in our branching program is labelled with, and check whether that variable appears in the vectors in $A$ or in $B$ -- that is, we check whether $j \leq d/2$ or $j> d/2$ where $d$ is the number of variables in our branching program.

In the first case, the vector $a$ is ``responsible" for verifying the consistency of the edge $u \to v$, and we define the gadgets as follows:
we set 
$$\RG_X^{1,u \to v}(a) := EG_X(0) := \GA^{2w+2,\tau_Y}_X(\0_X,\ldots,\0_X,\1_X,\0_X)$$
 if the edge $u \to v$ is labelled with the boolean value $\eta \in \{0,1\}$ which is the same as the boolean value that $a$ assigns to the variable $x_j$, and otherwise we set 
 $$\RG_X^{1,u \to v}(a) = EG_X(1) := \GA^{2w+2,\tau_Y}_X(\1_X,\ldots,\1_X,\1_X,\0_X).$$ 
Note that inconsistency can either be caused by the edge not existing in the branching program, or by the vector assigning a different value to the corresponding variable.
On the other hand, since the vector $b$ is not ``responsible" for the edge $u \to v$, we unconditionally set 
$$\RG_Y^{1,u \to v}(b) := EG_Y(1) := \GA^{2w+2,\tau_X}_Y(\1_Y,\ldots,\1_Y,\0_Y,\1_Y).$$
The reason for defining these gadgets as a concatenation of $2w+2$ value gadgets instead of one is purely technical: we want to ensure that these gadgets have the same type as our ``index gadgets" which we will define shortly. Also, the last two  coordinates are supposed to distinguish between our ``reachability gadgets" and our ``index gadgets", so that matching an index gadget with a reachability gadget will incur a loss due to these coordinates.

In the second case, the vector $b$ is ``responsible" for verifying the consistency of the edge $u \to v$, and we define the gadgets in a symmetric way:
The gadget $\RG_X^{1,u \to v}(a)$ will be unconditionally set to $EG_X(1)$.
While we set
$$\RG_Y^{1,u \to v}(b) := EG_Y(0) :=  \GA^{2w+2,\tau_X}_Y(\0_Y,\ldots,\0_Y,\0_Y,\1_Y)$$
 if the edge $u \to v$ is labelled with the boolean value $\eta \in \{0,1\}$ which is the same as the boolean value that $b$ assigns to the variable $x_j$, and otherwise we set it to $EG_X(1)$.

Let $\tau^1_X := \type(EG_X(0)) = \type(EG_X(1)) = (\ell_X^1,s_X^1)$ and $\tau^1_Y := \type(EG_Y(0)) = \type(EG_Y(1))=(\ell_Y^1,s_Y^1)$.
Let $L_0 = (\ell_X^1 + \ell_Y^1)$ and note that $L_0 \leq c \cdot (2w+2) \cdot D$ where $D$ is some constant that upper bounds the lengths of our coordinate values, and therefore $L_0 = O(c w)$.

Let $\rho_T:= \delta(\0_X,\0_Y)=\delta(\0_X,\1_Y)=\delta(\1_X,\0_Y)$ and $\rho_F := \delta(\1_X,\1_Y)$, and by definition we have $\rho_F \geq \rho_T +1$.
By definition of alignment gadgets there is a constant $C_1 \in \mathbb{Z}$ such that $\delta(EG_X(0),EG_Y(0))=\delta(EG_X(1),EG_Y(0)) = \delta(EG_X(0),EG_Y(1)) = C_1+(2w+2) \rho_T =: \rho_1$ while $\delta(EG_X(1),EG_Y(1)) = C_1 + 2w \rho_F + 2\rho_T$ is larger than $\rho_1$.
Combining these formulas with the definitions of our gadgets proves the following claim.

\begin{claim}
For any two vectors $a\in A, b\in B$ and two nodes $u,v$ in the branching program, the $\delta$-distance between the two gadgets $\RG_X^{1,u \to v}(a)$ and $\RG_Y^{1,u \to v}(b)$ is equal to $\rho_1$ if  there is an edge from $u$ to $v$ in the branching program induced by the assignment $(a,b)$, and the $\delta$-distance is larger otherwise.
\end{claim}

\paragraph{Level $1$ index gadgets.}
For a boolean value $b \in \{0,1\}$ we let $CV_X(b),CV_Y(b)$ be $\0_X,\0_Y$ respectively if $b=0$ and $\1_X,\1_Y$ otherwise. 
Recall that $w = \log_2 W$, and for each number $z \in [W]$ we let $\bar{z}=(z_1,\ldots,z_{w}) \in \{0,1\}^w$ be the binary representation of $z$, and define the level $1$ index gadgets as follows.
$$
\IG^1_X(z) := \GA^{2w+2,\tau_Y}_X \left( CV_X(z_1), \ldots, CV_X(z_w) , CV_X(\neg z_1), \ldots, CV_X(\neg z_w) , \0_X, \1_X \right)
$$
 $$\IG^1_Y(z) := \GA^{2w+2,\tau_X}_Y \left(  CV_Y(\neg z_1), \ldots, CV_Y(\neg z_w), CV_Y(z_1), \ldots,  CV_Y(z_w) , \1_Y, \0_Y \right)
$$
Observe that by definition of our alignment gadgets, the sequence $\IG^1_X(z)$ will have the same type $\type(\IG^1_X(z)) = \tau^1_X$ for all $z\in [W]$, and similarly $\type(\IG^1_Y(z)) = \tau^1_Y$ for all $z \in [W]$, which are the same types we had in the level-$1$ reachability gadgets.
This definition allows us to prove the following property.
\begin{claim}
For any $z,z' \in [W]$, the distance $\delta(\IG^1_X(z), \IG^1_Y(z'))$ is equal to $\rho_1$ if $z=z'$ and it is larger otherwise.
\end{claim}
\begin{proof}
First, assume that $z=z'$ and consider the alignment $A=\{(1,1),\ldots,(2w+2,2w+2)\}$. In this alignment, $CV_X(\eta)$ is always aligned with $CV_Y(\neg \eta)$ and therefore the total cost is $\delta(\IG^1_X(z), \IG^1_Y(z')) \leq C_1 + (2w+2) \cdot \rho_T=\rho_1$, which is also the minimum possible distance when aligning these sequences. 
Next, assume that $z \neq z'$ and note that in the alignment $A$ as defined above there must be an aligned pair $(i,i)$ in which the two aligned sequences are $\1_X$ and $\1_Y$, which implies that $\delta(A) \geq \rho_F+(2w+1)\rho_T > 8\rho_T$. Also notice that any alignment except for $A$ will leave some indices unaligned and will therefore cost at least $ \rho_F+(2w+1)\rho_T$ as well. Therefore, in this case $\delta(\IG^1_X(z), \IG^1_Y(z')) \geq C_1 + (2w+1) \cdot \rho_T+\rho_F > \rho_1$.
\end{proof}

Finally, we show that due to the last two coordinates, the distance between an index gadget and a reachability gadget is large.

 \begin{claim}
For any $z \in [W]$, and any two vectors $a\in A, b\in B$ and two nodes $u,v$ in the branching program, we have that $\delta ( \IG^1_X(z) , \RG_Y^{1,u \to v}(b) )$ and $\delta ( \RG_X^{1,u \to v}(a) , \IG^1_Y(z) )$ are both larger than $\rho_1$.
\end{claim}

\begin{proof}
The cost of any alignment other than $A=\{(1,1),\ldots,(2w+2,2w+2)\}$ is at least $\rho_F+(2w+1)\rho_T$ since it leaves some indices unaligned.
The cost of $A$ is also at least $\rho_F+(2w+1)\rho_T$ since one of the last two pairs $((2w+1),(2w+1))$ or $((2w+2),(2w+2))$ will align $\1_X$ and $\1_Y$, by our construction.
Therefore, the $\delta$-distance is at least $C_1 + (2w+1) \cdot \rho_T+\rho_F > \rho_1$.
\end{proof}

This proves the base case for the following lemma, which is our main construction.

\begin{lemma}[Reachability gadgets]
\label{RG}
For all integers $k\geq1$, the following statement is true:
For any vectors $a \in A, b \in B$, integer $z,z' \in [W]$, nodes $u,v \in V$, we can construct gadgets:
$$\RG_X^{k-1, u\to v}(a) \in \mathcal{I}_{\tau^{k}_X}$$
$$\RG_Y^{k-1, u\to v}(b) \in \mathcal{I}_{\tau^{k}_Y}$$
$$\IG_X^{k-1}(z) \in \mathcal{I}_{\tau^{k}_X}$$
$$\IG_Y^{k-1}(z') \in  \mathcal{I}_{\tau^{k}_Y}$$
such that for some value $\rho_{k}$:
\begin{itemize}
\item $\delta ( \RG_X^{k, u\to v}(a), \RG_Y^{k, u\to v}(b) )$ is equal to $\rho_{k}$ if there is a path of length $2^{k-1}$ from $u$ to $v$ in the branching program induced by the assignment $(a,b)$, and is larger otherwise.
\item $\delta ( \IG_X^{k}(z), \IG_Y^{k}(z') )$ is equal to $\rho_{k}$ if $z = z'$ and is larger otherwise.
\item $\delta ( \IG_X^{k}(z), \RG_Y^{k, u\to v}(b) )$, and $\delta( \RG_X^{k, u\to v}(a) , \IG_Y^{k}(z') )$ are larger than $\rho_{k}$.
\item $ \tau^{k}_X = (\ell_X^{k}, s^{k}_X)$ and $ \tau^{k}_Y = (\ell_Y^{k}, s^{k}_Y)$ only depend on $k$.
\item The construction can be computed in $O(\ell_X+\ell_Y)$ time.
\item The length of these gadgets can be upper bounded by $\ell^k_X, \ell^k_Y \leq (12W^2c^3)^{k} \cdot L_0$.
\end{itemize} 
\end{lemma}

\begin{proof}
To prove the lemma, it remains to show the inductive step. Fix any $k>1$ and from now on, assume that the statement of the lemma is true for $k-1$, and we will show that it is also true for $k$.

\paragraph{The $k>1$ Case.}

Let $u=(i,i_z) \in [T]\times [W]$ be a node on layer $i$ and let $v=(j,j_z) \in [T]\times [W]$ be a node on layer $j$.
Assume that $j-i = 2^{k-1}$ and we are at level $k \in [\log_2{T}]$ of the recursive construction. 
We define $h = \frac{i+j}{2}$ to be the layer in the middle between $i$ and $j$ and note that $h-i=j-h=2^{k-2}$.
For each node $w = (h,z)$ for $z \in [W]$ in layer $h$ we will add a gadget that enables the path from $u$ to $v$ to pass through this node $w$. We do this by recursively adding the two gadgets $\RG^{k-1, u \to w}$ and $\RG^{k-1, w \to v}$.


To do this formally, we will combine an alignment gadget with an OR gadget.

We start by defining \emph{path gadgets}, which will be used to determine whether there is a path from $u$ to $v$ through a specific node $w=(h,z)$. For all vectors $a\in A, b \in B$ and all numbers $z \in [W]$, we define:
$$
\PG_X^{k, u \to v}(a,z) := \GA_X^{3,\tau^{k-1}_Y} \left( \RG_X^{k-1, u \to (h,z)}(a) , \RG_X^{k-1, (h,z) \to v}(a) , \IG^{k-1}_X(z) \right)
$$

$$
\PG_Y^{k, u \to v}(b,z) := \GA_Y^{3,\tau^{k-1}_X} \left( \RG_Y^{k-1, u \to (h,z)}(b) , \RG_Y^{k-1, (h,z) \to v}(b) , \IG^{k-1}_Y(z) \right)
$$

and note that the types $\tau_X^{(1),k} := \type (\PG_X^{k, u \to v}(a,z)) = (\ell_X^{(1),k}, s_X^{(1),k})$ and $\tau_Y^{(1),k} := \type(\PG_Y^{k, u \to v}(b,z))= (\ell_Y^{(1),k}, s_Y^{(1),k})$ depend only $k$, and that $\ell_X^{(1),k}, \ell_Y^{(1),k} \leq c \cdot 3 \cdot ( \ell_X^{k-1} + \ell_{Y}^{k-1} ) \leq 3 c \cdot 2\cdot (12W^2c^3)^{k-1}\cdot L_0$.

\begin{claim}
\label{PathG}
There is a constant $C_k^{(1)} \in \mathbb{Z}$ such that for all vectors $a \in A, b \in B$ and integers $z,z'\in [W]$ we have that 
$$ \delta(\PG_X^{k, u \to v}(a,z),  \PG_Y^{k, u \to v}(b,z') )  = C_k^{(1)} + 3 \cdot \rho_{k-1} $$
if $z=z'$ and there is a path from $u$ to $v$ through $(h,z)$ in the branching program induced by $(a,b)$,
and the $\delta$-distance is larger otherwise.
\end{claim}

\begin{proof}
First, observe that the minimal cost of any alignment $A \in \mathcal{A}_{3,3}$ is $3 \cdot \rho_{k-1}$, regardless of $a,b,z,z'$.
Next, observe that any alignment other than $A= \{ (1,1), (2,2), (3,3) \}$ will have cost at least $3 \cdot \rho_{k-1} +1$.
This follows from the definition of the cost of an alignment, because when aligning an index gadget to a reachability gadget we incur a cost of at least $\rho_{k-1}+1$, and therefore leaving any index unaligned will incur at least this cost.
Finally, we focus on the alignment $A$ and compute its cost.

We argue that $cost(A) = 3 \rho_{k-1}$ if $a,b,z,z'$ satisfy the statement in the claim, and $cost(A)$ is larger otherwise.
If $z \neq z'$ then the pair $(3,3)$ incurs a cost of $\rho_{k-1}+1$ and we are done.
Otherwise, $z=z'$, and we have that 
$$\cost(A) = \rho_{k-1} + \delta( \RG_X^{k-1, u \to (h,z)}(a) ,  \RG_Y^{k-1, u \to (h,z)}(b)) + \delta( \RG_X^{k-1, (h,z) \to v}(a) , \RG_Y^{k-1, (h,z) \to v}(b) ). $$
By our inductive hypothesis, the last two summands are equal to $\rho_{k-1}$ if $u$ can reach $(h,z)$ \emph{and} $(h,z)$ can reach $v$ in the branching program induced by $(a,b)$, while at least one of them is larger otherwise.
Therefore, $\cost(A) = 3 \cdot \rho_{k-1} $ if there is a path from $u$ to $v$ through $(h,z)$ in the branching program induced by $(a,b)$ and is larger otherwise.
By definition of alignment gadget, there is a constant $C_k^{(1)}$ such that the statement of the claim holds.
\end{proof}

We are now ready to define our reachability gadgets, using OR gadgets. Recall that by Lemma~\ref{ORG}, any measure that admits alignment gadgets of size $cn$ also admits OR gadgets of size $c^2 n^2$.

$$
\RG_X^{k, u \to v}(a) := \OR_X^{W,\tau^{(1),k}_Y} \left( \PG_X^{k, u \to v}(a,1) , \ldots, \PG_X^{k, u \to v}(a,W) \right)
$$
$$
\RG_Y^{k, u \to v}(b) := \OR_Y^{W,\tau^{(1),k}_X} \left( \PG_Y^{k, u \to v}(b,1) , \ldots, \PG_Y^{k, u \to v}(b,W) \right)
$$

Note that the types $\tau_X^{k} := \type ( \RG_X^{k, u \to v}(a) ) = (\ell_X^{k}, s_X^{k})$ and $\tau_Y^{k} := \type( \RG_Y^{k, u \to v}(b) ) = (\ell_Y^{k}, s_Y^{k})$ depend only $k$, and that 
$$\ell_X^{k}, \ell_Y^{k} \leq c^2 \cdot W^2 \cdot ( \ell_X^{(1),k} + \ell_{Y}^{(1),k} ) \leq  c^2 W^2 \cdot (2 \cdot 6c\cdot ( 12W^2 c^3)^{k-1}\cdot L_0) = (12W^2c^3)^{k} \cdot L_0.$$

\begin{claim}
\label{RGclaim}
There is a constant $\rho_k \in \mathbb{Z}$ such that for all vectors $a \in A, b \in B$  we have that 
$$ \delta(\RG_X^{k, u \to v}(a),  \RG_Y^{k, u \to v}(b) )  = \rho_{k} $$ there is a path from $u$ to $v$ in the branching program induced by $(a,b)$,
and the $\delta$-distance is larger otherwise.
\end{claim}

\begin{proof}
By definition of OR gadgets, we know that there is a constant $C^{(2)}_k \in \mathbb{Z}$ such that 
$$
\delta(\RG_X^{k, u \to v}(a),  \RG_Y^{k, u \to v}(b) )  = C^{(2)}_k + \min_{z,z' \in [W]} \delta( \PG_X^{k,u \to v}(a,z) , \PG_X^{k,u \to v}(b,z') ).
$$
By Claim~\ref{PathG}, there is a constant $C_k^{(1)}$ such that $\min_{z,z' \in [W]} \delta( \PG_X^{k,u \to v}(a,z) , \PG_X^{k,u \to v}(b,z') ) \geq  C_k^{(1)} + 3 \rho_{k-1}$ with equality if and only if for some $z \in [W]$ there is a path from $u$ to $v$ through $(h,z)$ in the branching program induced by $(a,b)$.
Therefore, for $\rho_k := C^{(2)}_k + C_k^{(1)} + 3 \rho_{k-1}$ the statement of the claim holds.

\end{proof}

To complete the proof of Lemma~\ref{RG} we need to construct index gadgets. These gadgets have straightforward functionality, but their definitions are a bit complicated because we must enforce that the type of these gadgets is exactly the same as the types of the reachability gadgets.

For all $z \in [W]$ we define 
$$\IG_X^{(1),k}(z) := \GA_X^{3,\tau^{k-1}_Y} ( \IG^{k-1}_X(z), \IG^{k-1}_X(z), \IG^{k-1}_X(z) )$$
$$\IG_Y^{(1),k}(z) := \GA_Y^{3,\tau^{k-1}_X} ( \IG^{k-1}_Y(z), \IG^{k-1}_Y(z), \IG^{k-1}_Y(z) )$$
so that the types are $\tau^{(1),k}_X$ and $\tau^{(1),k}_Y$. 
By an argument similar (but simpler) to the one in Claim~\ref{PathG}, we get that for all integers $z,z'\in [W]$ we have  
$ \delta(\IG_X^{(1),k}(z),  \IG_X^{(1),k}(z') )  = C_k^{(1)} + 3 \cdot \rho_{k-1} $
if $z=z'$ and the $\delta$-distance is larger otherwise.
Another straightforward consequence of these definitions is that for any vectors $a\in A, b\in B$ and integers $z,z' \in [W]$ we have that $\delta( \IG_X^{(1),k}(z), \PG_Y^{k, u \to v}(b,z') )$ and $\delta(  \PG_X^{k, u \to v}(a,z'), \IG_Y^{(1),k}(z) )$ are larger than $C_k^{(1)} + 3 \cdot \rho_{k-1}$.

Finally, we define
$$\IG_X^{k}(z) := \OR_X^{W,\tau^{(1),k}_Y} \left( \IG_X^{(1),k}(z), \ldots, \IG_X^{(1),k}(z) \right) $$
$$\IG_Y^{k}(z) := \OR_Y^{W,\tau^{(1),k}_X} \left( \IG_Y^{(1),k}(z), \ldots, \IG_Y^{(1),k}(z) \right)$$
so that the types are $\tau^{k}_X$ and $\tau^{k}_Y$. 
Again, by an argument similar (but simpler) to the one in Claim~\ref{RGclaim}, we have that $\delta(\IG_X^{k}(z),\IG_Y^{k}(z'))= \rho_k$ if $z = z'$ and is larger otherwise.
And, moreover, that $\delta(\IG_X^{k}(z),\RG_Y^{k}(b)),\delta(\RG_X^k(a), \IG_Y^k(z))$ are larger than $\rho_k$, for all $a,b,z$.

To complete the proof of Lemma~\ref{RG}, we remark that this construction takes linear time in its output, since the runtime is dominated by the constructions of alignment and OR gadgets.

\end{proof}

We now continue with the reduction from BP-SAT to the problem of computing the $\delta$-similarity of two sequences.

Recall that $t=\log_2 {T}$, and let the start node of the branching program be $u_\start = (1,1)$ and the only accept node be $u_\accept = (2^t,1)$. 
Intuitively, we would like to define vector gadgets of the form $VG(a) = \RG^{t,u_\start \to u_\accept}(a)$ and $VG(b) = \RG^{t,u_\start \to u_\accept}(b)$ so that $\delta(VG(a),VG(b))$ will tell us whether $(a,b)$ is a satisfying pair or not (whether it induces a path from the start node to the accepting node).
However, this does not quite work for the following technical reason:
When we combine all these $2n$ vector gadgets into two sequences $x,y$,
the score of $\delta(VG(a'),VG(b'))$ of other pairs $a',b'$ will affect the overall score, and could potentially hide the contribution of the satisfying pair.

To fix this, we ``normalize" the vector gadgets so that the distance $\delta(VG(a'),VG(b'))$ of unsatisfying pairs is fixed (and is slightly worse than the distance of satisfying pairs).
This ``normalization" trick was introduced by Backurs and Indyk in their reduction from OV to Edit-Distance.

We will need the following simple constructions of instances $S^k$ such that the distance between $S^k$ and any reachability gadget is fixed, and is slightly worse than the score of a ``good pair".

\begin{claim}
\label{ST}
For all $k \geq 1$, there are sequences $S^k,T^k \in \mathcal{I}_{\tau_X^k}$ such that for all vectors $b \in B$ and nodes $u,v$ we have that $\delta(S^k, \RG_Y^{k,u \to v}) = \rho_k + (\rho_F - \rho_T)$ and $\delta(T^k, \RG_Y^{k,u \to v}) = \rho_k$.
\end{claim}
\begin{proof}
Base case, $k=1$: we define 
$$S^1 := \GA^{2w+2,\tau_Y}_X(\0_X,\ldots,\0_X,\1_X,\1_X)  $$
$$T^1 := \GA^{2w+2,\tau_Y}_X(\0_X,\ldots,\0_X,\0_X,\0_X)  $$

which ensures that $\type(S^1)=\type(T^1)=\tau_X^1$ and that $\delta(S^1, \RG_Y^{1,u \to v}(b)) = C_1 + 7 \rho_T + \rho_F = \rho_1 +(\rho_F-\rho_T)$, while $\delta(T^1, \RG_Y^{1,u \to v}(b))= \rho_1$.

Inductive step: assume the statement holds for $k-1$ for some sequence $S^{k-1} \in {\tau_X^{k-1}}$ and we will prove it for $k$.
For all $z \in [W]$, consider the sequences:
$$S^{(1),k}(z) := \GA^{3,\tau^{k-1}_Y}_X  \left( T^{k-1}, S^{k-1} , \IG_X^{k-1}(z) \right)  \in \mathcal{I}_{\tau_X^{(1),k}} $$
$$T^{(1),k}(z) := \GA^{3,\tau^{k-1}_Y}_X \left( T^{k-1}, T^{k-1} , \IG_X^{k-1}(z) \right) \in \mathcal{I}_{\tau_X^{(1),k}} $$

and then define:

$$ S^k := \OR_{X}^{W,t_{Y}^{(1),k}} \left( S^{(1),k}(1), \ldots, S^{(1),k}(W)  \right) \in \mathcal{I}_{\tau_X^{k}} $$ 

$$ T^k := \OR_{X}^{W,t_{Y}^{(1),k}} \left( T^{(1),k}(1), \ldots, T^{(1),k}(W)  \right) \in \mathcal{I}_{\tau_X^{k}} $$

And the claim follows by simple calculations.

\end{proof}

We can now define our \emph{normalized vector gadgets}. For all vectors $a \in A, b\in B$ we define:
$$ \NVG_X(a) := \GA^{1,\tau_Y^t}_X ( S^t , \RG_X^{t,u_\start \to u_\accept}(a) ) $$
$$ \NVG_Y(b) := \GA^{2,\tau_X^t}_Y ( \RG_Y^{t,u_\start \to u_\accept}(b) ) $$
We denote the types of these gadgets by $\type(\NVG_X(a)) =: \tau_X'$ and $\type(\NVG_Y(b)) =: \tau_Y'$ and remark that they are independent of $a,b$.
Also, note that the length of these gadgets can be upper bounded by 
$$c \cdot 2 \cdot (\ell_X^t + \ell_Y^t) \leq 2c \cdot (12W^2c^3)^t \cdot L_0 = O( (12W^2c^3)^{ \log_2  T} \cdot c\log_2 W ).$$

\begin{lemma}[Vector Gadgets]
There is a constant $C \in \mathbb{Z}$ such that for any two vectors $a \in A, b \in B$ we have that:
$$ \delta ( \NVG_X(a) , \NVG_Y(b) ) = C + \rho_{t}$$
if the pair $(a,b)$ satisfies the branching program, and otherwise 
$$ \delta ( \NVG_X(a) , \NVG_Y(b) ) = C + \rho_{t} + (\rho_F - \rho_T).$$
\end{lemma}

\begin{proof}
The proof follows from the definition of alignment gadgets and from Lemma~\ref{RG} and Claim~\ref{ST}.
\end{proof}

Let $A = \{a_1,\ldots, a_{2^{n/2}}\}$ and $B = \{ b_1,\ldots, b_{2^{n/2}}\}$ be our sets of vectors. Our final sequences are defined as follows:
$$ x := \GA_X^{2^{n/2}, \tau_Y'} \left( \NVG_X(a_1),\ldots,\NVG_X(a_{2^{n/2}}), \NVG_X(a_1),\ldots,\NVG_X(a_{2^{n/2}}) \right) $$
$$ y := \GA_Y^{2\cdot 2^{n/2}, \tau_X'} \left( \NVG_Y(b_1),\ldots,\NVG_Y(b_{2^{n/2}}) \right) $$

First, we upper bound the length of these sequences:
$$ |x|, |y| \leq c \cdot 2\cdot {2^{n/2}} \cdot O((12W^2c^3)^{ \log_2  T} \cdot c \log_2 W) = O((12W^2c^3)^{ \log_2  T} \cdot {2^{n/2}}  \cdot c^2 \log_2 W) = T^{O(\log W)} \cdot {2^{n/2}} $$

Finally, the theorem follows from this claim which shows that the answer to our BP-SAT can be deduced from $\delta(x,y)$.

\begin{claim}
There is a constant $C^* \in \mathbb{Z}$ such that 
$$\delta(x,y) \leq C^* + ({2^{n/2}}-1) \cdot (C + \rho_t +(\rho_F-\rho_T) ) +  (C + \rho_t ) $$
if and only if there is a pair $a \in A, b \in B$ that satisfies the branching program.
\end{claim}
\begin{proof}
If there is no satisfying pair, then the cost of any alignment $A \in \mathcal{A}_{{2^{n/2}},2\cdot {2^{n/2}}}$ is equal to ${2^{n/2}}$ summands all of which are at least $C + \rho_{t} + (\rho_F - \rho_T)$, and therefore $\delta(x,y) \geq C^* + {2^{n/2}} \cdot (C + \rho_t +(\rho_F-\rho_T) )$.

On the other hand, if $a_i \in A, b_j \in B$ is a satisfying pair, for some $i,j \in [{2^{n/2}}]$, consider the structured alignment $A=\{ (\Delta+1,1), \ldots, (\Delta+{2^{n/2}}, {2^{n/2}}) \}$ where $\Delta := i-j$ if $i \geq j$ and $\Delta := {2^{n/2}} + i-j$ otherwise.
The cost of $A$ is equal to ${2^{n/2}}$ summands, one of which is $\delta(\NVG_X(a_i), \NVG_Y(b_j)) \leq  (C + \rho_t )$, while the others are at most $ (C + \rho_t +(\rho_F-\rho_T) )$, which implies that 
$$\delta(x,y) \leq \cost(A) \leq C^* + ({2^{n/2}}-1) \cdot (C + \rho_t +(\rho_F-\rho_T) ) +  (C + \rho_t ).$$

\end{proof}

\end{proof}


\section{Faster SAT implies Circuit Lower Bounds}
\label{sec:clb}
A direct corollary of our reduction (Theorem~\ref{reduction}) is that faster Edit Distance or LCS implies faster Branching-Program-SAT.

\begin{corollary}
If Edit Distance or LCS on two binary sequences of length $N$  can be solved in $O(N^2/t(N))$ time, then SAT on BPs on $n$ variables of width $W$ and length $T$ can be solved in $(2^n \cdot T^{O(\log{W})})/ t(2^{n/2} \cdot T^{O(\log{W})})$ time.
\end{corollary}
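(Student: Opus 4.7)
The plan is to apply Theorem~\ref{reduction} as a black box, then simply plug in the hypothesized faster algorithm and compute the resulting running time. Concretely, given a BP-SAT instance on $n$ variables with width $W$ and length $T$, first I would invoke the reduction to produce two binary sequences of length $N = 2^{n/2} \cdot T^{O(\log W)}$, at a cost of $O(N)$ time. Then I would run the assumed $O(N^2/t(N))$-time algorithm for Edit Distance (or LCS) on these sequences to decide the BP-SAT instance.

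The arithmetic is immediate: squaring the length gives
\[
N^2 \;=\; \bigl(2^{n/2} \cdot T^{O(\log W)}\bigr)^2 \;=\; 2^n \cdot T^{O(\log W)},
\]
so the total running time of the combined procedure is
\[
O(N) + O\!\left(\frac{N^2}{t(N)}\right) \;=\; \frac{2^n \cdot T^{O(\log W)}}{t\bigl(2^{n/2} \cdot T^{O(\log W)}\bigr)},
\]
exactly matching the bound claimed in the corollary. The $O(N)$ cost of constructing the reduction is absorbed into the running time of the Edit Distance / LCS call under the mild assumption that $t(N) = O(N)$, which is the only regime of interest (since a truly subquadratic algorithm corresponds to $t(N) = N^{\Omega(1)}$, and even a polylog shaving gives $t(N) = \operatorname{polylog} N \ll N$).

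There is essentially no obstacle here beyond bookkeeping: all the work was done in proving Theorem~\ref{reduction}. The only thing to be mildly careful about is that the exponent hidden in the $O(\log W)$ is the same on both sides of the statement, which is automatic because both occurrences come from the same invocation of the reduction. So the corollary follows as a one-line consequence of the reduction combined with the hypothesized algorithm.
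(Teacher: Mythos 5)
Your proof is correct and is exactly the argument the paper intends: the corollary is stated as a direct consequence of Theorem~\ref{reduction}, obtained by running the assumed algorithm on the reduction's output and noting that $N^2 = 2^n \cdot T^{O(\log W)}$. Your remark that the $O(N)$ reduction cost is dominated whenever $t(N) = O(N)$ is a sensible bit of bookkeeping the paper leaves implicit.
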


Here, we show how we get our corollaries about surprisingly strong circuit lower bounds from such faster BP-SAT algorithm.



We will apply the following generic results, which follow from the literature:

\begin{theorem}[\cite{Wi13}] \label{SATLBSconnection2} Suppose there is a satisfiability algorithm for bounded fan-in formulas of size $n^k$ running in $O(2^n/n^k)$ time, for all constants $k > 0$. Then $\NTIME[2^{O(n)}]$ is not contained in non-uniform ${\sf NC}^1$.
\end{theorem}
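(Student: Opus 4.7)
The plan is to follow Williams' algorithmic method: assume for contradiction that $\NTIME[2^{O(n)}]$ is contained in non-uniform ${\sf NC}^1$, and use the hypothesized formula-SAT speedup to obtain a non-deterministic simulation of a generic $\NTIME[2^{cn}]$ language in time $2^{cn}/n^k$ for every constant $k$, contradicting the non-deterministic time hierarchy theorem. By Spira's theorem ${\sf NC}^1$ is exactly polynomial-size bounded fan-in formulas, so the assumption is equivalent to the statement that every $L\in\NTIME[2^{O(n)}]$ admits, at each input length $n$, a formula $F_n$ of size $n^{O(1)}$ that decides $L$.

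The first key ingredient is an \emph{easy witness} step in the style of Impagliazzo--Kabanets--Wigderson, adapted to the non-uniform formula setting. Applied to a suitable universal $\NTIME[2^{O(n)}]$ language, the assumption implies that for any $L\in\NTIME[2^{cn}]$ with verifier $M$ and any accepting input $x$ of length $n$, there exists a bounded fan-in formula $D$ of size $n^{O(1)}$ on $cn$ input bits whose truth table encodes an accepting witness $w\in\{0,1\}^{2^{cn}}$ for $M(x)$. This step is exactly where the ${\sf NC}^1$ hypothesis enters, and must be pushed through so that the witness-encoding object $D$ is a formula (not just a circuit) of bounded fan-in, so that the hypothesized SAT algorithm applies downstream.

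Given this, the non-deterministic simulation of $L$ on input $x$ proceeds as follows. Guess $D$ of size $n^{O(1)}$. Construct in $n^{O(1)}$ time a formula $E$ on $N=O(n)$ inputs representing a tableau position $i$, of size $N^{k'}$ for some constant $k'$ depending on $c$ and on $|D|$, such that $E(i)=1$ iff the local consistency of the $M$-tableau on input $x$ fails at position $i$ when the relevant cells are read off by evaluating $D$ on the appropriate positions. Then $M(x)$ accepts along the witness encoded by $D$ iff $E$ is unsatisfiable. By hypothesis, for every constant $k$ the formula-SAT algorithm decides this in $O(2^{N}/N^{k})=O(2^{cn}/n^{k})$ time, so the overall non-deterministic time is $O(2^{cn}/n^{k})$ for every $k$. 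This contradicts the Cook--Seiferas non-deterministic time hierarchy theorem, which gives $\NTIME[2^{cn}/n]\subsetneq \NTIME[2^{cn}]$.

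The main obstacle is the easy-witness step in the third sentence: carrying the IKW argument through in the non-uniform, polynomially-scaled, formula-specific regime (as opposed to the familiar ${\sf P}/\text{poly}$ version that yields an $\NEXP$ lower bound) must be done carefully so that (i) the scaling matches $\NTIME[2^{O(n)}]$ rather than $\NEXP$, and (ii) the object we guess is actually a bounded fan-in formula of the right size, so that the formula-SAT hypothesis with parameter $k$ applies with an exponent large enough to defeat the time hierarchy.
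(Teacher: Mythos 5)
The paper does not actually prove this theorem: it simply states that it ``follows directly from the reference [Wi13].'' So there is no in-paper proof to compare against; what we can compare against is the proof strategy Williams actually uses, and the proof sketch the paper gives for the \emph{adjacent} Theorem~\ref{SATLBSconnection}, which uses a related but different route.

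Your high-level plan --- guess a small formula $D$ encoding a witness, build a small formula $E$ checking local consistency of the tableau relative to $D$, apply the assumed Formula-SAT speedup, contradict the nondeterministic time hierarchy --- is the right skeleton and does mirror the mechanism in Williams' work. You are also correct that the crux is the easy-witness step, and you flag it honestly. The problem is that flagging it does not prove it. The easy witness lemma in its textbook IKW form says that $\NEXP \subseteq \textsf{P}/\textup{poly}$ implies witnesses encoded by polynomial-size \emph{circuits}; your argument requires the formula-specific statement that $\NTIME[2^{O(n)}] \subseteq {\sf NC}^1$ gives witnesses encoded by polynomial-size \emph{formulas}. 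That does not follow from the circuit version by any direct conversion (a polynomial-size circuit is not a polynomial-size formula), and pushing IKW's derandomization machinery through with respect to formula hardness rather than circuit hardness is a substantive step, not a routine adaptation. As written, your proposal assumes exactly the statement it most needs to establish; it is a proof \emph{plan} with the hardest lemma left as a black box. Given that the paper itself chose to cite rather than re-derive this, that is perhaps a forgivable posture, but it should be stated that the plan does not yet constitute a proof.

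Two smaller points. First, item (i) in your ``main obstacle'' paragraph --- worrying that the scaling must match $\NTIME[2^{O(n)}]$ rather than $\NEXP$ --- is actually a non-issue: by a standard padding argument, $\NTIME[2^{O(n)}] \subseteq {\sf NC}^1$ if and only if $\NEXP \subseteq {\sf NC}^1$, so a contradiction at the NEXP scale already suffices. Your real obstacle is only item (ii), the formula-vs.-circuit distinction. Second, it is worth knowing that the paper's sketch of the companion Theorem~\ref{SATLBSconnection} deliberately avoids the easy witness lemma entirely: it assumes that ${\sf E}^{\sf NP}$ has small circuits and obtains witness circuits directly by an ${\sf E}^{\sf NP}$ binary-search argument, with no derandomization. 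That route is cleaner and works for any reasonable circuit class including formulas, but it only produces a lower bound against ${\sf E}^{\sf NP}$, a larger class than $\NTIME[2^{O(n)}]$. To get the stronger $\NTIME[2^{O(n)}]$ conclusion of the statement at hand, you really do seem to need the easy-witness machinery, and that is exactly the part your proposal leaves open.
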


Theorem~\ref{SATLBSconnection2} follows directly from the reference \cite{Wi13}. 

\begin{theorem}[\cite{Wi13,Wi14}] 
\label{SATLBSconnection} Let $n \leq S(n) \leq 2^{o(n)}$ be time constructible and monotone non-decreasing. Let ${\cal C}$ be a class of circuits. Suppose there is an SAT algorithm for $n$-input circuits which are arbitrary functions of three $O(S(n))$-size circuits from ${\cal C}$, that runs in $O(2^n/(n^{10}\cdot S(n)))$ time. Then ${\sf E}^{\sf NP}$ does not have $S(n)$-size circuits.

Alternatively, suppose there is an SAT algorithm for $n$-input circuits which are ANDs of $O(S(n))$ arbitrary functions of three $O(S(n))$-size circuits from ${\cal C}$, that runs in $O(2^n/n^{10})$ time. Then the same circuit lower bound holds.
\end{theorem}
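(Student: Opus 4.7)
My plan is to prove both parts by Williams' ``algorithm-to-lower-bound'' paradigm, via contradiction. Assume every language in ${\sf E}^{\sf NP}$ has $\mathcal{C}$-circuits of size $S(n)$. By the nondeterministic time hierarchy, fix a language $L \in \NTIME[2^n]\setminus \NTIME[o(2^n/n)]$ with verifier $M$ that on input $x$ of length $n$ guesses a witness $y \in \{0,1\}^{2^n}$ and runs in time $2^{O(n)}$. I will use the hypothesized fast SAT algorithm to design a nondeterministic algorithm for $L$ running in $o(2^n/n)$ time, producing the contradiction.

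The central tool is the ``easy witness'' lemma of Impagliazzo--Kabanets--Wigderson in the form refined by Williams: under the hypothesized circuit upper bound on ${\sf E}^{\sf NP}$, whenever $x \in L$ there is an accepting witness $y$ whose truth table is computed by some $\mathcal{C}$-circuit $D$ of size $O(S(n))$. The new algorithm on input $x$ first nondeterministically guesses such a $D$, spending $\tilde{O}(S(n))$ time, and then verifies $y = \mathrm{tt}(D)$ by checking unsatisfiability of an auxiliary $n$-input circuit $E_{x,D}$ whose inputs $i$ range over cells of the Cook--Levin tableau of $M(x,\mathrm{tt}(D))$ and which outputs $1$ exactly at a locally inconsistent cell. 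Because a local cell depends only on $O(1)$ bits of $y$ (each retrieved by a query to $D$), $O(1)$ bits of $x$ (read by a polylog-size circuit in $i$), and the fixed $M$-transition function, $E_{x,D}$ can be written as an arbitrary function of three $\mathcal{C}$-circuits of size $O(S(n))$. Invoking the assumed SAT algorithm on $E_{x,D}$ takes $O(2^n/(n^{10} S(n)))$ time; combined with the $\tilde{O}(S(n))$ guessing phase the total is $o(2^n/n)$, contradicting the hierarchy.

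For the alternative formulation, one instead encodes the tableau check as an AND of $O(S(n))$ local slice-consistency conjuncts---for instance, one per $O(2^n/S(n))$-cell block of the computation---each of which is an arbitrary function of three $\mathcal{C}$-circuits of size $O(S(n))$. The assumed $O(2^n/n^{10})$-time SAT algorithm for such AND-of-circuit instances then performs the verification directly, and the same contradiction goes through because $S(n) \le 2^{o(n)}$ keeps the guessing overhead negligible.

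The main obstacle is syntactic rather than conceptual: one must arrange the tableau verification so that $E_{x,D}$ exactly matches the shape accepted by the hypothesis (an arbitrary function of three $\mathcal{C}$-circuits of size $O(S(n))$, or an AND of such), while keeping the overall circuit size $O(S(n))$. The ``three circuits'' come from the witness look-ups needed to read the tape contents at the constantly many positions that appear in a local transition window, and the $n^{10}$ factor absorbs polylog overhead from addressing the tableau index $i$ and from the slack in the nondeterministic time hierarchy. Once this bookkeeping is in place, the contradiction with $\NTIME[o(2^n/n)]$ is immediate.
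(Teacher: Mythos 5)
Your high-level strategy matches the paper's: assume ${\sf E}^{\sf NP}$ has $S(n)$-size $\mathcal{C}$-circuits, invoke the easy-witness machinery to get a $\mathcal{C}$-circuit computing a witness, use the hypothesized SAT algorithm to verify the guess, and contradict the nondeterministic time hierarchy. However, there is a genuine gap at the step you flag as ``syntactic rather than conceptual,'' and it is in fact where the real content of the theorem lies.

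You claim that the local-consistency circuit $E_{x,D}$ ``can be written as an arbitrary function of three $\mathcal{C}$-circuits of size $O(S(n))$.'' Unpacked, you would need each witness lookup $D(a_k(i))$, for addressing functions $a_k$ coming from the Cook--Levin/Succinct3SAT encoding, to itself be an $O(S(n))$-size circuit \emph{from $\mathcal{C}$}. But $D \circ a_k$ is $D$ post-composed with a nontrivial poly-size addressing circuit, and nothing in the theorem's hypothesis gives closure of $\mathcal{C}$ under such compositions. For the classes that matter in this paper (Boolean formulas, nondeterministic branching programs, bounded-depth circuits) this closure either fails or incurs a size/depth blowup that is not covered by the $O(S(n))$ budget. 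So the object you hand to the SAT algorithm is not of the promised form, and the argument does not go through as written.

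The paper's proof sidesteps this with a different decomposition that you omit. It defines an auxiliary language $L'$ in ${\sf E}^{\sf NP}$ whose input includes the \emph{gate index} $j$ of a checking circuit $D'$ (``UNSAT iff $D$ is a valid witness''), so that $L'$ itself has an $O(S(n))$-size $\mathcal{C}$-circuit $E$ by assumption. One then \emph{guesses both $D$ and $E$} and, for each fixed gate $j$ of $D'$ with fixed child indices $j_1,j_2$, forms $F_j(i) = g\bigl(E(x,i,j_1),\,E(x,i,j_2),\,E(x,i,j)\bigr)$. Because $j,j_1,j_2$ are hardwired constants in each $F_j$, no addressing is ever composed with a $\mathcal{C}$-circuit: $E(x,\cdot,j_1)$ is just $E$ with some inputs fixed, hence a $\mathcal{C}$-circuit of size $O(S(n))$ under the trivially-satisfied restriction-closure of $\mathcal{C}$. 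Running the SAT algorithm over all $O(S(n))$ gates $j$ costs $O\!\left(S(n)\cdot \frac{2^n}{n^{10}S(n)}\right) = O(2^n/n^{10})$ (and this is also precisely where the AND-of-$F_j$ formulation gives the alternative hypothesis). After validating $E$ gate-by-gate, a final SAT call on $E(x,\cdot,j^\star)$ for the output gate $j^\star$ checks unsatisfiability of $D'$. To repair your proof you would need to add exactly this $L'$/$F_j$ layer (or explicitly add closure-under-composition as a hypothesis, which would weaken the theorem to the point of not yielding the paper's corollaries).
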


\begin{proof} (Sketch) The argument essentially follows \cite{Wi13}, with some minor modifications. First, assuming ${\sf E}^{\sf NP}$ has $S(n)$-size circuits, it follows that every verifier $V$ for every $L \in {\sf NTIME}[2^n]$ has $S(n)$-size ``witness circuits'' which can print the bits of a witness to an arbitrary instance $x$ of $L$. 

Towards a contradiction, we want to simulate every $L \in {\sf NTIME}[2^n]$ in nondeterministic time $o(2^n)$; this will contradict the nondeterministic time hierarchy~\cite{Zak83}. On an input $x$of length $n$, we begin by reducing $x$ to an instance $C_x$ of the Succinct3SAT problem, in $\text{poly}(n)$ time~\cite{Wi13}. We have the property that $x \in L$ if and only if the truth table of $C_x$ encodes a satisfiable 3CNF formula of at most $2^n \cdot n^4$ size. 

Now, in ${\sf E}^{\sf NP}$, on an input $(x,i,j)$ (where $i\leq n+4\log n$ and $j\leq n^4$ are binary integers), we can (a) construct the circuit $C_x$, (b) compute a canonical (lexicographically first) witness circuit $D$ of $S(n)$-size for the instance $C_x$, (c) construct (in $\text{poly}(S(n))$ time) an $O(S(n))$-size circuit $D'$ which is UNSAT if and only if $D$ encodes a valid witness to $C_x$, and (d) evaluate $D'(i)$, and output the value of the $j$th gate of $D$. By assumption, the language $L'$ of such triples $(x,i,j)$ has $S(n)$-size circuits. 

The next step of our simulation is to guess the circuit $D$ of the previous paragraph (constructing the circuit $D'$ using $D$ and $C_x$) as well as a circuit $E$ of size $O(S(n))$ implementing the language $L'$ of triples, for the case of $|x|=n$. To verify whether $C_x$ encodes a satisfiable formula, we construct circuits $F_j$ for all gates $j=1,\ldots,O(n^4)$ of $D'$. The circuit $F_j$ takes an input $i\leq n+4\log n$, and uses a constant number of AND/OR gates to check that the inputs $j_1,j_2$ to the $j$th gate of $D'(i)$ match the output, using the information provided by $E(x,i,j_1)$, $E(x,i,j_2)$, and $E(x,i,j)$ (which are supposed to print exactly these outputs). That is, each $F_j$ is a function of three copies of $E(x,i,j)$, each of which have size $O(S(n))$. 

Therefore each $F_j$ has size $O(S(n))$, and by assumption we can compute for each $F_j$ whether it is true on all of its $2^n \cdot n^4$ inputs, in time $O(2^n/(n^{5}\cdot S(n)))$. Over all $O(S(n))$ different $F_j$, these checks cost only $O(2^n/n^{5})$ time in total. If all checks pass (meaning that $E(x,i,j)$ reports the correct gate value for all $(x,i,j)$, we finally examine whether $D'$ is true on all of its inputs -- this can be done by calling UNSAT on $\neg D'$ or on $\neg E(x,i,j^{\star})$ where $j^{\star}$ is the index of the output gate of $D'$. The circuit $D'$ is true on all of its inputs if and only if the circuit $D$ is a witness for $C_x$; hence we \emph{accept} if and only if all of the above checks pass and $D'$ is true on all inputs.

Alternatively, suppose there is an SAT algorithm for $n$-input circuits which are ANDs of $O(S(n))$ arbitrary functions of three $O(S(n))$-size circuits from ${\cal C}$, that runs in $O(2^n/n^{10})$ time. Then we can check all of the aforementioned $F_j$ circuits simultaneously, by taking an AND of them. This AND has fan-in $O(S(n))$, and the resulting circuit has exactly the form for which we are assumed to have an efficient SAT algorithm. The rest of the argument is the same.
\end{proof}

Boolean formulas (without further restrictions) are trivially closed under the AND and OR as prescribed in Theorem~\ref{SATLBSconnection}. It is also easy to see that nondeterministic branching programs are closed under AND and OR: to simulate an OR of several nondeterministic BPs, make a new start node that connects to all the start nodes of the BPs, without reading a variable. To simulate an AND of BPs, connect the collection of branching programs in sequence, with the accept node of the previous branching program connected to the start node of the next branching program.

Consider Corollary~\ref{cor1} from the introduction:

\begin{reminder}{Corollary~\ref{cor1}}
 If Edit Distance or LCS on two binary sequences of length $N$ is in $O(N^{2-\eps})$ time for some $\eps > 0$, then the complexity class ${\sf E}^{\sf NP}$ does not have:
\begin{enumerate}
\item non-uniform $2^{o(n)}$-size Boolean formulas,
\item non-uniform $o(n)$-depth circuits of bounded fan-in, and
\item non-uniform $2^{o(n^{1/2})}$-size nondeterministic branching programs.
\end{enumerate}
Furthermore, ${\sf NTIME}[2^{O(n)}]$ is not in non-uniform ${\sf NC}$.
\end{reminder}
 
We will explain how to obtain the corollary from Theorems~\ref{SATLBSconnection},\ref{SATLBSconnection2}, and our Theorem~\ref{reduction}. Items 1 and 2 follow from combining the SAT-to-lower-bounds connection (Theorem~\ref{SATLBSconnection}) with the $O(2^{n-\eps n/2})$-time algorithm for SAT on $2^{o(n)}$-size formulas, implied by Theorem~\ref{reduction}. Item 3 follows from applying the same connection to the $O(2^{n-\eps n/2})$-time algorithm for SAT on nondeterministic branching programs of $2^{o(\sqrt{n})}$ size, implied by Theorem~\ref{reduction}. Finally, the conclusion ``${\sf NTIME}[2^{O(n)}]$ is not in non-uniform ${\sf NC}$'' follows from combining the $O(2^{n-\eps n/2})$-time algorithm for SAT on $o(n)$-depth circuits and Theorem~\ref{SATLBSconnection2}.


Even sufficiently large $\poly(\log n)$ improvements in algorithms for Edit-Distance would establish strong new lower bounds, as in Corollary~\ref{cor2}:

\begin{reminder}{Corollary~\ref{cor2}}
 If Edit Distance or LCS on two binary sequences of length $N$ can be solved in $O(n^2/\log^c n)$ time for every $c > 0$, then $\NTIME[2^{O(n)}]$ does not have non-uniform polynomial-size log-depth circuits. 
\end{reminder}

%
The corollary follows from the fact that the hypothesis implies a BP-SAT algorithm for $n^c$-size $O(1)$-width branching programs running in $O(2^n/n^c)$ time for all constants $c > 0$ (Theorem~\ref{reduction}). By standard reductions from log-depth circuits to formulas, and Barrington's reduction from Boolean formulas to branching programs, we obtain an analogous algorithm for Formula-SAT, which by the SAT-to-lower-bounds connection (Theorem~\ref{SATLBSconnection2}) implies the circuit lower bound.


\paragraph{Acknowledgement.}
We thank Arturs Backurs for comments on an earlier version.

\bibliographystyle{abbrv}
\bibliography{refBP}

\appendix

\end{document}